\documentclass[sigconf]{aamas_hacked}

\title{The Impossibility of Strategyproof Rank Aggregation}

\author{Manuel Eberl}
\affiliation{
  \institution{University of Innsbruck}
  \city{Innsbruck}
  \country{Austria}}
\email{manuel.eberl@uibk.ac.at}
\orcid{0000-0002-4263-6571}

\author{Patrick Lederer}
\affiliation{
  \institution{ILLC, University of Amsterdam}
  \city{Amsterdam}
  \country{The Netherlands}
}
\email{p.lederer@uva.nl}

\begin{abstract}
In rank aggregation, the goal is to combine multiple input rankings into a single output ranking. In this paper, we analyze rank aggregation methods, so-called social welfare functions (SWFs), with respect to strategyproofness, which requires that no agent can misreport his ranking to obtain an output ranking that is closer to his true ranking in terms of the Kemeny distance. As our main result, we show that no anonymous SWF satisfies unanimity and strategyproofness when there are at least four alternatives. This result is proven by SAT solving, a computer-aided theorem proving technique, and verified by Isabelle, a highly trustworthy interactive proof assistant. Further, we prove by hand that strategyproofness is incompatible with majority consistency, a variant of Condorcet-consistency for SWFs. Lastly, we show that all SWFs in two natural classes have a large incentive ratio and are thus highly manipulable.
\end{abstract}

    \usepackage{longtable}
    \usepackage{nicefrac}
	\usepackage[T1]{fontenc}
	\usepackage{xspace}
    \usepackage{enumerate}
	\usepackage{tikz}
 
 \usetikzlibrary{positioning,arrows,shapes,decorations,calc,fit,arrows.meta} 
	\usepackage{colortbl}
	\usepackage{booktabs}
	\usepackage{cleveref}
	\usepackage{array}
	\usepackage{xcolor}
	\usepackage{enumitem}
	\usepackage{tabularx}
	\usepackage{thm-restate}
	\usepackage{ragged2e}
    \usepackage{microtype}
    \usepackage{mathrsfs} 
    \usepackage{caption}
    \usepackage{makecell}
\usepackage{balance}

\usepackage{natbib}
\newcounter{remarkcount}
\crefname{remarkcount}{Remark}{Remarks}

\let\oldparagraph\noindentparagraph
\renewcommand{\paragraph}[1]{\noindent\oldparagraph{\emph{\textbf{#1}}}}

\crefname{remark}{Remark}{Remarks}

\newcommand{\BibTeX}{\rm B\kern-.05em{\sc i\kern-.025em b}\kern-.08em\TeX}

    \def\multiset#1#2{\ensuremath{\left(\kern-.3em\left(\genfrac{}{}{0pt}{}{#1}{#2}\right)\kern-.3em\right)}}

	\theoremstyle{definition}
    \newtheorem{definition}{Definition}
	\newtheorem{example}{Example}
	\theoremstyle{plain}
	\newtheorem{theorem}{Theorem}
	\newtheorem{lemma}{Lemma}	
    
	\newtheorem{proposition}{Proposition}

\newcolumntype{L}[1]{>{\raggedright\let\newline\\\arraybackslash\hspace{0pt}}m{#1}}
\newcolumntype{C}[1]{>{\centering\let\newline\\\arraybackslash\hspace{0pt}}m{#1}}
\newcolumntype{R}[1]{>{\raggedleft\let\newline\\\arraybackslash\hspace{0pt}}m{#1}}
\renewcommand{\arraystretch}{1.2}

\begin{document}

\maketitle

\section{Introduction}
\noindent\begin{table*}[t]
    \centering
    \setlength{\tabcolsep}{0cm}
    \resizebox{\linewidth}{!}{
    \begin{tabular}{p{0.5\textwidth}@{\hspace{1cm}}p{0.5\textwidth}}
    \toprule
    Negative results & Positive results \\\toprule
    \begin{minipage}[t]{0.5\textwidth}
        \begin{itemize}[leftmargin = *, topsep=0pt,itemsep=0pt]
            \item[$\ominus$] No majority consistent and strategyproof SWF if $m\!\geq\! 4$ (Thm. \ref{thm:PWC})
            \item[$\ominus$] No unanimous, anonymous, and strategyproof SWF if $m\!\geq\! 4$  (Thm.~\ref{thm:mainimp})
            \item[$\ominus$] The Kemeny rule, all distance scoring rules, and all positional scoring rules have an incentive ratio of at least ${m\choose 2}-m$ (Thm. \ref{thm:approx})
        \end{itemize}
    \end{minipage}
&        
\begin{minipage}[t]{0.5\textwidth}
        \begin{itemize}[leftmargin = *, topsep=0pt,itemsep=0pt]
            \item[$\oplus$] The Kemeny rule is strategyproof, unanimous, anonymous, and majority consistent if $m\!\leq\! 3$ \citep{Atha16a,ABE25a}
            \item[$\oplus$] There are non-dictatorial, unanimous, and strategyproof SWFs \citep{ABE25a}
            \item[$\oplus$] The Kemeny rule is betweenness strategyproof for all $m$ \citep{BoSp14a}
        \end{itemize}
    \end{minipage}\\
    \bottomrule
    \end{tabular}
    }
    \vspace{3pt}
    \caption{Summary of our results and comparison to related work. All negative results have been proven in this paper. While there have been impossibility results for strategyproof SWFs before our work \citep[e.g.,][]{BoSt92a,ABE25a}, \Cref{thm:mainimp} supersedes all of them.}
    \label{tab:summary}
\end{table*}
An important problem for multi-agent systems is rank aggregation: multiple input rankings need to be aggregated into a single output ranking. For instance, this task arises when a hiring committee is asked to produce a ranking of the applicants based on the preferences of the committee members \citep{KuRu20a,CaRu24a}, when aggregating the outputs of multiple ranking algorithms in ensemble learning \citep{Prat12a,LDW22a}, or when recommender systems infer an output ranking based on the preferences of multiple users \citep{ADM+10a,PSK22a}. Moreover, rank aggregation is applied in computational biology \citep{Lin10a,KLAV12a}, engineering \citep{Gaer18a}, and meta-search \citep{DKNS01a,ReSt03a}. Motivated by these numerous applications, we will investigate rank aggregation through the lens of social choice theory. In this field, rank aggregation is formalized via \emph{social welfare functions (SWF)}, which map every profile of (complete and strict) input rankings to a single output ranking. 

Specifically, we are interested in the question of whether there are SWFs that incentivize voters to report their rankings truthfully---a property that is commonly known as \emph{strategyproofness}. We believe strategy\-proofness to be important for many applications of rank aggregation: without it, voters may try to game the mechanism to obtain a better outcome from their individual perspective. For instance, if an SWF that violates strategyproofness is used to aggregate the preferences of a hiring committee, a committee member may misreport his preferences to ensure that his preferred candidates are more likely to get the job. Similarly, in recommender systems, a user may try to manipulate the output ranking so that the final recommendations are closer to his preferences. Lastly, even in ensemble learning, strategyproofness may be desirable as it offers resistance against malicious behavior from individual algorithms. 

However, while both SWFs and strategyproofness are generally well understood~\citep[see, e.g.,][]{Tayl02a,ASS02a,Barb10a}, the study of strategyproof SWFs has only recently gained attention \citep{BoSp14a,Atha16a,Atha19a,ABE25a}. 
One possible reason for this is that it is challenging to define strategyproofness for SWFs because it is unclear how voters compare different output rankings. For instance, if a voter's true ranking is $a\succ b\succ c$, does he prefer the ranking $b\succ a\succ c$ or the ranking $c\succ a\succ b$? Following the recent literature \citep[e.g.,][]{Atha16a,LPW24a,ABE25a}, we will address this issue by using the Kemeny distance to define the voters' preferences over rankings. This distance counts the number of pairs of alternatives on which two rankings disagree, and we suppose that voters prefer rankings that have a smaller Kemeny distance to their true ranking. Less formally, this means that voters want the output ranking to align as closely as possible with their true ranking. Lastly, \emph{\mbox{(Kemeny-)} strategyproofness} requires that, by misreporting their true ranking, voters cannot obtain an output ranking that is closer to their true ranking than the one that is chosen when voting honestly. 

It is known that appealing SWFs, such as the Kemeny rule, satisfy strategyproofness when there are $m\leq 3$ alternatives, but all known SWFs fail this property if $m\geq 4$ \citep{Atha16a,ABE25a}. The central question of this paper is thus whether strategyproofness allows for the design of desirable SWFs or whether an impossibility theorem similar to the Gibbard-Satterthwaite theorem \citep{Gibb73a,Satt75a} holds for rank aggregation.

\paragraph{Contribution.} As our main result, we show that no reasonable SWF satisfies strategyproofness, thereby establishing an analogue of the Gibbard-Satterthwaite theorem for SWFs. In more detail, we prove that no SWF simultaneously satisfies anonymity, unanimity, and strategyproofness when there are $m\geq 5$ alternatives and an even number of voters $n$, or when there are $m=4$ alternatives and $n$ is a multiple of $4$ (\Cref{thm:mainimp}). We note that anonymity and unanimity are very basic properties---anonymity requires that all voters are treated equally and unanimity that the output ranking ranks one alternative $x$ ahead of another alternative $y$ if all voters prefer $x$ to $y$. Hence, our result shows that no SWF that seems acceptable in practice can satisfy strategyproofness.

The proof of our main theorem is obtained via SAT solving, a computer-aided theorem proving technique. Specifically, we encode the problem of deciding whether an SWF satisfies anonymity, unanimity, and strategyproofness in a logical formula and show with the help of a computer that this formula is unsatisfiable when there are $m=5$ alternatives and $n=2$ voters, and when there are $m=4$ alternatives and $n=4$ voters. This proves two base cases for our impossibility theorem, which we then lift to our final statement by applying inductive arguments. Following standard practices, we also extract a proof of one of our base cases in a human-readable format. However, since this proof spans over 20 pages, we additionally verify our main theorem with Isabelle \citep{NPW02a}, a highly trustworthy computer program designed to verify mathematical proofs.

Further, we manually prove that no strategyproof SWF satisfies a form of Condorcet-consistency we call majority consistency (\Cref{thm:PWC}). To introduce this axiom, we define the majority relation of a profile as the binary relation that prefers an alternative $x$ to another alternative $y$ if a majority of the voters prefers $x$ to $y$. Then, majority consistency requires that, when the majority relation corresponds to a ranking, the SWF needs to choose this ranking. Hence, our theorem can be seen as a counterpart to the impossibility of strategyproof and Condorcet-consistent social choice functions (which return single alternatives instead of rankings) \citep{Gard76a,GePe17a,BBL21b}.

Lastly, we analyze the incentive ratio of several SWFs to measure how manipulable they are. Roughly, the incentive ratio of an SWF quantifies the worst-case ratio between the utility of a voter when manipulating and when voting honestly. This notion has been successfully applied for private goods settings \citep[e.g.,][]{CDZZ12a,WWZ20a,LSX24a} to show that manipulable rules still limit the manipulation gain of agents, which may suffice to disincentivize strategic behavior in practice. Unfortunately, all SWFs that we consider have a high incentive ratio. Specifically, we show that the incentive ratio of the Kemeny rule and all distance scoring rules (where voters assign scores to the rankings depending on their Kemeny distance to the ranking and the ranking with minimal total score is chosen) is roughly ${m\choose 2}$ when there are $m$ alternatives. Further, we prove that positional scoring rules have an unbounded incentive ratio.

\paragraph{Related Work.} Both SWFs and strategyproofness have been studied for decades, and we refer to the book by \citet{ASS02a} and the survey of \citet{Barb10a} for introductions to these topics. In more detail, SWFs are studied since Arrow's foundational work of social choice theory \citep{Arro48a}. To date, there is a large range of SWFs, including the Kemeny rule \citep{Keme59a,KeSn60a}, various types of scoring rules \citep[e.g.,][]{Smit73a,Youn75a,CRX09a,BBP23a,Lede23a}, and Condorcet-style rules \citep[e.g.,][]{Cope51a,Slat61a,Schu11a}. These SWFs are primarily studied with respect to consistency notions such as population consistency or in\-de\-pen\-dence axioms. This line of work resulted in influential characterizations of, e.g., the Kemeny rule \citep{YoLe78a,Youn88a,CaSt13a} or the Borda rule \citep{Youn74a,NiRu81a}. 

Similarly, strategyproofness in voting has attracted significant attention, although the results in this area are more negative. In particular, Gibbard and Satterthwaite \citep{Gibb73a,Satt75a} have shown that no reasonable deterministic single-winner voting rule is strategyproof. Motivated by this result, numerous works have aimed to circumvent this impossibility theorem, for instance by allowing randomized or set-valued outcomes \citep[e.g.,][]{Gibb77a,Kell77a,Bran17a,BSS19a} or by restricting the feasible input rankings \citep[e.g.,][]{Moul80a,ChZe21a,BLT21a}. Except for domain restrictions, these approaches have mostly led to strengthened impossibility theorems. Our paper can also be interpreted in this line of work: since rankings contain more information than a single winner, one may attempt to escape the Gibbard-Satterthwaite theorem by studying SWFs and a suitable strategyproofness notion. As our results show, this approach does not work when using Kemeny-strategyproofness.

More directly related, there are a several works that study Kemeny-strategyproofness for SWFs. To our knowledge, \citet{BoSt92a} were the first to study this condition. Specifically, these authors show that group Kemeny-strategyproofness leads to an impossibility when requiring a technical auxiliary property called weak extrema independence. Moreover, \citet{Atha16a} and \citet{ABE25a} prove that, when $m
\leq 3$, the Kemeny rule (with suitable tie-breaking) and other SWFs are Kemeny-strategyproof, but these positive results break when $m\geq 4$. Further, \citet{ABE25a} show that no anonymous SWF satisfies Kemeny-strategyproofness and a technical property called preference selection. This demanding condition requires an SWF to always return a ranking that is present in the input profile. 

Furthermore, \citet{BoSp14a} introduce an alternative strategyproofness notion for SWFs called betweenness strategyproofness. This condition requires that, by manipulating, voters cannot obtain a ranking that lies on a single-crossing sequence of rankings from the manipulator's true ranking to the output ranking chosen when voting truthfully. Betweenness strategyproofness is weaker than Kemeny-strategyproofness and \citet{BoSp14a} show that, e.g., the Kemeny rule always satisfies it. This condition was further analyzed by \citet{Sato15a}, \citet{Harl16a}, and \citet{Atha19a}, the latter two of whom use it to characterize so-called status-quo rules.
More strategyproofness notions for SWFs have been studied by \citet{Bonk18a} and \citet{DDL24a}. 

Lastly, rank aggregation can be seen as a special case of judgement aggregation, where we need to aggregate the voters' preferences over logical formulas. In this setting, 
strategyproofness was studied before \citep[e.g.,][]{DiLi07a,DiLi07b,TeEn19a} and \citet{DiLi07a} have shown an analogue to the Gibbard-Satterthwaite theorem. However, their strategyproofness notion is much stronger than Kemeny-strategyproofness, thus making their results incomparable to ours.

\section{Preliminaries}\label{sec:preliminaries}

Let $A=\{a,b,c,\dots,\}$ be a set of $m$ alternatives and $N=\{1,\dots, n\}$ be a set of $n$ voters. Every voter $i\in N$ reports a \emph{ranking}~$\succ_i$ over the alternatives to indicate his preferences. Formally, a ranking~$\succ_i$ is a transitive, antisymmetric, and complete binary relation on $A$. The set of all rankings is denoted by $\mathcal{R}$. A \emph{(ranking) profile} $R={(\succ_1,\dots, \succ_n)}$ is the collection of the rankings of all voters in~$N$, and the set of all profiles is $\mathcal{R}^N$. We will write rankings as sequences of alternatives and indicate the voter submitting a ranking directly before it. For example, ${3: abc}$ means that voter $3$ prefers $a$ to $b$ to $c$. 

The object of study of this paper are \emph{social welfare functions (SWFs)} which map every ranking profile to a single output ranking. More formally, a social welfare function is a function $f$ of the type ${\mathcal{R}^N\rightarrow \mathcal{R}}$. To clearly distinguish between input and output rankings, we will denote the former by $\succ$ and the latter by $\rhd$.

\subsection{Classes of SWFs}\label{subsec:SWFs}

We will next introduce several natural classes of SWFs. Since all of the following rules may return multiple winning rankings, we assume that such ties are broken based on an external ranking $\mathbf{>}$ over the alternatives that is lexicographically extended to rankings. Specifically, given two rankings ${\rhd_1}=x_1\dots x_m$ and ${\rhd_2}=y_1\dots y_m$, it holds that ${\rhd_1}\,\mathbf{>}\,{\rhd_2}$ if and only if there is $\ell\in \{1,\dots, m\}$ such that $x_\ell\,\mathbf{>}\, y_\ell$ and $x_i=y_i$ for all $i\in \{1,\dots, \ell-1\}$. To fully specify our SWFs, we always choose the most preferred ranking with respect to $\mathbf{>}$ that is winning for the considered SWF. We note, however, that all our results are independent of this tie-breaking convention.

\paragraph{Kemeny rule.} The Kemeny rule was first suggested by \citet{Keme59a} and is maybe the most prominent method in rank aggregation. To introduce this rule, we define the \emph{Kemeny distance} (which is also known as swap distance or Kendall-tau distance) between two rankings $\succ$ and $\rhd$ by $\Delta(\succ,\rhd)=|\{(x,y)\in A^2\colon x\succ y\land y\rhd x\}|$. Less formally, $\Delta(\succ,\rhd)$ is the number of pairs of alternatives on which $\succ$ and $\rhd$ disagree. The \emph{Kemeny rule} chooses the (lexicographically most preferred) ranking that minimizes the the total Kemeny distance to the input rankings, i.e., $f_\mathit{Kemeny}(R)=\arg\min_{\rhd\in \mathcal{R}} \sum_{i\in N} \Delta(\succ_i,\rhd)$. 

\paragraph{Distance scoring rules.} In distance scoring rules, every voter~$i$ assigns a score to every ranking $\rhd$ depending on the Kemeny distance between his input ranking $\succ_i$ and $\rhd$, and we choose the ranking with the minimal total score. More formally, these rules are defined based on \emph{distance scoring functions} $s:\{0,\dots, {m\choose 2}\}\rightarrow \mathbb{R}$, and a voter with a Kemeny distance of $x$ to a ranking assigns a score of $s(x)$ to this ranking. Throughout the paper, we will require of distance scoring functions $s$ that $s(x)> s(x-1)$ for all $x\in \{1,\dots, {m\choose 2}\}$, and $s(x)-s(x-1)\geq s(x-1)-s(x-2)$ for all $x\in \{2,\dots, {m\choose 2}\}$. These conditions formalize that rankings that are further away from a voter's ranking get a higher score and that $s$ is convex. Finally, a  \emph{distance scoring rule} $f$ is defined by a distance scoring function $s$ and chooses the (lexicographically most preferred) ranking that minimizes $\sum_{i\in N} s(\Delta(\succ_i,\rhd))$. For example, the Kemeny rule is defined by the distance scoring function ${s(x)=x}$, and the Squared Kemeny rule of \citet{LPW24a} by $s(x)=x^2$.

\paragraph{Positional scoring rules.} Another prominent class of SWFs are positional scoring rules. For these rules, the voters assign points to the alternatives depending on their positions in the input ranking, and the output ranking orders the alternatives in decreasing order of their total score. To formalize this, we define the \emph{rank} of an alternative $x$ in a ranking $\succ$ by $r(\succ,x)=1+{|\{y\in A\setminus \{x\}\colon y\succ x\}|}$. Then, positional scoring rules are defined by \emph{positional scoring functions} $p:\{1,\dots, m\}\rightarrow \mathbb{R}$ and a voter who places an alternative at rank $k$ assigns a score of $p(k)$ to this alternative. We will require of positional scoring functions that $p(1)\geq p(2)\geq \dots \geq p(m)$ and $p(1)>p(m)$, i.e., voters give more points to higher-ranked alternatives and do not assign the same score to all alternatives. Finally, an SWF $f$ is a \emph{positional scoring rule} if there is a positional scoring function $p$ such that $f$ returns for every profile $R$ the (lexicographically most preferred) ranking $\rhd$ such that $x \rhd y$ implies $\sum_{i \in N} p(r(\succ_i,x))\geq \sum_{i \in N} p(r(\succ_i,y))$. For instance, the Borda rule is induced by the positional scoring function $p(x)=m-x$.

\subsection{Strategyproofness}\label{subsec:SPdef}

The central axiom in our analysis is strategyproofness, which requires that voters cannot benefit by lying about their true ranking. Following the literature \citep[e.g.,][]{BoSt92a,Atha16a,ABE25a}, we will define this axiom by assuming that the voters' preferences over rankings are induced by the Kemeny distance $\Delta(\succ,\rhd)=|\{(x,y)\in A^2\colon x\succ y\land y\rhd x\}|$: a voter with ranking $\succ_i$ prefers a ranking $\rhd$ to another ranking $\rhd'$ if $\Delta(\succ_i,\rhd)<\Delta(\succ_i,
\rhd')$. This formalizes that voters prefer rankings that are more similar to their true ranking. Based on this assumption, we can define strategyproofness as usual by requiring that voters cannot obtain a more preferred ranking by voting strategically. 

\begin{definition}[Strategyproofness]\label{def:strategyproofness}
    An SWF $f$ is \emph{strategyproof} if $\Delta({\succ_i},f(R))\leq \Delta({\succ_i}, f(R'))$ for all voters $i\in N$ and profiles $R,R'\in\mathcal{R}^N$ such that ${\succ_j}={\succ_j'}$ for all $j\in N\setminus \{i\}$.
\end{definition}

This definition of strategyproofness is motivated by the fact that the Kemeny distance is by far the most common distance over rankings in rank aggregation \citep[e.g.,][]{DKNS01a,KLAV12a,LDW22a,CaRu24a}. In particular, many papers propose to minimize the total Kemeny distance to find good output rankings, which implicitly assumes that voters prefer rankings that have a closer Kemeny distance to their input ranking. Moreover, the Kemeny distance is also theoretically well-understood and allows for appealing characterizations \citep{Keme59a,Diac88a,CaSt18a}. Nevertheless, we acknowledge that one can define alternative strategyproofness notions by, e.g., using different distances on rankings or even approaches that are not based on any distance measure, which may lead to different results.

To further illustrate strategyproofness for SWFs, we will next discuss an example showing that the Kemeny rule is manipulable. 

\begin{example}
Consider the following profile $R$ with $m=4$ alternatives $A=\{a,b,c,d\}$ and $n=5$ voters $N=\{1,\dots, 5\}$.
\begin{center}
\begin{tabular}{cccccc}
  $R$:   & $1$: $abcd$ & $2$: $cdab$ & $3$: $dbac$ & $4$: $bcda$ & $5$: $adcb$\end{tabular}
  \end{center}
  For this profile, the Kemeny rule chooses the ranking ${\rhd}=dabc$, which has a total Kemeny distance of $\sum_{i\in N} \Delta(\succ_i,\rhd)=3+3+1+4+2=13$. By contrast, if voter $4$ misreports his true ranking by swapping $b$ and $c$, i.e., if ${\succ_4'}=cbda$, the Kemeny rule chooses the ranking ${\rhd}'=cdab$. Further, it holds that $\Delta(\succ_4, \rhd)=\Delta(bcda, dabc)=4>3=\Delta(bcda, cdab)=\Delta(\succ_4,\rhd')$. This shows that voter $4$ prefers the ranking ${\rhd}'$ to the ranking ${\rhd}$ selected for $R$, so the Kemeny rule fails strategyproofness when $m=4$ and $n=5$.
\end{example}

\subsection{Further Axioms}\label{subsec:axioms}

Additionally to strategyproofness, we will consider three further axioms, namely anonymity, unanimity, and majority consistency.

\paragraph{Anonymity.} Intuitively, anonymity requires that the identities of voters should not matter for the outcome. Formally, we say an SWF $f$ is \emph{anonymous} if $f(R)=f(\pi(R))$ for all profiles $R\in\mathcal{R}^N$ and permutations $\pi:N\rightarrow N$. Here, $R'=\pi(R)$ is the profile given by ${\succ'_{i}}={\succ_{\pi(i)}}$ for all $i\in N$. When assuming anonymity, we may interpret profiles as multisets of rankings because we only need to know how often each ranking is reported to compute the outcome. 

\paragraph{Unanimity.} Unanimity is a minimal efficiency notion which requires that if all voters unanimously prefer one alternative $x$ to another alternative $y$, then the output ranking should also rank $x$ ahead of $y$. More formally, an SWF $f$ is \emph{unanimous} if, for all profiles $R\in\mathcal{R}^N$ and alternatives $x,y\in A$, it holds for the output ranking ${\rhd}=f(R)$ that $x\rhd y$ if $x\succ_i y$ for all voters $i\in N$.  

\paragraph{Majority consistency.} One of the dominant notions in social choice theory is the Condorcet principle: if an alternative is favored to another alternative by a majority of the voters, then the former is often seen as more desirable than the latter. To formalize this idea, we define the \emph{majority relation} $\succsim_R$ of a profile $R$ by $x\succsim_R y$ if and only if $|\{i\in N\colon x\succ_i y\}|\geq |\{i\in N\colon y\succ_i x\}|$, i.e., $x\succsim_R y$ if and only if a majority of voters prefer $x$ to $y$. Then, an SWF $f$ is \emph{majority consistent} if it returns the majority relation whenever this relation is a ranking, i.e., $f(R)={\succsim_{R}}$ for all profiles $R$ such that the majority relation $\succsim_R$ is transitive and  antisymmetric. We note that the majority relation does not necessarily form a ranking, and majority consistency permits any outcome in such situations.

\section{Majority consistency and Strategyproofness}\label{sec:PWCresults}

As our first result, we will show that no majority consistent SWF is strategyproof if there are sufficiently many voters and at least four alternatives. While we include this theorem primarily to showcase a proof of an impossibility theorem based on strategyproofness, it is also one of the strongest impossibility results in rank aggregation.

\begin{theorem}\label{thm:PWC}
    No strategyproof SWF satisfies majority consistency if $m\geq 4$, $n\geq 9$, and $n\not\in\{10,12,14,16\}$.
\end{theorem}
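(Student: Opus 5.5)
We argue by contradiction: suppose $f$ is strategyproof and majority consistent for some $m\geq 4$ and admissible $n$. The plan is to find a short chain of profiles in which the majority relation is a ranking at both ends but not in the middle. Majority consistency then pins down the outcomes at the ends, and strategyproofness constrains the intermediate outcomes so tightly that no output ranking remains. We first treat $m=4$ with alternatives $\{a,b,c,d\}$. For $m>4$ we place all further alternatives at the bottom of every voter's ranking in the same fixed order. The majority relation on those alternatives is then a fixed linear order below $\{a,b,c,d\}$ whenever it is a ranking on $\{a,b,c,d\}$. Moreover, a deviation that moves an extra alternative only increases the manipulator's Kemeny distance on pairs involving it, so the four-alternative argument carries over essentially verbatim. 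I expect this reduction to need some care, but it should go through.

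The core construction is a profile $R$ whose majority relation is "almost" a ranking: all pairs are decided by small margins, and exactly one $3$-cycle appears, say $a\succ b\succ c\succ a$ with $d$ consistently placed. Let the cycle be created by a margin of exactly one voter (or two, for even $n$) on each of its three edges. Then for each edge there is a voter $i$ who, by swapping a single adjacent pair in his ranking, flips that edge. This turns the majority relation into one of three transitive rankings $\rhd_1,\rhd_2,\rhd_3$, each obtained by breaking a different edge of the cycle. By majority consistency, $f$ must output $\rhd_j$ at the corresponding deviated profile $R^j$. Strategyproofness applied in both directions, with voter $i$'s true ranking in $R$ and in $R^j$, gives inequalities of the form $\Delta(\succ_i,f(R))\leq \Delta(\succ_i,\rhd_j)$ and $\Delta(\succ_i^j,\rhd_j)\leq \Delta(\succ_i^j,f(R))$. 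Combining the three pairs should show that $f(R)$ must lie very close to each $\rhd_j$ simultaneously. That is impossible: the $\rhd_j$ disagree pairwise on the cycle, and any ranking must reverse at least one cycle edge, making it too far from some voter's ideal. The role of $d$, and the choice of which voters sit where, is what makes these distance inequalities strict. This is why four alternatives are needed: with three alternatives the analogous inequalities are satisfiable, consistent with the Kemeny rule being strategyproof there.

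The main obstacle is designing the base profile so that it works for all admissible $n$ at once. The parity of $n$ affects whether margins of one are available; for even $n$, ties make $\succsim_R$ non-antisymmetric. This explains the odd exclusion set $\{10,12,14,16\}$: presumably we need one gadget for odd $n\geq 9$ and a larger gadget for even $n\geq 18$. My approach would be to build a fixed "core" of voters, around nine, realizing the cycle with manipulable unit margins. I would then pad with pairs of voters holding reversed rankings, which leave all margins unchanged. Odd $n\geq 9$ then follows from a 9-voter core plus padding pairs. For even $n$, I would look for an 18-voter core (plus pairs) whose margins are all $2$, where a single voter's swap turns a margin of $2$ into $0$. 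This case requires a two-step chain: the first swap produces a tie, the second makes the relation a ranking. I expect finding a concrete core whose strategyproofness inequalities actually close, especially in the even case, to be the hardest step. I would search small profiles by hand, guided by keeping each manipulator's swap adjacent and on a cycle edge.
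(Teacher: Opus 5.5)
Your extension machinery matches the paper: pairs of inverse voters for odd $n>9$, duplicating a $9$-voter construction to $n=18$ and chaining strategyproofness through the intermediate profile where one clone has moved (then padding again for even $n>18$), and appending extra alternatives at the bottom for $m>4$. The gap is in the core $m=4$ step, because the gadget you describe cannot yield a contradiction.

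Suppose the majority relation of $R$ is a single cycle with edges $e_1,e_2,e_3$ on $\{a,b,c\}$, $d$ is a Condorcet winner or loser, and $\rhd_j$ is the ranking that reverses only $e_j$. Then setting $f(R)=\rhd_k$, for any fixed $k$, satisfies every inequality you propose. All $\rhd_j$ place $d$ identically, and for $j\neq k$ the rankings $\rhd_k$ and $\rhd_j$ differ exactly on the two pairs $e_j$ and $e_k$. For the voter $i_j$ who supports $e_j$, this gives $\Delta(\succ_{i_j},\rhd_k)-\Delta(\succ_{i_j},\rhd_j)\in\{0,-2\}$. After his adjacent swap, $\Delta(\succ'_{i_j},\rhd_j)-\Delta(\succ'_{i_j},\rhd_k)\in\{0,-2\}$; which value occurs depends only on whether he supports $e_k$. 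For $j=k$ the deviation does not change the outcome at all. So the step ``$f(R)$ must be close to every $\rhd_j$ simultaneously, which is impossible'' is false. With $d$ consistently placed you are effectively in the three-alternative situation, where, as you note yourself, these inequalities are satisfiable. Placing $d$ consistently removes exactly the leverage a fourth alternative could provide.

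The paper instead uses a two-stage argument.
\begin{enumerate}
\item Its central $9$-voter profile $\bar R$ has the majority tournament of $adcb$ with only the edge between the first and last alternative reversed ($b$ beats $a$), so all four alternatives lie on a cycle.
\item Five single-voter deviations lead to profiles with transitive majority relations. Four of them restore $adcb$. The fifth has voter $5$ switch from $adcb$ to $dabc$, which flips two pairs and produces the quite different ranking $dbac$.
\item The resulting one-directional strategyproofness bounds do not contradict each other. Instead they \emph{force} $f(\bar R)=adcb$.
\item The same technique forces $f(\widehat R)=dcba$, where $\widehat R$ differs from $\bar R$ only in voter $3$'s ranking ($dbca$ instead of $dbac$).
\item The contradiction is voter $3$'s manipulation between these two non-transitive profiles: $\Delta(dbac,adcb)=3>2=\Delta(dbac,dcba)$.
\end{enumerate}
Your plan is missing this pinning step, followed by a manipulation between two pinned profiles. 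A local contradiction at a single profile of the kind you describe does not exist.
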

\begin{proof}
    We will first study the case that $m=4$ and $n=9$ and later on generalize the result to larger values of $n$ and~$m$. Hence, assume for contradiction that there is a strategyproof and majority consistent SWF $f$ for $4$ alternatives and $9$ voters. We will focus on the following two profiles $\bar R$ and $\widehat R$ to derive a contradiction.
    
    \begin{center}
    \renewcommand{\arraystretch}{1}
        \begin{tabular}{llllll}
            $\bar R$ & $1$: $cdba$ & $2$: $badc$ & $3$: $dbac$ & $4$: $cbad$ & $5$: $adcb$ \\
             & $6$: $cadb$ & $7$: $dcba$ & $8$: $dabc$ & $9$: $abcd$\smallskip\\
             $\widehat R$ & $1$: $cdba$ & $2$: $badc$ & \textcolor{black}{$3$: $dbca$} & $4$: $cbad$ & $5$: $adcb$ \\
             & $6$: $cadb$ & $7$: $dcba$ & $8$: $dabc$ & $9$: $abcd$
        \end{tabular}
    \end{center}

    We will show that $f$ has to choose ${\bar\rhd}=adcb$ for $\bar R$ and ${\widehat\rhd}=dcba$ for $\widehat R$. Since the profiles $\bar R$ and $\widehat R$ only differ in the ranking of voter~$3$, $f$ is manipulable because $\Delta(\bar\succ_3, {\bar\rhd})=3>2=\Delta(\bar \succ_3, {\widehat\rhd})$, i.e., voter $3$ prefers ${\widehat\rhd}$ to ${\bar\rhd}$. It remains to show that $f$ indeed needs to choose $adcb$ and $dcba$ for $\bar R$ and $\widehat R$, respectively.

    \paragraph{Claim 1: $f(\bar R)=adcb$.}
    For proving this claim, we consider the following five profiles $\bar R^1,\dots, \bar R^5$, all of which differ from $\bar R$ in the ranking of a single voter. This ranking is highlighted in blue.
    \begin{center}
    \renewcommand{\arraystretch}{1}
        \begin{tabular}{llllll}
            $\bar R^1$ & \textcolor{blue}{$1$: $cdab$} & $2$: $badc$ & $3$: $dbac$ & $4$: $cbad$ & $5$: $adcb$ \\
             & $6$: $cadb$ & $7$: $dcba$ & $8$: $dabc$ & $9$: $abcd$\smallskip\\
            $\bar R^2$ & $1$: $cdba$ & \textcolor{blue}{$2$: $abdc$} & $3$: $dbac$ & $4$: $cbad$ & $5$: $adcb$ \\
             & $6$: $cadb$ & $7$: $dcba$ & $8$: $dabc$ & $9$: $abcd$\smallskip\\
            $\bar R^3$ & $1$: $cdba$ & $2$: $badc$ & \textcolor{blue}{$3$: $dabc$} & $4$: $cbad$ & $5$: $adcb$ \\
             & $6$: $cadb$ & $7$: $dcba$ & $8$: $dabc$ & $9$: $abcd$\smallskip\\
            $\bar R^4$ & $1$: $cdba$ & $2$: $badc$ & $3$: $dbac$ & \textcolor{blue}{$4$: $cabd$} & $5$: $adcb$ \\
             & $6$: $cadb$ & $7$: $dcba$ & $8$: $dabc$ & $9$: $abcd$\smallskip\\
            $\bar R^5$ & $1$: $cdba$ & $2$: $badc$ & $3$: $dbac$ & $4$: $cbad$ & \textcolor{blue}{$5$: $dabc$} \\
             & $6$: $cadb$ & $7$: $dcba$ & $8$: $dabc$ & $9$: $abcd$
        \end{tabular}
    \end{center}

    In all five profiles, the majority relation is transitive and antisymmetric and therefore a ranking. In more detail, in $\bar R^1$ to $\bar R^4$, all of which arise from $\bar R$ by swapping $a$ and $b$ in the ranking of a single voter, the majority relation corresponds to the ranking $adcb$. Further, in $\bar R^5$, the majority relation is given by $dbac$. Hence, majority consistency requires that $f(\bar R^1)=\dots=f(\bar R^4)=adcb$ and $f(\bar R^5)=dbac$. Consequently, strategyproofness implies the following constraints for the ranking $\bar\rhd$ chosen for $\bar R$. 
    \begin{enumerate}[label=(\arabic*), leftmargin=*]
        \item Strategyproofness from $\bar R$ to $\bar R^1$ requires that $\Delta(cdba, \bar\rhd)\leq \Delta(cdba, f(\bar R^1))= \Delta(cdba, adcb)=4$.
        \item Strategyproofness from $\bar R$ to $\bar R^2$ requires that $\Delta(badc, \bar\rhd)\leq \Delta(badc, f(\bar R^2))=\Delta(badc, adcb)=3$.
        \item Strategyproofness from $\bar R$ to $\bar R^3$ requires that $\Delta(dbac, \bar\rhd)\leq \Delta(dbac, f(\bar R^3))=\Delta(dbac, adcb)=3$.
        \item Strategyproofness from $\bar R$ to $\bar R^4$ requires that $\Delta(cbad, \bar\rhd)\leq \Delta(cbad, f(\bar R^4))=\Delta(cbad, adcb)=4$.
        \item Strategyproofness from $\bar R$ to $\bar R^5$ requires that $\Delta(adcb, \bar\rhd)\leq \Delta(adcb, f(\bar R^5))=\Delta(adcb, dbac)=3$.
    \end{enumerate}

    We claim that only the ranking $adcb$ satisfies these constraints. To prove this, we will consider several cases. First, $d$ cannot be bottom-ranked in $\bar {\rhd}$ because $\Delta(dbac, \rhd)\geq 4$ for all rankings $\rhd$ that rank $d$ last and are not equal to $bacd$. Hence, all these rankings fail Condition (3). Further, it holds that $\Delta(adcb, bacd)=4$, so the ranking $bacd$ fails Condition (5). Next, $\bar\rhd$ cannot bottom-rank $a$: every ranking $\rhd$ that places $a$ last and is not equal to $dcba$ violates Condition (5) since $\Delta(adcb, \rhd)\geq 4$, and the ranking $dcba$ fails Condition (2) since $\Delta(badc, dcba)=4$. Thirdly, we show that alternative $c$ cannot be bottom-ranked by $\bar\rhd$. For this, we observe that Condition (1) implies that ${\bar\rhd}\not\in\{abdc, badc, adbc\}$ because all these rankings have a swap distance of at least $5$ to $cdba$. Further, Condition (4) requires that ${\bar\rhd}\not\in \{dabc, dbac\}$ because these rankings have a swap distance of at least $5$ to $cbad$. Lastly, Condition (5) shows that ${\bar\rhd}\neq bdac$ because $\Delta(adcb, bdac)=4$. We now conclude that $c$ is not bottom-ranked by $\bar\rhd$. 
    Because all other options have been ruled out, $b$ must be bottom-ranked by $\bar\rhd$. This means that $\Delta(badc, \bar\rhd)\geq 3$. Moreover, this inequality needs to be tight due to Condition (2), which is only true if $f(\bar R)={\bar\rhd}=adcb$.

    \paragraph{Claim 2: $f(\widehat R)=dcba$.}
    We will next show that $f(\widehat R)=dcba$, for which we consider the following five profiles. All of these profiles only differ in the highlighted ranking from $\widehat R$.
\begin{center}
\renewcommand{\arraystretch}{1}
    \begin{tabular}{llllll}
             $\widehat R^1$ & $1$: $cdba$ & $2$: $badc$ & $3$: $dbca$ & \textcolor{blue}{$4$: $cbda$} & $5$: $adcb$ \\
             & $6$: $cadb$ & $7$: $dcba$ & $8$: $dabc$ & $9$: $abcd$\smallskip\\
             $\widehat R^2$ & $1$: $cdba$ & $2$: $badc$ & $3$: $dbca$ & $4$: $cbad$ & \textcolor{blue}{$5$: $dacb$} \\
             & $6$: $cadb$ & $7$: $dcba$ & $8$: $dabc$ & $9$: $abcd$\smallskip\\
             $\widehat R^3$ & $1$: $cdba$ & $2$: $badc$ & $3$: $dbca$ & $4$: $cbad$ & $5$: $adcb$ \\
             & \textcolor{blue}{$6$: $cdab$} & $7$: $dcba$ & $8$: $dabc$ & $9$: $abcd$\smallskip\\
             $\widehat R^4$ & $1$: $cdba$ & \textcolor{blue}{$2$: $bdac$} & $3$: $dbca$ & $4$: $cbad$ & $5$: $adcb$ \\
             & $6$: $cadb$ & $7$: $dcba$ & $8$: $dabc$ & $9$: $abcd$\smallskip\\
             $\widehat R^5$ & $1$: $cdba$ & $2$: $badc$ & $3$: $dbca$ & $4$: $cbad$ & $5$: $adcb$ \\
             & $6$: $cadb$ & \textcolor{blue}{$7$: $cdab$} & $8$: $dabc$ & $9$: $abcd$
        \end{tabular}
\end{center}
    
    The profiles $\widehat R^1,\dots, \widehat R^4$ only differ from $\widehat R$ in the fact that a voter swapped $d$ and $a$. Consequently, the majority relation of these profiles corresponds to the ranking $dcba$. By contrast, in $\widehat R^5$, the majority relation is given by the ranking $cadb$. Hence, majority consistency requires that $f(\widehat R^1)=\dots=f(\widehat R^4)=dcba$ and $f(\widehat R^5)=cadb$. In turn, strategyproofness between $\widehat R$ and our five profiles requires the following constraints for the ranking $\widehat\rhd$ chosen for $\widehat R$.
    \begin{enumerate}[label=(\arabic*), leftmargin=*]
        \item Strategyproofness from $\widehat R$ to $\widehat R^1$ requires that $\Delta(cbad, \widehat\rhd)\leq \Delta(cbad, f(\widehat R^1))=\Delta(cbad, dcba)=3$.
        \item Strategyproofness from $\widehat R$ to $\widehat R^2$ requires that $\Delta(adcb, \widehat\rhd)\leq \Delta(adcb, f(\widehat R^2))=\Delta(adcb, dcba)=3$.
        \item Strategyproofness from $\widehat R$ to $\widehat R^3$ requires that $\Delta(cadb, \widehat\rhd)\leq \Delta(cadb, f(\widehat R^3))=\Delta(cadb, dcba)=3$.
        \item Strategyproofness from $\widehat R$ to $\widehat R^4$ requires that $\Delta(badc, \widehat\rhd)\leq \Delta(badc, f(\widehat R^4))=\Delta(badc, dcba)=4$.
        \item Strategyproofness from $\widehat R$ to $\widehat R^5$ requires that $\Delta(dcba, \widehat\rhd)\leq \Delta(dcba, f(\widehat R^5))=\Delta(dcba, cadb)=3$.
    \end{enumerate}

    Analogously to the last claim, these constraints entail that ${\widehat\rhd}=dcba$. To see this, we first note that Conditions~(2) and (5) show that $d$ cannot be bottom-ranked by $\widehat\rhd$. In more detail, every ranking $\rhd$ other than $cbad$ that bottom-ranks $d$ violates Condition~(5) since $\Delta(dcba,\rhd)\geq 4$. On the other hand, the ranking $cbad$ violates Condition~(2) as $\Delta(adcb, cbad)=4$. Next, Conditions~(2) and (3) show that $c$ cannot be bottom-ranked by $\widehat\rhd$: the only ranking $\rhd$ that bottom-ranks $c$ and satisfies that $\Delta(cbad,\rhd)\leq 3$ (Condition~(2)) is $badc$, but this ranking fails Condition~(3) since $\Delta(badc,cadb)=5$. Thirdly, $\widehat\rhd$ cannot bottom-rank $b$: Condition~(4) rules out that ${\widehat\rhd}\in \{cdab,dcab, cadb\}$ since all these rankings have a distance of at least $5$ to $badc$, Condition~(1) shows that ${\widehat\rhd}\not\in \{adcb,dacb\}$ since these rankings have a Kemeny distance of at least $4$ to $cbad$, and Condition~(5) shows that ${\widehat\rhd}\neq acdb$ since $\Delta(dcba, acdb)=4$. 
    Hence, we conclude that $a$ must be bottom-ranked in $\widehat\rhd$. In turn, we infer from Condition~(2) that $\widehat\rhd$ must be $dcba$ since every other ranking that bottom-ranks $a$ satisfies that $\Delta(adcb,\rhd)\geq 4$.

    \paragraph{Extension to larger values of $m$ and $n$.} Lastly, we explain how to generalize our result to larger numbers of voters $n$ and alternatives $m$. First, to increase $m$, we can add new alternatives in the same order at the bottom of the rankings of all voters. After this extension, the majority relation is still transitive and antisymmetric for all profiles $\widehat R^i$ and $\bar R^i$ with $i\in \{1,\dots, 5\}$. Using majority consistency, the inequalities (1) to (5) thus remain intact for both cases. Finally, these inequalities still imply that we need to choose rankings for $\widehat R$ and $\bar R$ that permit a manipulation for voter $3$. 
        
    To extend our construction to larger numbers of voters, we apply two different techniques. Firstly, we can generalize our impossibility to every odd $n> 9$ by adding pairs of voters with inverse rankings. These voters cancel each other out with respect to the majority relation and therefore do not affect our analysis. Secondly, to extend our impossibility to an even number of voters, we can double all voters in our profiles. While this requires intermediate profiles $\tilde{R}$ to go from, e.g., $\bar R$ to $\bar R^1$, we can still infer the same inequalities by chaining the strategyproofness conditions. For instance, for $\bar R^1$, strategyproofness implies that $\Delta(cdba, f(\bar R))\leq \Delta(cdba, f(\tilde{R}^1))\leq \Delta(cdba, f(\bar R^1))$, where $\tilde{R}^1$ denotes the intermediate profile. Hence, our analysis remains intact after this extension. Lastly, for any even $n> 18$, we can again add pairs of voters with inverse rankings.
\end{proof}

\begin{remark}    \label{rem:Condm3}
    When $m\leq 3$, \Cref{thm:PWC} ceases to hold as the Kemeny rule (with suitable tie-breaking) is strategyproof and majority consistent in this case \citep{ABE25a,YoLe78a}. Moreover, under mild additional conditions, namely anonymity, cancellation (i.e., adding pairs of voters with inverse rankings does not affect the outcome), and a weak form of neutrality, it can be shown that strategyproofness and majority consistency require to choose a ranking that minimizes the total Kemeny distance to the input rankings when there are $3$ alternatives. We refer to \Cref{app:m3} for~details. 
\end{remark}

\begin{remark}
    We did not minimize the number of voters for \Cref{thm:PWC} as we aimed for a simple proof. However, with the help of a computer, we showed that this impossibility already holds when there are $m=4$ alternatives and $n\in \{3,4\}$ voters. Based on our inductive arguments for $n$, it thus follows that no majority consistent SWF is strategyproof if there are $m=4$ alternatives and $n\geq 3$ voters. By contrast, our inductive argument for $m$ is specific to the profiles in \Cref{thm:PWC}, so it is unclear whether the computer proof extends to more alternatives. Further, we verified the correctness of the computer proof and our human-readable proof by Isabelle/HOL, a highly trustworthy interactive theorem prover \citep{EbLe26a}. 
\end{remark}

\section{Strategyproofness and Unanimity}\label{sec:UnanimityResults}

We will now turn to our main theorem: there is no anonymous SWF that satisfies strategyproofness and unanimity if there are $m\geq 5$ alternatives and an even number of voters $n$, or when there are $m=4$ alternatives and the number of voters $n$ is a multiple of~$4$. Put differently, this result shows that every reasonable SWF is manipulable and it can thus be seen as an analog of the Gibbard-Satterthwaite theorem for rank aggregation.

\begin{theorem}\label{thm:mainimp}
    No anonymous SWF satisfies strategyproofness and unanimity if $m\geq 5$ and $n$ is even, or $m=4$ and $n$ is a multiple of $4$.
\end{theorem}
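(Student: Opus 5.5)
The proof has two stages. First we establish two base cases by computer, then we lift them to the general statement by induction on $n$ and on $m$.

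\textbf{Base cases.} We encode the existence of an anonymous, unanimous, strategyproof SWF for $m=5,n=2$ and for $m=4,n=4$ as a propositional formula. Anonymity lets us work with multisets of rankings. For every multiset profile $R$ and every ranking $\rhd$ there is a variable $x_{R,\rhd}$, with clauses saying that exactly one $\rhd$ is chosen per profile. Unanimity is imposed by forbidding every $\rhd$ that reverses a pair on which all voters agree. Strategyproofness becomes clauses $\neg x_{R,\rhd}\lor\neg x_{R',\rhd'}$ whenever $R'$ arises from $R$ by changing one ranking $\succ$ and $\Delta(\succ,\rhd)>\Delta(\succ,\rhd')$. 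To keep the instance tractable we would add symmetry breaking via neutrality-like orbits; since neutrality is not assumed, this must only be done where it is sound. Alternatively, we restrict to a small, cleverly chosen subdomain of profiles around a few central ones, which is anyway needed to extract a proof. We then ask the solver for an unsatisfiable core and extract it. The core gives a human-readable proof and can be checked in Isabelle, in the style of the case analysis in \Cref{thm:PWC} but much longer.

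\textbf{Lifting in $n$.} Given an anonymous, unanimous, strategyproof $f$ for $kn_0$ voters, define $g$ on $n_0$ voters by cloning each voter $k$ times. Anonymity and unanimity clearly transfer to $g$. For strategyproofness of $g$, a deviation by one voter of $g$ corresponds to $k$ successive single-voter deviations in $f$. Chaining the inequalities $\Delta(\succ,f(R))\le\Delta(\succ,f(\tilde R^1))\le\dots$, exactly as in the doubling argument of \Cref{thm:PWC}, shows the manipulator cannot gain. Hence the impossibility for $n_0=2$ (resp.\ $4$) extends to all even $n$ (resp.\ multiples of $4$).

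\textbf{Lifting in $m$.} This step must work directly from unanimity, because the $m$-lift in \Cref{thm:PWC} relied on the specific profiles used there. Given $f$ on $m>m_0$ alternatives, fix $m-m_0$ new alternatives in a fixed order at the bottom of every input ranking. The resulting profiles form a set $\mathcal{D}$. By unanimity, every output on $\mathcal{D}$ places the new alternatives last, in that order. So $g(R)=f(R^+)|_{A_0}$ is anonymous and unanimous on $m_0$ alternatives, and $\Delta(\succ^+,f(R^+))=\Delta(\succ,g(R))$. Deviations of a voter within $\mathcal{D}$ are deviations for $f$, so $g$ is strategyproof. Induction gives all $m\ge5$ from $m=5$ with $n$ even, and $m=4$ directly with $n$ a multiple of $4$.

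\textbf{Main obstacle.} The hard part is the SAT base case: the number of anonymous profiles (e.g.\ $\binom{123}{2}$ for $m=5$, $\binom{27}{4}$ for $m=4$) times $m!$ outputs is large. The strategy is to first solve restricted subdomains of profiles near a seed profile, growing the domain until unsatisfiability appears, and only then minimize the core for verification.
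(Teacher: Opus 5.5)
Your proposal matches the paper's proof. The paper also takes SAT-solver base cases for $(m,n)=(5,2)$ and $(4,4)$ with the same multiset/unanimity/strategyproofness encoding, and lifts them with the two constructions of \Cref{lem:inductiveArgs}: appending dummy alternatives at the bottom in a fixed order, and cloning each voter with chained single-clone deviations. The only differences are cosmetic: the paper gives the full formula directly to the solver, which decides it in under a minute, so no subdomain growing or symmetry breaking is needed; and the number of two-voter multiset profiles for $m=5$ is ${121 \choose 2}$, not ${123 \choose 2}$.
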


We note that we have shown \Cref{thm:mainimp} based on a computer-aided theorem proving technique called SAT solving. In the context of social choice theory, such computer-aided techniques have been pioneered by \citet{TaLi09a} and have since then been used to show a large number of results \citep[e.g.,][]{GeEn11a,BrGe15a,BBEG16a,End20a,BBPS21a,Pete18a,BSS19a,DDE+22a}. We refer to the survey of \citet{GePe17a} for an introduction to these techniques. In the following three sections, we outline how we apply SAT solving to obtain \Cref{thm:mainimp} (cf. \Cref{subsec:SATsolving,subsec:inductive}) and how we verified our result (cf. \Cref{subsec:verification}).

\begin{remark}
    All axioms of \Cref{thm:mainimp} are necessary for the impossibility. Specifically, constant SWFs, which always return a fixed ranking, satisfy strategyproofness and anonymity but violate unanimity. Dictatorships, which return the ranking of a fixed voter, satisfy unanimity and strategyproofness but violate anonymity. It is also not possible to weaken anonymity to non-dictatorship as \citet{ABE25a} design non-dictatorial (and non-anonymous) SWFs that are strategyproof and unanimous. Thirdly, e.g., the Kemeny rule satisfies unanimity and anonymity but violates strategyproofness. Further, we cannot significantly weaken strategyproofness as \citet{BoSp14a} show that the Kemeny rule satisfies betweenness strategyproofness, which is only slightly weaker than our Kemeny-strategyproofness.
    Finally, when $m\leq3$, \Cref{thm:mainimp} ceases to hold as the Kemeny rule is strategyproof in this case \citep{ABE25a}.
\end{remark}

\begin{remark}
    A drawback of \Cref{thm:mainimp} is that we cannot extend this result to an odd number of voters. The primary reason for this is technical: we could not find an inductive argument that generalizes our theorem from an even number of voters to an odd one. Moreover, based on our SAT approach, we showed that there are SWFs that satisfy all axioms of \Cref{thm:mainimp} when $m=4$ and $n \in \{3,5\}$, which indicates that such an argument may not exist. Similar problems are common for impossibility theorems in social choice theory \citep[e.g.,][]{BrGe15a,Pete18a,BLS22c,DDE+22a,BrLe24a}, as it is often challenging to generalize such results from a fixed number of voters $n$ to arbitrary values of $n$. We note, however, that we can extend \Cref{thm:mainimp} to odd $n$ when strengthening unanimity. Specifically, based on \Cref{thm:mainimp}, one can show to that no anonymous SWF satisfies strategyproofness and a property called near unanimity when $m\geq 5$ and $n\geq 3$ is odd. This latter condition requires that the output ranking puts $x$ ahead of $y$ whenever all but one voter prefer $x$ to $y$ \citep{Beno02a,Lede21c}. In particular, if we had an SWF that satisfies anonymity, strategyproofness, and near unanimity for odd $n\geq 3$, we could construct an SWF that satisfies anonymity, strategyproofness, and unanimity for $n-1$~voters by fixing the ranking of a single voter. We further observe that near unanimity becomes less demanding as $n$ increases, and all common SWFs satisfy this condition when $n$ is sufficiently larger than $m$.
\end{remark}

\subsection{SAT Solving}\label{subsec:SATsolving}

To show \Cref{thm:mainimp}, we rely on SAT solving, a computer-aided theorem proving technique. The central idea of this approach is that, for fixed numbers of voters $n$ and alternatives $m$, there is a large but finite number of ranking profiles and possible outcomes. For instance, when $n=2$ and $m=5$, there are $(5!)^2=14,400$ ranking profiles, for each of which one of $5!=120$ rankings must be chosen. Based on this observation, it is possible to write a large logical formula that is satisfiable if and only if there is an anonymous SWF that satisfies unanimity and strategyproofness for the given values of $n$ and $m$. We then prove two base cases of our theorem by letting a computer program, a so-called SAT solver, show that our formula is unsatisfiable when there are $n=2$ voters and $m=5$ alternatives, and when there $n=4$ voters and $m=4$ alternatives. 

In our logical formula, we follow the standard encoding of voting rules. Specifically, our formula will use variables $x_{R,\rhd}$ for all profiles $R$ and rankings $\rhd$, which will encode whether the ranking $\rhd$ is chosen for the profile $R$. Moreover, since we focus on anonymous SWFs, we will treat profiles as multisets of rankings. Formally, this means that the variable $x_{R,\rhd}$ will state whether the ranking $\rhd$ is chosen for all non-anonymous profiles that can be obtained by assigning the rankings in the multiset $R$ to the voters in $N$. This is possible as anonymity necessitates that we need to choose the same ranking for all such profiles. By representing profiles as multisets, we reduces the number of variables in our formula as a single multiset corresponds up to $n!$ non-anonymous ranking profiles. Also, due to this representation, anonymity is implicitly encoded, so we do not need to add constraints for this axiom. 

Next, we have to ensure that our variables $x_{R, \rhd}$ indeed encode an (anonymous) SWF. This necessitates us to formalize that for every profile $R$, exactly one ranking $\rhd$ is chosen. To this end, we will require for each profile $R$ that exactly one variable $x_{R, \rhd}$ is true. Moreover, to further simplify our formula, we encode unanimity in this step by only introducing variables $x_{R, \rhd}$ for profiles $R$ and rankings $\rhd$ such that $\rhd$ satisfies unanimity for $R$. To make this more formal, let $X(R)=\{(x,y)\in A^2\colon \forall i\in N\colon x\succ_iy\}$ denote the pairs of alternatives $(x,y)$ such that all voters prefer $x$ to $y$ in $R$. Then, $U(R)=\{{\rhd}\in \mathcal{R}\colon \forall (x,y)\in X(R)\colon x\rhd y\}$ is the set of rankings that satisfy unanimity for $R$. We will only introduce variables $x_{R,\rhd}$ for each profile $R$ and ranking $\rhd\in U(R)$ as rankings outside of $U(R)$ are not allowed to be chosen by unanimity. We can now enforce that our variables encode an unanimous SWF by adding the following constraints for every profile $R$, which respectively state that at least and at most one ranking in $U(R)$ must be chosen for $R$.
\begin{align*}
    &\bigvee_{{\rhd}\in U(R)} x_{R,\rhd}\qquad \text{and}
    \qquad
\bigwedge_{\rhd,\rhd'\in U(R)\colon {\rhd}\neq{\rhd'}} (\neg x_{R,\rhd}\lor \neq x_{R,\rhd'})
\end{align*}

Lastly, we need to encode strategyproofness. We recall here that this axiom requires that, for every profile $R$ and voter $i$, it is not possible for the voter to deviate such that the ranking chosen when lying has a smaller Kemeny distance to his truthful ranking than the one chosen when reporting his truthful ranking. Put differently, if $R$ and $R'$ only differ in the ranking of voter $i$, we cannot choose rankings $\rhd$ and $\rhd'$ for these profiles such that $\Delta(\rhd,\succ_i)>\Delta(\rhd',\succ_i)$. To make this more formal, we define by $D(R,\succ)$ the set of profiles that can be derived from $R$ by letting a voter with ranking $\succ$ deviate to an arbitrary other ranking. We note that $D(R,\succ)=\emptyset$ if no voter in $R$ reports $\succ$. Further, given two rankings $\succ$ and $\rhd$, we let $B(\succ,\rhd)=\{{\rhd'}\in\mathcal{R}\colon \Delta(\succ,\rhd')<\Delta(\succ,\rhd)\}$ denote the set of rankings that have a smaller Kemeny distance to $\succ$ than $\rhd$. Based on this notation, strategyproofness can be formalized via implications: if we choose a ranking $\rhd$ for $R$, we cannot choose a ranking $\rhd'\in B(\succ,\rhd)\cap U(R')$ for all profiles $R'\in D(R,\succ)$ and rankings ${\succ}\in\mathcal{R}$. Using our variables $x_{R,\rhd}$, this results in the following constraints. 
\begin{align*}
\bigwedge_{R\in\mathcal{R}^N}\,\bigwedge_{{\succ}\in\mathcal{R}}\,\bigwedge_{R'\in D(R,\succ)}\,\bigwedge_{{\rhd}\in U(R)}\,\bigwedge_{{\rhd'}\in B(\succ,
\rhd)\cap U(R')} (\neg x_{R,\rhd}\lor \neg x_{R',\rhd'})
\end{align*}

We can now write a computer program that generates this logical formula for given values of $m$ and $n$. For instance, for $m=5$ and $n=2$, our program produces a formula with $227,880$ variables and $59,445,060$ clauses. We then hand this formula for both $m=5$ and $n=2$, and $m=n=4$ to a SAT solver (e.g., Glucose \citep{AuSi18a} or Cadical \citep{BFF+20a}), which proves both formulas unsatisfiable in less than a minute. We hence derive the following result. 

\begin{proposition}\label{prop:basecase}
    No anonymous SWF satisfies both strategyproofness and unanimity if $m= 5$ and $n=2$ or $m=4$ and $n=4$.
\end{proposition}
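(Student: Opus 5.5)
The plan is to establish \Cref{prop:basecase} by a finite, computer-checked unsatisfiability argument based on the encoding of \Cref{subsec:SATsolving}. First, for each of the two parameter pairs $(m,n)\in\{(5,2),(4,4)\}$, I would argue that the propositional formula consisting of the exactly-one constraints over $U(R)$ and the strategyproofness clauses $\neg x_{R,\rhd}\lor\neg x_{R',\rhd'}$ is satisfiable if and only if an anonymous, unanimous, strategyproof SWF exists. For the forward direction, any such SWF $f$ yields the satisfying assignment $x_{R,\rhd}=1$ iff $f(R)=\rhd$, where profiles are read as multisets; this is well defined by anonymity, and $\rhd\in U(R)$ holds by unanimity. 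For the backward direction, a satisfying assignment defines $f$ on multisets and hence on all profiles. This $f$ is anonymous by construction and unanimous because only variables for rankings in $U(R)$ exist. It is strategyproof because every potential manipulation from $R$ to $R'\in D(R,\succ_i)$ is excluded by the corresponding binary clause. Both directions are routine to check.

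The core step is then to hand both formulas to a SAT solver and obtain \textsc{unsat}. A bare solver verdict is not a trustworthy proof, so I would certify it in one of two ways. The first option is to have the solver emit a DRAT/LRAT refutation and check it with an independent verified checker. The second is to extract a minimal unsatisfiable core, meaning a small set of profiles and deviations that already forces a contradiction. From such a core one can produce a human-readable proof in the style of \Cref{thm:PWC}: the exactly-one constraints of each profile in the core are chained through strategyproofness inequalities until some profile has no admissible ranking left.

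The main obstacle is the size and trustworthiness of the argument. The $(5,2)$ instance has roughly $6\cdot 10^7$ clauses, so the certificate is huge, and even a minimized core yields a case analysis spanning many pages. To reduce the size, I would first try symmetry breaking. Neutrality is not assumed, but the formula is still invariant under permuting alternatives simultaneously in profiles and outputs, and it is also invariant under reversal, so one can restrict attention to orbit representatives and shrink the core. I would then replay the extracted core in Isabelle, so that the final statement rests on a checked derivation rather than on the unverified encoder and solver.
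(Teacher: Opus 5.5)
Your plan matches the paper's approach: the same multiset encoding with unanimity built into the variable set and binary strategyproofness clauses, an \textsc{unsat} verdict from a SAT solver for $(m,n)=(5,2)$ and $(4,4)$, and certification via a MUS-derived human-readable proof for $(5,2)$ together with an Isabelle formalization. The one point needing care is your optional symmetry breaking: since $f$ need not be neutral, invariance under permuting (or reversing) alternatives may only be used to discard symmetric \emph{solutions} (as the paper's appendix does when it shows some $\pi$ gives $f(\pi(R^*))\in\{\pi(abcde),\pi(acbed)\}$ and then assumes $f(R^*)=abcde$ without loss of generality), not to identify profiles within an orbit, which would tacitly impose neutrality.
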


\subsection{Inductive Arguments}\label{subsec:inductive}

\Cref{prop:basecase} shows that no anonymous SWF satisfies strategyproofness and unanimity for only two cases, namely when there are $m=5$ alternatives and $n=2$ voters or $m=4$ alternatives and $n=4$ voters. By contrast, \Cref{thm:mainimp} claims that the impossibility holds for a large range of combinations of $m$ and $n$. To close this gap, we will next present a lemma that generalizes our impossibility theorem from fixed numbers of voters and alternatives to a large range. In combination with \Cref{prop:basecase}, this lemma proves \Cref{thm:mainimp}. The full proof of \Cref{lem:inductiveArgs} can be found in \Cref{app:inductiveArgs}.

\begin{restatable}{lemma}{inductiveArgs}\label{lem:inductiveArgs}
    Assume there is no anonymous SWF that satisfies strategyproofness and unanimity for $m$ alternatives and $n$ voters. The following claims hold:
    \begin{enumerate}[label=(\arabic*), leftmargin = *]
        \item For every $m'>m$, there is no anonymous SWF that satisfies strategyproofness and unanimity for $m'$ alternatives and $n$ voters.
        \item For every $\ell\in\mathbb{N}$, there is no anonymous SWF that satisfies strategyproofness and unanimity for $m$ alternatives and $\ell n$ voters. 
    \end{enumerate}
\end{restatable}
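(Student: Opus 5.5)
Both claims go by contraposition: from an anonymous, strategyproof, unanimous SWF $f'$ for the larger instance we build one, $f$, for the smaller instance and check that all three axioms carry over.

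\emph{Claim (1).} Let $f'$ be defined for $m'>m$ alternatives $A'\supseteq A$ and $n$ voters. Fix an order $\sigma$ of the new alternatives $A'\setminus A$. Given a profile $R$ on $A$, form $R^{+}$ by appending $\sigma$ to the bottom of every voter's ranking. Unanimity of $f'$ places every $x\in A$ above every new alternative and orders the new alternatives by $\sigma$. So $f'(R^{+})$ is a ranking of $A$ followed by $\sigma$. Define $f(R)$ as its restriction to $A$.
\begin{itemize}
\item Anonymity holds because the map $R\mapsto R^{+}$ commutes with permutations of the voters.
\item Unanimity holds because every unanimous pair in $R$ is also unanimous in $R^{+}$.
\item Strategyproofness holds because a deviation of voter $i$ in $R$ becomes a deviation in $R^{+}$. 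Moreover, for extended rankings $\succ^{+},\rhd^{+}$ we have $\Delta(\succ^{+},\rhd^{+})=\Delta(\succ,\rhd)$, since the only disagreeing pairs lie inside $A$. The strategyproofness inequality for $f'$ therefore transfers verbatim.
\end{itemize}
This step is routine.

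\emph{Claim (2).} Let $f'$ be defined for $m$ alternatives and $\ell n$ voters. Define $f(R)=f'(\ell R)$, where $\ell R$ replaces each voter by $\ell$ clones reporting the same ranking. Anonymity and unanimity are immediate, since the unanimous pairs of $\ell R$ are exactly those of $R$.

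For strategyproofness, suppose voter $i$ with true ranking $\succ$ deviates to $\succ'$. We move his $\ell$ clones one at a time through the profiles $R_0=\ell R, R_1,\dots,R_\ell=\ell R'$, where $R_k$ has $k$ clones reporting $\succ'$. The key observation is that in every $R_k$ with $k<\ell$, at least one clone still reports the true ranking $\succ$. We apply strategyproofness of $f'$ to that truthful clone's switch from $R_k$ to $R_{k+1}$. After relabeling via anonymity, this gives $\Delta(\succ,f'(R_k))\le\Delta(\succ,f'(R_{k+1}))$. Chaining these inequalities yields
\[
\Delta(\succ,f(R))=\Delta(\succ,f'(R_0))\le\Delta(\succ,f'(R_\ell))=\Delta(\succ,f(R')).
\]
This is the same chaining trick used in the proof of \Cref{thm:PWC}.

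The main point to get right is the anonymity bookkeeping in this chain. The voter who deviates at step $k$ must be one whose true preference is $\succ$. That holds because the clones not yet switched still report $\succ$, so each intermediate step is a genuine truthful-to-lie deviation covered by \Cref{def:strategyproofness}. Nothing deeper is needed.

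Having derived $f$ in both cases, the assumed impossibility for $m$ alternatives and $n$ voters is contradicted, which proves both claims.
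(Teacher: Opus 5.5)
Your proof is correct and follows essentially the same route as the paper. In Claim (1) you append the new alternatives in a fixed order at the bottom of every ranking and restrict $f'$'s output to $A$, using unanimity to see that Kemeny distances are preserved; in Claim (2) you clone each voter $\ell$ times and chain the strategyproofness inequalities as the clones switch one at a time (your appeal to anonymity inside that chain is unnecessary, since each step is already a single-voter deviation from the true ranking $\succ$).
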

\begin{proof}[Proof Sketch]
    For both claims, we will show the contrapositive: we assume that there is an anonymous, unanimous, and strategyproof SWF for the larger numbers of alternatives and voters, and show that this implies that there is also an SWF that satisfies our properties for $m$ alternatives and $n$ voters. In more detail, to show Claim (1), we suppose that there is an SWF $f$ for $m'>m$ alternatives and $n$ voters that satisfies our axioms. We then define an SWF $g$ for $m$ alternatives and $n$ voters as follows: given a profile for these parameters, we add $m'-m$ dummy alternatives in the same order at the bottom of the rankings of all voters, apply $f$ to compute a ranking on these $m'$ alternatives, and then delete the $m'-m$ dummy alternatives from this ranking to infer the final output ranking. Since $f$ is unanimous, the dummy alternatives must appear at the bottom of the intermediate output ranking. Based on this insight, it can be shown that $g$ inherits strategyproofness, anonymity, and unanimity from $f$, which contradicts that no SWF satisfies these axioms for $m$ alternatives and $n$ voters. 
    
    For Claim (2), we assume that there is a unanimous, anonymous, and strategyproof SWF $f$ for $m$ alternatives and $\ell n$ voters. We then define an SWF $g$ for $m$ alternatives and $n$ voters as follows: we clone each voter's ranking $\ell$ times and then apply $f$ to compute the output ranking. It can be verified that $g$ is anonymous, unanimous, and strategyproof as $f$ satisfies these conditions, which again contradicts the assumption of this lemma. 
\end{proof}

\subsection{Verification}\label{subsec:verification}

Because \Cref{prop:basecase} has been shown by SAT solving, it is not immediately clear how to verify the correctness of this result. We have chosen a threefold approach to address this issue. 

Firstly, following prior works \citep[e.g.,][]{BrGe15a,BBEG16a,Pete18a,BSS19a}, we provide in \Cref{app:mainproof} a proof of \Cref{prop:basecase} for the case that $m=5$ and $n=2$ in a human-readable format. This proof was obtained by analyzing minimal unsatisfiable subsets (MUSes) of the original formula, i.e., inclusion-minimal subsets of the formula that are unsatisfiable. More intuitively, such MUSes can be seen as the reason why a formula is unsatisfiable and they tend to be much smaller than the original formula. For instance, for their Theorem 2, \citet{BSS19a} ended up with a MUS that only reasons about 13 profiles, which allows to give a compact human-readable proof. Unfortunately, our MUSes, which we obtained by using the program MUSer2 \citep{BeMa12a}, are much bigger: even after several optimizations, the smallest MUS we found requires roughly 200 profiles and uses intricate reasoning. Although we were able to extract a human-readable proof by using a custom computer program that translates MUSes into a readable format, the resulting proof spans over 20 pages and requires the verification of more than 3000 strategyproofness applications. As such, the proof allows readers to build confidence in the correctness of our SAT-based approach by inspecting some intermediate steps, but manually verifying the full argument would be extremely tedious. We moreover note that we cannot provide a human-readable proof for the case that $m=n=4$ as the computer reasons about thousands of profiles in this case. 

Because of these issues, we offer two further means of verification. Firstly, we have published our code for creating the SAT formula described in \Cref{subsec:SATsolving} 
\citep[][]{EbLe26b}. This enables researchers to directly check that our code correctly constructs the desired formula, which means we only need to trust the correctness of the SAT solvers. We note also that our implementation is rather standard, so we expect researchers familiar with this type of work to be able to verify our code in less than a day. 

Lastly, following more recent works \citep[e.g.,][]{BBEG16a,BSS19a,DDE+22a}, we have verified \Cref{thm:mainimp}---including both base cases and the inductive arguments---with the interactive theorem prover Isabelle/HOL \citep{NPW02a}. Such interactive theorem provers are computer programs designed to verify the correctness of mathematical proofs and have a high degree of trustworthiness. Specifically, Isabelle/HOL offers a rich mathematical logic which makes it simple to formalize our setting. Thus, our Isabelle verification directly derives \Cref{prop:basecase} as well as \Cref{lem:inductiveArgs} from our axioms. This formal verification releases us from the need to check the intermediate steps because Isabelle verifies the correctness of each deduction step based on a small and highly trustworthy set of logical operations. Consequently, to trust our results, one only needs to trust the implementation of our axioms in Isabelle. We further note that experts in verification see such formal proofs as the ``gold standard'' for increasing the trustworthiness of mathematical results \citep{HAB+17a}. Our Isabelle proof development is available in the \emph{Archive of Formal Proofs} \citep{EbLe26a}. 

\section{Approximate Strategyproofness}\label{subsec:ApproxResults}

As our last contribution, we analyze how manipulable particular SWFs are. Indeed, while \Cref{thm:mainimp} shows that all reasonable SWFs must be manipulable, it may be the case that voters can only gain a small amount of utility by lying about their true ranking. In practice, this may be enough to disincentivize voters from manipulating as casting strategic votes requires effort to, e.g., learn the rankings of the other voters and to compute a successful manipulation.

Unfortunately, all the SWFs discussed in \Cref{subsec:SWFs} are severely manipulable. To formalize this, we will make several changes in our assumptions. Firstly, we will consider a variable electorate setting and use as many voters as necessary for our counterexamples. Secondly, we will analyze the potential utility gain rather than the decrease in cost. To this end, we define the utility of a voter $i$ with ranking $\succ_i$ for another ranking $\rhd$ by $u(\succ_i,\rhd)={m\choose 2}-\Delta(\succ_i,\rhd)=|\{(x,y)\in A^2\colon x\succ_i y\land x\rhd y\}|$. We note that $u$ is is minimal if $\succ$ and $\rhd$ are inverse to each other (yielding $u(\succ,\rhd)=0$) and maximal if ${\succ}={\rhd}$ (yielding $u(\succ,\rhd)={m\choose 2}$). When using this utility function, strategyproofness demands that voters cannot increase their utility by lying about their ranking, which is equivalent to \Cref{def:strategyproofness}. 

Finally, we will use the incentive ratio to measure the manipulability of SWFs. This ratio quantifies the worst-case ratio between a voters' utility when lying and when voting truthfully. Formally, the \emph{incentive ratio} of an SWF $f$ for $m$ alternatives is defined by \[\gamma_m(f)=\sup_{R,i,\succ_i'} \frac{u_i(f(\succ_i', R_{-i}))}{u_i(f(R))},\] where we only consider profiles on $m$ alternatives and $(\succ_i', R_{-i})$ denotes the profile obtained from $R$ by letting voter $i$ deviate to $\succ_i'$. Since $u_i(f(R))$ can be $0$, we use the conventions that $\frac{x}{0}=\infty$ for all $x>0$ and $\frac{0}{0}=1$. We observe that $\gamma_m(f)=1$ if $f$ is strategyproof for $m$ alternatives and $\gamma_m(f)>1$ otherwise. Moreover, an SWF has an incentive ratio of $\gamma_m(f)=\infty$ if a voter with utility $0$ can manipulate and $\gamma_m(f)\leq{m\choose 2}$ otherwise. The incentive ratio has been successfully used for private good settings to show that several manipulable mechanisms are still close to strategyproof \citep[][]{CDZZ12a,WWZ20a,LSX24a}. 

We will next prove that all SWFs discussed in \Cref{subsec:SWFs} have a large incentive ratio, which demonstrates that these SWFs are severely manipulable. Specifically, we will show that, while distance scoring rules (including the Kemeny rule) cannot be manipulated by voters with utility $0$, voters with a utility of $1$ can gain almost their maximal utility when lying. Moreover, for positional scoring rules, we show that voters with utility $0$ can manipulate, which means that their incentive ratio is unbounded. 

\begin{theorem}\label{thm:approx}
    The following statements are true:
    \begin{enumerate}[leftmargin = *, label = (\arabic*)]
        \item For all $m\geq 4$, the incentive ratio of the Kemeny rule $f_{\mathit{Kemeny}}$ satisfies ${m\choose 2}-m\leq \gamma_m(f_{\mathit{Kemeny}})\leq {m\choose 2}$        
        \item For all $m\geq 3$, the incentive ratio of every distance scoring rule $f_{\mathit{dist}}$ other than $f_\mathit{Kemeny}$ satisfies ${m\choose 2}-1\leq \gamma_m(f_{\mathit{dist}})\leq {m\choose 2}$.
        \item For all $m\geq 3$, the incentive ratio of every positional scoring rule $f_{\mathit{pos}}$ is $\gamma_m(f_{\mathit{pos}})=\infty$.
    \end{enumerate}
\end{theorem}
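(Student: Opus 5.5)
The three statements share two ingredients: general upper bounds that follow from the definition of $\gamma_m$, and explicit manipulation profiles for the lower bounds. I would establish the upper bounds first, then build the profiles, using a variable electorate throughout.

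\emph{Upper bounds.} For (1) and (2) it suffices to show that a voter with $u_i(f(R))=0$ can never manipulate a distance scoring rule. The ratio is then either $1$ or at most $\binom m2/1$, since utility is at most $\binom m2$ and is at least $1$ whenever it is nonzero. Suppose $f(R)$ is the inverse $\overline{\succ_i}$ of voter $i$'s ranking and $i$ deviates to $\succ_i'$. For every ranking $\rhd$, the score contributed by voter $i$ changes from $s(\Delta(\succ_i,\rhd))$ to $s(\Delta(\succ_i',\rhd))$. Since $\Delta(\succ_i,\overline{\succ_i})=\binom m2$ is maximal and $s$ is increasing, voter $i$'s contribution to $\overline{\succ_i}$ can only drop, and it drops by at least as much as the contribution to any other $\rhd$ drops. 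I would try to show this by arguing that $s(\binom m2)-s(\Delta(\succ_i',\overline{\succ_i}))\geq s(\Delta(\succ_i,\rhd))-s(\Delta(\succ_i',\rhd))$. This uses the triangle inequality $\Delta(\succ_i,\rhd)-\Delta(\succ_i',\rhd)\le \Delta(\succ_i,\succ_i')=\binom m2-\Delta(\succ_i',\overline{\succ_i})$ together with convexity and monotonicity of $s$, since the right-hand increment sits at arguments no larger than $\binom m2$. Hence $\overline{\succ_i}$ still minimizes the total score. The tie-breaking needs a separate check: in the case of equality, the same winner remains lexicographically first among the optima, or I would argue that ties only shrink. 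For (3), no upper bound is needed.

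\emph{Lower bound for positional scoring rules.} I would construct a profile where voter $i$ has ranking $a_1\dots a_m$ and the outcome is exactly $a_m\dots a_1$ with tiny score margins, using many voters to tune the totals. Voter $i$ then moves $a_1$ to the top and $a_m$ to the bottom while pushing others, which changes one adjacent pair in the output. More simply, choose a profile in which all alternatives are tied except for one pair whose order is decided by tie-breaking or by a single point. The outcome is inverse to voter $i$'s ranking, and voter $i$ can raise some alternative to create a pair he agrees with, giving utility $0\to$ positive and ratio $\infty$. Because $p(1)>p(m)$, voter $i$ can always shift score between two alternatives. Since the rule's output for his truthful report reverses his ranking, I would use a background of the other voters' rankings that makes the scores strictly ordered as $a_m>\dots>a_1$ after his truthful ballot, with his own ballot providing the margin. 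The delicate case is $p$ with long constant stretches, such as plurality; there I would place $a_1$ and $a_2$ at distinct score levels.

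\emph{Lower bounds for (1) and (2).} These are the main obstacle. I need a profile where voter $i$ has utility $1$ truthfully, for instance the output is the inverse of his ranking with one adjacent pair corrected. By deviating, he obtains an output with utility $\binom m2-m+1$ for Kemeny, or $\binom m2-1$ for strictly convex or non-linear $s$. For Kemeny, I would design a majority tournament close to a tie, for example via a cyclic background of McGarvey-style voter pairs. In this tournament, two near-opposite rankings have almost equal Kemeny scores, so a single voter switching his ballot between them tips the winner. The $m$ loss presumably comes from the fact that a single voter's swing can only affect margins enough if a few pairs, about $m$ of them, are sacrificed; I would accept that bound and verify the profile computationally for $m=4$ before generalizing by appending alternatives. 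For distance scoring rules other than Kemeny, I would exploit the fact that $s$ is not affine, so it has a strict convexity increment somewhere. Using many clone voters at chosen distances, I would make two rankings, one at distance $\binom m2-1$ from voter $i$ and one at distance $1$, exactly tie aside from voter $i$. His misreport then breaks the tie in his favor while his truthful ballot does not, which I expect is possible because non-affine $s$ lets his own score differences be dominated by a strategic report. Making this tie construction valid for all such $s$ simultaneously is the hardest part.
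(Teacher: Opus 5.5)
Your upper-bound argument is correct and is the paper's argument: the triangle inequality, convexity pushing the increment up to ${m\choose 2}$, and the observation that every optimum after the deviation was already an optimum before, so the lexicographic winner does not change. The gap is in the lower bounds of (1) and (2), which are the substance of the theorem.

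First, the tie mechanism you describe for (2) fails as stated. Suppose $\rhd_A$ (at distance ${m\choose 2}-1$ from voter $i$) and $\rhd_B$ (at distance $1$) tie among the other voters. Then voter $i$'s truthful ballot adds $s({m\choose 2}-1)$ to $\rhd_A$ but only $s(1)$ to $\rhd_B$, so $\rhd_B$ already wins and he has nothing to gain. The other voters must instead give $\rhd_A$ a lower score than $\rhd_B$ by at least $s({m\choose 2}-1)-s(1)$ (with tie-breaking favouring $\rhd_A$ at equality) and by less than $s({m\choose 2})-s(0)$. Moreover, every third ranking must be worse by a margin the few extra voters cannot overturn.

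The key lemma your proposal never identifies is a profile $R^*$ in which \emph{exactly two} rankings $\rhd_1,\rhd_2$ are optimal. The paper takes $\lambda$ copies of $R^*$ and adds one voter at distance $1$ from $\rhd_1$ and one at distance $1$ from $\rhd_2$, which creates exactly the required asymmetry while preserving the tie. Tie-breaking then picks, say, $\rhd_1$, and the voter near $\rhd_2$ reports $\rhd_2$. For non-Kemeny $s$, building $R^*$ is where non-linearity enters. An inverse pair $\succ,\bar\succ$ contributes $s(d)+s({m\choose 2}-d)$ with $d=\Delta(\succ,\rhd)$. By convexity this is constant on a middle band of $d$ and strictly larger outside it, and the band is proper precisely because $s$ is not affine. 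Taking such pairs for all $\succ$ whose distance to $\rhd_1$ lies in the band makes $\rhd_1$ and $\bar\rhd_1$ the unique minimizers; a swap-path argument through any third ranking gives strictness.

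For the Kemeny rule this template cannot use opposite rankings. The Kemeny score $\sum_{x\rhd y}g_{xy}(R)$ of $\bar\rhd$ is the negative of that of $\rhd$, so both are optimal only when all margins vanish, and then every ranking is optimal. That is the real source of the loss, not a single voter's limited swing. The paper builds weighted tournaments (realized via McGarvey) in which exactly two rankings at distance ${m\choose 2}-(m-1)$ are optimal. It does this explicitly for $m=4,5$, then by an induction $m\to m+2$ that puts the new alternatives at the bottom of $\rhd_1$ and near the top of $\rhd_2$, so the distance grows by $2m-1$. The manipulator is a voter at distance $1$ from $\bar\rhd_1$, added together with its inverse so the margins cancel. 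He moves to $\bar\rhd_1$ and ends with utility ${m\choose 2}-m$, not ${m\choose 2}-m+1$. Your plan of checking $m=4$ and then appending alternatives in the same way to all ballots cannot work, because it leaves $\Delta(\rhd_1,\rhd_2)$ unchanged while the bound needs it to grow like ${m\choose 2}$.

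Part (3) is in the right spirit but remains a sketch, and you have misidentified the hard case. Plurality has $p(1)>p(m-1)$ and is easy; the delicate case is $p(1)=\dots=p(m-1)>p(m)$ (veto), where the voter must push an alternative to the bottom rather than raise one. An exact tie is simpler than tiny margins. The $m$ cyclic rotations of $a_1\dots a_m$ give all alternatives equal score. Labelling so that the tie-breaking order is $a_m\dots a_1$, the voter reporting $a_1\dots a_m$ gets utility $0$. He then moves $a_{m-1}$ to the top if $p(1)>p(m-1)$, and $a_2$ to the bottom otherwise. This puts $a_{m-1}$ above $a_m$, respectively $a_1$ above $a_2$, in the new output. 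The paper instead ties only two alternatives and adds two voters who differ only on that pair, which makes its argument independent of tie-breaking.
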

\begin{proof}[Proof Sketch]
We will only prove here that $\gamma_m(f)\leq {m\choose 2}$ for all distance scoring rules $f$ and defer the proofs of our lower bounds to \Cref{app:approx}. Equivalently, this upper bound means that voters with utility $0$ cannot manipulate distance scoring rules. To prove this claim, we fix such a rule $f$ and its distance scoring function $s$, a profile $R$, a voter $i$ with ranking~$\succ_i$, and let ${\rhd}=f(R)$. Moreover, we suppose that $u(\succ_i,\rhd)=0$, which means that $\succ_i$ is inverse to $\rhd$ and $\Delta(\succ_i,\rhd)={m\choose 2}$. Lastly, let $R'$ denote a profile derived from $R$ by assigning an arbitrary ranking ${\succ_i'}\neq{\succ_i}$ to voter~$i$.

As the first step, we consider another ranking ${\rhd'}\in\mathcal{R}\setminus \{\rhd\}$ and show that $s(\Delta({\succ_i},{\rhd}))-s(\Delta({\succ_i'},{\rhd}))\geq s(\Delta({\succ_i},{\rhd'}))-s(\Delta({\succ_i'},{\rhd'}))$. To this end, we observe that $\Delta({\succ_i},{\rhd})-\Delta({\succ_i'},{\rhd})=\Delta({\succ_i},{\succ_i'})$ since $\rhd$ and $\succ_i$ are inverse to each other. Next, let $z=\Delta({\succ_i},{\rhd'})-\Delta({\succ_i'},{\rhd'})$ and note that $z\leq \Delta({\succ_i},{\succ_i'})$ since $\Delta$ is a metric. If $z\leq 0$, our inequality holds since  $s$ is non-decreasing, $\Delta({\succ_i},{\rhd})\geq \Delta({\succ_i'},{\rhd})$, $\Delta({\succ_i},{\rhd'})\leq \Delta({\succ_i'},{\rhd'})$. In particular, these insights imply that $s(\Delta({\succ_i},{\rhd}))-s(\Delta({\succ_i'},{\rhd}))\geq 0 \geq s(\Delta({\succ_i},{\rhd'}))-s(\Delta({\succ_i'},{\rhd'}))$.
Next, suppose that $z>0$. In this case, we recall that, by definition, $s(x)-s(x-1)\geq s(x-1)-s(x-2)$ for all $x\in \{2,\dots, {m\choose 2}\}$, which implies that $s({m\choose 2})-s({m\choose 2}-z)\geq s(\Delta(\succ_i,\rhd'))-s(\Delta(\succ_i,\rhd')-z)$. Since $s$ is non-decreasing, $\Delta({\succ_i},\rhd)={m\choose 2}$, and $z\leq \Delta(\succ_i,\succ_i')$, we again infer our target inequality as
{\thinmuskip=1mu
 \medmuskip=2mu
 \thickmuskip=3mu
\begin{align*}
    s(\Delta(\succ_i,\rhd))-s(\Delta(\succ_i',\rhd))
    &=s(\Delta(\succ_i,\rhd))-s(\Delta(\succ_i,\rhd)-\Delta(\succ_i,\succ_i'))\\
    &\geq s(\Delta(\succ_i,\rhd))
    -s(\Delta(\succ_i,\rhd)-z)\\
    &\geq s(\Delta(\succ_i,\rhd')) -s(\Delta(\succ_i,\rhd')-z)\\
    &=s(\Delta(\succ_i,\rhd'))-s(\Delta(\succ_i',\rhd')).
\end{align*}
}

Next, it holds that $\sum_{j\in N} s(\Delta({\succ_j}, {\rhd}))\leq \sum_{j\in N} s(\Delta({\succ_j}, {\rhd'}))$ for all ${\rhd'}\in\mathcal{R}$ since $f(R)={\rhd}$. Hence, we derive for all ${\rhd'}\in\mathcal{R}$ that 
{\thinmuskip=1mu
 \medmuskip=2mu
 \thickmuskip=3mu
    \begin{align*}
        \sum_{j\in N} s(\Delta(\succ_j',\rhd))
        &=\sum_{j\in N\setminus \{i\}} s(\Delta(\succ_j,\rhd)) - (s(\Delta(\succ_i,\rhd)- \Delta(\succ_i',\rhd))\\
        &\leq \sum_{j\in N\setminus \{i\}} \Delta(\succ_j,\rhd')- (s(\Delta(\succ_i,\rhd'))-s(\Delta(\succ_i',\rhd')))\\
        &=\sum_{j\in N}\Delta(\succ_j',\rhd').
    \end{align*}
}
    This proves that $\rhd$ minimizes the total score in $R'$. Further, if this inequality is tight for some ranking $\rhd'$, then $\sum_{j\in N} s(\Delta({\succ_j}, {\rhd}))=\sum_{j\in N} s(\Delta({\succ_j},{ \rhd'}))$ and $\rhd$ is lexicographically preferred to $\rhd'$. Hence, $f(R')={\rhd}$ and voters with utility $0$ cannot manipulate~$f$.
\end{proof}

\begin{remark}\label{rem:mincomp}
A natural follow-up question to \Cref{thm:approx} is whether there are appealing SWFs that have a significantly better incentive ratio than our considered SWFs. Motivated by this question, we discuss in \Cref{app:approx} the \emph{minimal compromise rule $f_\mathit{MC}$}, which has an incentive ratio of $m-2$ when $m\geq 4$. To define this rule, we denote the min score of an alternative $x$ in a profile $R$ by $s_{\min}(R,x)=\min_{i\in N} m-r(\succ_i,x)$. Then, $f_\mathit{MC}$ orders the alternatives in decreasing order of their min scores, with ties broken lexicographically. 
\end{remark}

\section{Conclusion}\label{sec:conclusion}

In this paper, we study social welfare functions (SWFs) with respect to (Kemeny-)strategyproofness, which requires that voters cannot obtain a ranking that is closer to their true ranking in terms of the Kemeny distance by voting strategically. As our main result, we show a sweeping impossibility theorem, demonstrating that no anonymous and unanimous SWF satisfies strategyproofness if there are $m\geq 4$ alternatives. Moreover, we prove that no majority consistent SWF is strategyproof when $m\geq 4$ and that many natural SWFs are severely manipulable as they have a high incentive ratio. 

A natural follow-up question to our work is how we can circumvent our impossibility theorems. Possible directions to this end are the study of randomized or set-valued SWFs, a more detailed analysis of the incentive ratio of SWFs, or the study of alternative strategyproofness notions. For example, one could analyze which distances between rankings allow for positive results.

\begin{acks}
This research was carried out while Patrick Lederer was at UNSW Sydney, where he was supported by the NSF-CSIRO project on “Fair Sequential Collective Decision Making” (RG230833). For parts of our computational results, we used the LEO HPC infrastructure of the University of Innsbruck. We thank Dominik Peters for helpful discussions and the Formal Methods and Tools group at the University of Twente for providing additional computing resources.
\end{acks}

\clearpage

\appendix
\onecolumn

\section{Majority Consistency and Strategyproofness for $m=3$}\label{app:m3}

As mentioned in \Cref{rem:Condm3}, when $m=3$, the Kemeny rule satisfies both majority consistency and strategyproofness. In this appendix, we will show that, under mild additional assumptions, these two axioms even require us to choose a ranking minimizing the total Kememy distance to the input rankings. To make this more formal, we first extend the domain of SWFs from profiles for a fixed electorate $\mathcal{R}^N$ to the set of all profiles $\mathcal{R}^*=\bigcup_{N\subseteq \mathcal N\colon \text{$N$ is non-empty and finite}} \mathcal{R}^N$ that are defined for any non-empty and finite electorate. Furthermore, we define the \emph{majority margin} between two alternatives $x,y\in A$ in a profile $R$ by $g_{xy}(R)=|\{i\in N\colon x\succ_i y\}| - |\{i\in N\colon y\succ_i x\}|$, i.e., the majority margin between two alternatives counts how many more voters prefer $x$ to $y$ then vice versa. Then, we say an SWF $f:\mathcal{R}^*\rightarrow \mathcal{R}$ satisfies
\begin{itemize}[leftmargin = *]
    \item \emph{quasi-neutrality} if $f(\tau(R))=\tau(f(R))$ for all profiles $R$ and permutations $\tau:A\rightarrow A$ such that (i) $g_{xy}(R)\neq g_{vw}(R)$ for all $v,w,x,y\in A$ with $v\neq w$, $x\neq y$, and $\{x,y\}\neq \{v,w\}$ and (ii) the alternatives can be labeled such that $x\succsim_R y\succsim_R z\succsim_R x$.
    \item \emph{cancellation} if $f(R)=f(R')$ for all profiles $R$ and $R'$ such that $R'$ arises from $R$ by adding two voters with inverse preferences. 
\end{itemize}

Less formally, quasi-neutrality enforces a mild degree of neutrality for profiles where the majority relation is cyclic and all majority margins are unique. On the other hand, cancellation requires that adding pairs of voters with inverse preferences does not affect the outcome. As we show next, every SWF on $\mathcal{R}^*$ that satisfies strategyproofness, majority consistency, anonymity, quasi-neutrality, and cancellation must always choose a Kemeny ranking, i.e., a ranking  from the set $K(R)=\arg\max_{\rhd\in \mathcal{R}}\sum_{i\in N_R} \Delta(\succ_i,\rhd)$ (where $N_R$ denotes the voters in $R$). 

\begin{restatable}{proposition}{condm}\label{thm:Condm3}
   Assume $m=3$. If an SWF $f$ on $\mathcal{R}^*$ satisfies strategyproofness, majority consistency, anonymity, quasi-neutrality, and cancellation, then $f(R)\in K(R)$ for all profiles $R\in\mathcal{R}^*$. 
\end{restatable}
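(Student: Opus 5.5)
We split the argument according to the shape of the majority relation $\succsim_R$. First, suppose $\succsim_R$ is a ranking, i.e., transitive and antisymmetric. With $m=3$, the total Kemeny distance of a ranking $\rhd$ is $\sum_{\{x,y\}}$ of the number of voters disagreeing with $\rhd$ on $\{x,y\}$. This is minimized pairwise by the majority relation, so $K(R)=\{\succsim_R\}$. Majority consistency then gives $f(R)=\succsim_R\in K(R)$. (We read $K(R)$ as the set of minimizers of the total distance, as in the definition of the Kemeny rule.)

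Hence the real work concerns profiles whose majority relation has ties or is a cycle $x\succsim_R y\succsim_R z\succsim_R x$. We first normalize such profiles. By anonymity a profile is a multiset of the six rankings on $\{a,b,c\}$. The six rankings form three inverse pairs, so by cancellation we may delete inverse pairs until at most one ranking of each pair is present. After this, every profile is described by three nonnegative counts. Both $f$ and $K$ are invariant under this reduction, since $K$ depends only on the margins $g_{xy}$. For a cycle with margins $g_{xy},g_{yz},g_{zx}\ge 0$, a ranking agrees with exactly two cycle edges, and its Kemeny cost is determined by the margin of the reversed edge. So $K(R)$ consists of the rankings that break the cycle at an edge of minimum margin.

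The core step is to show that, in a cyclic profile, $f$ must break the cycle at a minimum-margin edge. We proceed by induction on the total margin $g_{xy}+g_{yz}+g_{zx}$, possibly after adding inverse pairs via cancellation to create enough voters of each type. For a voter $i$, we consider deviations that reverse one pair in $\succ_i$ and thereby lower one margin by $2$. Such deviations either make the majority relation transitive, so that the outcome is forced by majority consistency, or produce a cyclic profile with smaller total margin, where the induction hypothesis fixes the outcome. Strategyproofness in both directions between $R$ and the deviated profile yields inequalities $\Delta(\succ_i,f(R))\le\Delta(\succ_i,f(R'))$ and their reverses, exactly as in Claims~1 and~2 of \Cref{thm:PWC}. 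We aim to show that these inequalities exclude every ranking outside $K(R)$. In particular, they should exclude the rankings that break the cycle at a non-minimal edge, as well as the two rankings fully inverse to the cycle orientation.

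Quasi-neutrality is used when all margins are distinct. There it lets us assume w.l.o.g. a labeling with $g_{xy}>g_{yz}>g_{zx}$, which reduces the case analysis to one orientation. It also transports a conclusion such as ``$f$ breaks the cycle at $zx$'' across relabelings. Profiles with tied margins are then handled by strategyproofness from neighbouring distinct-margin profiles: we perturb a margin by one voter's swap, and the chained inequalities restrict $f(R)$ to the rankings that break a tied minimal edge.

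The main obstacle is the cyclic case. A single swap need not reach a transitive profile, and the strategyproofness inequalities obtained from a single manipulator may not rule out every wrong ranking, in particular the ranking that breaks the cycle at the middle-margin edge. If direct one-step deviations do not suffice, we would first try using cancellation to inflate the profile with inverse pairs. This supplies voters whose true ranking coincides with a specific candidate outcome, so that their manipulation inequality directly excludes that candidate.
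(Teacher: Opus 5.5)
Your treatment of the transitive case and the reduction by anonymity and cancellation are fine. The cyclic case, however, which is the real content of the proposition, is only announced (``we aim to show''), and the mechanism you propose for it cannot work.

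Write margins as $(g_{ab},g_{bc},g_{ca})$. Take the reduced profile $R$ with three voters $abc$, two $bca$ and two $cab$, so the margins are $(3,3,1)$ and $K(R)=\{abc\}$. Its single-swap neighbours have margins $(1,3,1)$, $(3,1,1)$ and the transitive $(3,3,-1)$, where $K$ is $\{abc,bca\}$, $\{abc,cab\}$ and $\{abc\}$ respectively. Suppose $f$ picks $bca$ at $(1,3,1)$ and $cab$ at $(3,1,1)$, which any induction hypothesis of the form $f\in K$ allows. Then $f(R)=bca$ satisfies all twelve strategyproofness inequalities between $R$ and these neighbours; for $abc\to bac$, for instance, you get $\Delta(abc,bca)=2\le 2$ and $\Delta(bac,bca)=1\le 1$. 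In fact $f(R)=bca$ satisfies all inequalities towards every one-voter deviation with smaller total margin or a transitive majority relation, and so does $f(R)=cab$.

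Your pinning via voters who report the target ranking is also unavailable here: no deviation of an $abc$ voter reaches a profile where $abc$ is forced. Moreover, for cyclic profiles with a unique minimal edge, that pinning argument always bottoms out at profiles of exactly this shape $(\mu+2,\mu+2,\mu)$. What rules out $bca$ here is a profile with \emph{larger} margins, such as $(3,5,1)$, which a downward induction never controls.

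Two further problems. A swap can create a zero margin, which is neither transitive nor strictly cyclic, yet you postpone ties until after the distinct-margin case, so the induction order is circular. Also, a ranking need not agree with two edges of a $3$-cycle: three rankings reverse one edge, three reverse two, and none reverses all three.

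The missing ingredient is the monotonicity that cancellation and strategyproofness provide. If $f(R)=\rhd$, add a pair $\rhd,\bar\rhd$. The $\bar\rhd$-voter is at maximal distance from $\rhd$, so none of his deviations may change the outcome. Iterating, you can move the margins arbitrarily in $\rhd$'s favour while $f$ stays $\rhd$. In your fallback, a voter who reports $\rhd$ itself never gains by leaving outcome $\rhd$. The useful voter reports $\bar\rhd$ (equivalently, reports $\rhd$ in a profile obtained by raising margins towards $\rhd$), and either way the profiles involved have larger margins.

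Monotonicity still needs an anchor deciding \emph{which} edge is cut. Quasi-neutrality alone cannot supply it, since it only relates profiles with the same multiset of margins. The paper proceeds as follows.
\begin{enumerate}
\item It shows that $f$ depends only on the margins.
\item It anchors at equal margins $(\lambda,\lambda,\lambda)$. A ranking reversing two cycle edges is excluded there, because pushing one of those edges in its favour makes the majority relation transitive with a different ranking.
\item If, say, $f=abc$ at equal margins, raising $g_{ab}$ and $g_{bc}$ gives $f=abc$ whenever $g_{ca}=\lambda$ is the unique minimum. Quasi-neutrality then transports this to all cycles with distinct margins.
\item Ties and zero margins are settled by pushing a putative wrong outcome into a distinct-margin or transitive profile where it is contradicted.
\end{enumerate}
This upward direction is exactly what your proposal lacks.
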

\begin{proof}
Let $m=3$ and $f$ be a SWF on $\mathcal{R}^*$ that satisfies all given axioms. We will show in multiple steps that $f(R)\in K(R)$ for all profiles $R\in\mathcal{R}^*$.\medskip

    \textbf{Step 1:} First, we will show that $f$ only depends on the majority margins: it holds that $f(R)=f(R')$ for all profiles $R,R'\in\mathcal{R}^*$ such that $g_{xy}(R)=g_{xy}(R')$ for all alternatives $x,y\in A$. To prove this claim, we denote the six possible input rankings by 
    \begin{center}
    \begin{tabular}{lll}
         ${\succ_1}=abc$\hspace{1cm} & ${\succ_2}=bca$\hspace{1cm} & ${\succ_3}=cab$\\
         ${\succ_4}=cba$ & ${\succ_5}=acb$ & ${\succ_6}=bac$.
    \end{tabular}
    \end{center}
    Furthemore, given a profile $R$, we define by $n_i^R$ for $i\in \{1,\dots, 6\}$ the number of voters that report the ranking $\succ_i$. By anonymity, we can compute $f$ only based on these six numbers. We further observe that $\succ_i$ is inverse to $\succ_{i+3}$ for $i\in \{1,2,3\}$. We hence define $\delta_i^R=n_i^R-n_{i+3}^R$ for all $i\in \{1,2,3\}$ to count how many more voters report $\succ_i$ than $\succ_{i+3}$ (or vice versa). By cancellation, these values are sufficient to compute $f$ as we can add and remove pairs of voters with inverse preferences without affecting the outcome. To prove this step, we will next show that if $g_{xy}(R)=g_{xy}(R')$ for all $x,y\in A$, then also $\delta_i^R=\delta_{i}^{R'}$. This implies that $f(R)=f(R')$ by our previous insights, so $f$ can indeed only be computed based on the pairwise majority margins. Now, to prove our claim, we observe that 
    \begin{align*}
        g_{ab}(R)=\delta_1^R-\delta_2^R+\delta_3^R,\qquad
        g_{bc}(R)=\delta_1^R+\delta_2^R-\delta_3^R, \qquad
        g_{ca}(R)=-\delta_1^R+\delta_2^R+\delta_3^R.
    \end{align*}
   By rearranging these equations, we infer that 
    \begin{align*}
    \delta_1^R=\frac{g_{ab}(R)+g_{ab}(R)}{2},\qquad \delta_2^R=\frac{g_{bc}(R)+g_{ca}(R)}{2},\qquad \delta_3^R=\frac{g_{ca}(R)+g_{ab}(R)}{2}.
    \end{align*}
    Hence, the majority margins fully determine the values of $\delta_i^R$. In particular, if $g_{xy}(R)=g_{xy}(R')$ for all $x,y\in A$, then it also holds that $\delta_i^R=\delta_i^{R'}$ for all $i\in \{1,2,3\}$. This completes the proof of this step.\medskip

    \textbf{Step 2:} We will next show that $f(R)\in K(R)$ for all profiles $R\in\mathcal{R}^*$. To this end, we first note that if the majority relation $\succsim_M$ of a profile $R$ is a ranking, then $K(R)=\{\succsim_M\}$ and our inclusion holds by majority consistency. Furthermore, if $g_{xy}(R)=0$ for all $x,y\in A$, then $K(R)=\mathcal{R}$ and it holds trivially that $f(R)\in K(R)$. We next proceed with a case distinction regarding the structure of the considered profile.\smallskip
    
    \emph{Case 2.1:} Let $R$ denote a profile such that $g_{ab}(R)=g_{bc}(R)=g_{ca}(R)>0$. For this profile, it holds that $K(R)=\{abc, bca, cab\}$ and we assume for contradiction that $f(R)\not\in \{abc, bca, cab\}$. Without loss of generality, we can make this more precise by letting $f(R)=cba$. Using cancellation, we next add $g_{ab}(R)$ pairs of voters such that one voter reports $abc$ and and the other voter reports $cba$. This leads to a new profile $R'$ with $f(R')=cba$ due to cancellation. Finally, we let the newly added voters who report $abc$ deviate to $bac$. Since these voters completely disagree with $f(R')$, strategyproofness requires that the outcome is not allowed to change at any step. However, in the resulting profile $R''$, we have that $g_{ba}(R'')=g_{ab}(R')>0$, $g_{bc}(R'')>0$, and $g_{ca}(R'')>0$, i.e., the majority relation is transitive. Hence, we need to choose the ranking $bca$ for $R''$, which contradicts strategyproofness. This proves that the assumption that $f(R)\not\in \{abc,bca,cab\}$ is wrong.\smallskip

    \emph{Case 2.2:} We will next turn to profiles such that the majority relation is cyclic and all majority margins are unique and non-zero. More specifically, we analyze profiles $R$ such that $g_{ab}(R)>0$, $g_{bc}(R)>0$, $g_{ca}(R)>0$, and $g_{xy}(R)\neq g_{yz}(R)$ for all distinct $x,y,z\in A$. We start by again considering a profile $R^*$ such that $g_{ab}(R^*)=g_{bc}(R^*)=g_{ca}(R^*)=\lambda$ for some $\lambda>0$. By Case 2.1, it holds that $f(R^*)\in \{abc, bca, cab\}$ and we suppose without loss of generality that $f(R^*)=abc$. Using cancellation, we next add pairs of voters to $R^*$ such that one reports $abc$ and the other $cba$. By strategyproofness, the voters who reports $cba$ cannot change the outcome by deviating as any other outcome is better for them. In particular, this means that, when swapping $b$ and $c$, or $a$ and $b$, the outcome does not change. By this argument and Step 1, it follows that $f(R')=abc$ for all profiles $R'$ such that $g_{ab}(R')\geq g_{ab}(R^*)$, $g_{bc}(R')\geq g_{bc}(R^*)$, and $g_{ca}(R')=\lambda$. In particular, this holds when $g_{ab}(R')\neq g_{bc}(R')$, $g_{ab}(R')>g_{ca}(R')$, and $g_{bc}(R')>g_{ca}(R')$, i.e., when all majority margins are different and $g_{ca}(R')$ has the smallest weight. Put differently, for all such profiles, we ``cut'' the edge in the majority graph with the least weight, which means that $f(R')\in K(R')=\{abc\}$. Further, since all majority margins are unique in $R'$, we can generalize this insight to all profiles whose majority margins can be obtained  from $R'$ by permuting the alternatives using quasi-neutrality. This shows that $f(R)\in K(R)$ for all profiles such that the minimal majority margin has weight $\lambda$ and all majority margins are different. Finally, we note that we can apply this argument for all values of $\lambda>0$. Hence, $f(R)\in K(R)$ for all profiles with unique non-zero majority margins and a cyclic majority relation.\smallskip

    \emph{Case 2.3:} Thirdly, we consider the case that the majority relation is cyclic and the majority margins have two different non-zero values. To this end, let $R$ denote a profile and $\lambda_1, \lambda_2\in\mathbb{N}$ denote two non-zero integers such that $g_{ab}(R)=\lambda_1$, $g_{bc}(R)=\lambda_1$, $g_{ca}(R)=\lambda_2$. First, suppose that $\lambda_1>\lambda_2$, in which case $K(R)=\{abc\}$. Suppose for contradiction that $f(R)\neq abc$. We first note that it can be shown analogously to Case 2.1 that $f(R)\not\in \{cba, bac, acb\}$ as we can deviate to a profile with transitive majority relation otherwise. It thus holds that $f(R)\in \{bca, cab\}$. Suppose that $f(R)=bca$. In this case, we add a pair of voters with rankings $bca$ and $acb$ and cancellation shows that we still choose $bca$. Next, let the voter reporting $acb$ swap $b$ and $c$, resulting in a profile $R'$ with $g_{ab}(R')=\lambda_1$, $g_{bc}(R')=\lambda_1+2$, and $g_{ca}(R')=\lambda_2$. Moreover, strategyproofness requires that $f(R')=bca$ as any other outcome constitutes a manipulation. However, this contradicts Case 2.2 as all majority margins are unique, non-zero, and $g_{ca}(R')$ is minimal. On the other hand, if $f(R)=cab$, we can apply an analogous argument by adding a pair of voters with preferences $cab$ and $bac$ and reinforcing $a$ against $b$. 

    As the second subcase, suppose that $\lambda_1<\lambda_2$, which means that $K(R)=\{bca, cab\}$. Using again the argument of Case 2.1, it follows that $f(R)\in \{abc, bca, cab\}$. So, we assume for contradiction that $f(R)=abc$. In this case, we can add pairs of voters with rankings $abc$ and $cba$, which does not affect the outcome due to cancellation. Moreover, strategyproofness requires that the voters reporting $cba$ cannot change the outcome by deviating. By letting these voters swap $b$ and $c$, we can increase the majority margin $g_{bc}(R)$ arbitrarily, so we now infer that $f(R')=abc$ for all profiles $R'$ with $g_{ab}(R')=\lambda_1$, $g_{bc}(R')\geq \lambda_1$, and $g_{ca}(R')=\lambda_2$. However, once $g_{bc}(R')>\lambda_2$, this conflict with Case 2.2: in this case, the minimal edge is $g_{ab}(R')$, so $f(R')$ must be $bca$.\smallskip

    \emph{Case 2.4:} In our fourth case, we assume that exactly one majority margin has value $0$. Without loss of generality, we suppose that $g_{ac}(R)=0$ and we consider three subcases. Moreover, we will assume that every ranking is reported by at least one voter; this is without loss of generality due to cancellation. Now, as the first subcase, suppose $g_{bc}(R)>0$ and $g_{ba}(R)>0$, which implies that $K(R)=\{bca, bac\}$. Assume for contradiction that $f(R)\not\in K(R)$. If $f(R)=acb$, let $R'$ denote the profile derived from $R$ by letting a voter change his ranking from $bca$ to $bac$. Strategyproofness requires that the outcome does not change. However, the majority relation now agrees with the ranking $bca$, so majority consistency requires that $f(R)=bca$, a contradiction. If $f(R)=abc$, let $R'$ denote the profile derived from $R$ by letting a voter change his ranking from $cba$ to $bac$. Strategyproofness requires that $f(R')=abc$, but majority consistency requires that $f(R')=bac$, hence yielding a contradiction. Finally, the cases that $f(R)=cab$ and $f(R)=cba$ are symmetric to $f(R)=acb$ and $f(R)=abc$, respectively.

    As the second subcase, suppose that $g_{bc}(R)<0$ and $g_{ba}(R)<0$, so $K(R)=\{acb, cab\}$. If $f(R)=bca$, a voter can again benefit by manipulating from the ranking $acb$ to $cab$. In particular, after this deviation, the majority relation agrees with the ranking $cab$, so this ranking must be chosen, but $\Delta(acb, bca)=3>1=\Delta(acb, cab)$. On the other hand, if $f(R)=cba$, a voter can manipulate by deviating from the ranking $abc$ to $cab$. After this step, the outcome must be $cab$ by majority consistency, which again decreases the Kemeny distance of the manipulator. The case that $f(R)=bac$ and $f(R)=abc$ are again symmetric, so it follows again that $f(R)\in K(R)$.
    
    As the third case, we suppose that $g_{ab}(R)>0$ and $g_{bc}(R)>0$ and note that the case $g_{ab}(R)<0$ and $g_{bc}(R)<0$ is symmetric. Let $R^*$ denote a profile with $g_{xy}(R^*)=0$ for all $x,y\in A$ and suppose that $f(R^*)=bca$ (this choice will not matter due to quasi-neutrality). Using the same argument as in Case 2.2, it holds for every profile $R'$ with $g_{bc}(R')\geq 0$, $g_{ca}(R')\geq 0$, and $g_{ab}(R)=0$ that $f(R')=bca$, too. Based on quasi-neutrality, this insight generalizes to all rankings such that one majority margin is $0$, all majority margins are distinct and the majority relation is cyclic. Hence, if $g_{ab}(R)\neq g_{bc}(R)$, it follows for our profile $R$ that $f(R)=abc$ and thus $f(R)\in K(R)$. So, suppose that $g_{ab}(R)=g_{bc}(R)$ and assume that $f(R)\neq abc$. If $f(R)=cba$, we repeatedly let voters with ranking $abc$ change their ranking to $acb$. (Note that we may use cancellation to add these voters). This eventually results in a profile $R'$ with $g_{ab}(R')>0$, $g_{cb}(R')>0$, and $g_{ac}(R')=0$, so $f(R')\in \{acb, cab\}$ by the last subcase. However, this contradicts strategyproofness as the voters with ranking $acb$ have $\Delta(acb, bca)=3$ and any other outcome is therefore a manipulation. Next, if $f(R)=bac$, we let a voter with ranking $cab$ swap $a$ and $c$. For the resulting profile $R'$, majority consistency requires that $f(R')=abc$, which constitutes a manipulation. A similar construction also works if $f(R)=acb$. Lastly, if $f(R)=bca$ or $f(R)=cab$, we can infer a manipulation analogously to Case 2.3 by ensuring that all majority margins are unique.\smallskip

    \emph{Case 2.5:} As the last case, suppose that exactly two majority margins are zero. More specifically, we assume that $g_{ab}(R)>0$ and $g_{bc}(R)=g_{ca}(R)=0$, so $K(R)=\{abc, acb, cab\}$. Suppose for contradiction that $f(R)\not\in K(R)$. If $f(R)=bac$, let $R'$ denote the profile derived from $R$ by letting a voter deviate from $cab$ to $acb$. Consequently, $g_{ab}(R')>0$, $g_{ac}(R')>0$, and $g_{bc}(R')=0$, so $f(R)\in \{abc, acb\}$ by Case 2.4. Regardless of the exact outcome, this constitutes a manipulation. If $f(R)=bca$, we can use an analogous construction to infer a contradiction. Finally, if $f(R)=cba$, we let a voter deviate from $abc$ to $acb$. This results in a profile $R'$ with $g_{ab}(R')>0$, $g_{cb}(R')>0$, and $g_{ac}(R')=0$. Hence, $f(R)\in \{acb, cab\}$ by Claim 2.4, which is again a beneficial manipulation. Hence, it holds that $f(R)\in K(R)$ for all cases.
\end{proof}

\section{Proof of Lemma 1}\label{app:inductiveArgs}

We next turn to the inductive arguments, which are necessary to generalize \Cref{prop:basecase} to \Cref{thm:mainimp}.

\inductiveArgs*
\begin{proof}
    Fix some values $m$ and $n$ and assume that there is no anonymous SWF satisfying strategyproofness and unanimity for $m$ alternatives and $n$ voters. For both claims of our lemma, we will show that if there was an SWF satisfying our axioms for the given parameters, then there would also be one for $m$ alternatives and $n$ voters, which contradicts our previous assumption.\medskip

    \textbf{Claim (1):} Assume for contradiction that there is an integer $m'>m$ such that there is an anonymous SWF $f$ for $m'$ alternatives and $n$ voters that satisfies strategyproofness and unanimity. Moreover, let $A'$ denote the set of $m'$ alternatives for which $f$ is defined, and let $A\subseteq A'$ be a subset of this set with $|A|=m$. We will next construct an SWF $g$ for $n$ voters and the alternatives $A$ that satisfies all our axioms. To this end, given a profile $R$ on the alternatives $A$, we define by $R^{A\rightarrow A'}$ the profile obtained from $R$ by adding the alternatives in $A'\setminus A$ in a fixed order at the bottom of the rankings of all voters. Moreover, given a ranking $\rhd$ on $A'$, we define by ${\rhd}|_A$ the ranking obtained by deleting the alternatives in $A'\setminus A$ from $\rhd$. Then, given a profile $R$ on the alternatives $A$, our new SWF $g$ first constructs the profile $R^{A\rightarrow A'}$, computes the output ranking $\rhd$ of $f$ on this profile, and finally returns the restriction of $\rhd$ to $A$. Or, more compactly, $g$ is given by $g(R)=f(R^{A\rightarrow A'})|_A$.

    We next will show that $g$ satisfies anonymity, unanimity, and strategyproofness. For anonymity, we note that $\pi(R^{A\rightarrow A'})=\pi(R)^{A\rightarrow A'}$ for every permutation $\pi:N\rightarrow N$ since we extend the ranking of every voter in the same way, regardless of his reported ranking. Since $f$ is by assumption anonymous, it holds that $g(\pi(R))=f(\pi(R)^{A\rightarrow A'})|_A=f(\pi(R^{A\rightarrow A'}))|_A=f(R^{A\rightarrow A'})|_A=g(R)$ for all permutations $\pi:N\rightarrow N$, thus showing that $g$ is anonymous. Next, for unanimity, we observe that, if $x\succ_i y$ for all voters $i$ in some profile $R$ on $A$, then the same holds for the profile $R^{A\rightarrow A'}$. Since $f$ is unanimous, it therefore follows that the output ranking $g(R)=f(R^{A\rightarrow A'})|_A$ ranks $x$ ahead of $y$. 
    
    Lastly, for strategyproofness, we note that if $g$ is manipulable, then so must be $f$. To see this, assume that there are profiles $R$ and $\widehat R$ over the set of alternatives $A$ and a voter $i$ in $N$ such that $R$ and $\widehat R$ only differ in the ranking of voter $i$ and $\Delta(\succ_i, g(R))>\Delta(\succ_i, g(\widehat R))$. Further, let $\bar\succ_i$ denote voter $i$'s extended ranking in $R^{A\rightarrow A'}$. Since $f$ is unanimous and all voters in $R^{A\rightarrow A'}$ (resp. $\widehat R^{A\rightarrow A'}$) rank all alternatives in $A$ ahead of those in $A'\setminus A$ and agree on the order of the alternatives in $A'\setminus A$, the same must be true for the output ranking $f(R^{A\rightarrow A'})$  (resp. $f(\widehat R^{A\rightarrow A'})$). This implies that $\Delta({\bar \succ_i}, f(R^{A\rightarrow A'}))=\Delta({\succ_i}, g(R))>\Delta(\succ_i,g(\widehat R))=\Delta(\bar\succ_i, f(\widehat R^{A\rightarrow A'}))$. Since $R^{A\rightarrow A'}$ and $\widehat R^{A\rightarrow A'}$ only disagree in the ranking of voter $i$, we conclude that $f$ is manipulable, contradicting our assumptions. Hence, if $f$ exists, we could also construct an SWF $g$ for $m$ alternatives and $n$ voters that satisfies anonymity, unanimity, and strategyproofness, which contradicts the premise of this lemma.\medskip

    \textbf{Claim (2):} For the second case, we fix some integer $\ell\in\mathbb{N}$ and suppose that there is an anonymous SWF $f$ that satisfies strategyproofness and unanimity for $m$ voters and $\ell n$ voters. This time, we define the following transformation: given a profile $R$ on $n$ voters and $m$ alternatives, we define by $\ell R$ the profile that contains $\ell$ copies of each voter in $R$. Then, we define an SWF $g$ for $n$ voters and $m$ alternatives by $g(R)=f(\ell R)$. We first note that it is again easy to see that $g$ inherits both anonymity and unanimity from $f$. In more detail, permuting our input profile $R$ corresponds to permuting our input profile $\ell R$ accordingly, which suffices to prove that $g$ inherits anonymity. Similarly, if all voters in $R$ prefer $x$ to $y$, the same holds for $\ell R$. Hence, the output ranking $g(R)=f(\ell R)$ also has to rank $x$ ahead of $y$ by the unanimity of $f$.

    Lastly, for strategyproofness, consider two profile $R$ and $R'$ (on $n$ voters) such that $R$ and $R'$ only differ in the ranking of voter $i$. Consequently, $\ell R$ and $\ell R'$ only differ in the $\ell$ clones of voter $i$. Now, consider the sequence of profile $R^0=\ell R$, $R^1, \dots, R^\ell=\ell R'$ derived by letting the clones of voter $i$ one after another change their ranking from $\succ_i$ to $\succ_i'$. By the strategyproofness of $f$, it holds that $\Delta(\succ_i, f(R^j))\leq \Delta(\succ_i, f(R^{j+1}))$ for all $j\in \{0,\dots, \ell-1\}$. By chaining these inequalities, we get that $\Delta(\succ_i, g(R))=\Delta(\succ_i, f(\ell R))\leq \Delta(\succ_i, f(\ell R'))=\Delta(\succ_i, g(R'))$. This proves that $g$ is strategyproof if $f$ satisfies this condition. Hence, if $f$ satisfies anonymity, unanimity, and strategyproofness, so does $g$, which again contradicts our assumption that no SWF for $m$ alternatives and $n$ voters simultaneously satisfies all three conditions. 
\end{proof}

\section{Proof of Theorem 3}\label{app:approx}

We next turn to the proof \Cref{thm:approx}. Since we have already shown that every distance scoring function has an incentive ratio of at most $\gamma_m(f)\leq{m\choose 2}$, we will only focus on the lower bounds. Further, to improve legibility, we show each lower bound as a separate proposition.

\begin{proposition}\label{prop:KemenyApprox}
    For all $m\geq 4$, the incentive ratio of the Kemeny rule $f_\mathit{Kemeny}$ satisfies ${m\choose 2}-m\leq \gamma_m(f_{\mathit{Kemeny}})$. 
\end{proposition}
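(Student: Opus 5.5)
The plan is to construct, for each $m\geq 4$, a profile $R$ and a voter $i$ such that $u(\succ_i,f_{\mathit{Kemeny}}(R))=1$ while some misreport yields an outcome $\sigma$ with $u(\succ_i,\sigma)={m\choose 2}-m$. Equivalently, $\Delta(\succ_i,f(R))={m\choose 2}-1$ and $\Delta(\succ_i,\sigma)=m$; the ratio is then at least ${m\choose 2}-m$. Since we work with a variable electorate, we are free to choose the other voters $R_{-i}$ arbitrarily.

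We would use the fact that the Kemeny score of a ranking depends only on the majority margins. By McGarvey-type constructions (pairs of voters that raise a single margin by $2$), any margin vector of suitable parity is realizable.

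\textbf{Rankings.} Fix ${\succ_i}=x_1x_2\dots x_m$ and define two rankings:
\begin{itemize}
\item $\rho=x_mx_{m-1}\dots x_3x_1x_2$, the reverse of $\succ_i$ with $x_1,x_2$ swapped back, so $\Delta(\succ_i,\rho)={m\choose 2}-1$;
\item $\sigma=x_mx_2x_1x_3\dots x_{m-1}$, obtained from $\succ_i$ by moving $x_m$ to the top ($m-1$ swaps) and swapping $x_1,x_2$, so $\Delta(\succ_i,\sigma)=m$.
\end{itemize}
Both rankings put $x_m$ first. They disagree on $x_1$ vs.\ $x_2$ and on almost every pair among $x_1,\dots,x_{m-1}$. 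Hence $D:=\Delta(\sigma,\rho)$ is roughly ${m-1\choose 2}$, which for $m\geq 4$ exceeds ${m\choose 2}-1-m$; this should be checked explicitly.

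\textbf{Margins of the other voters.}
\begin{itemize}
\item On every pair where $\rho$ and $\sigma$ agree (all pairs involving $x_m$), give $R_{-i}$ a huge margin in the agreed direction. This forces every Kemeny-optimal ranking, before and after the deviation, to agree with them.
\item On the disagreement pairs, choose margins so that $\rho$ and $\sigma$ are the two best rankings for $R_{-i}$, with $\rho$ ahead by a gap $g$.
\end{itemize}
Adding voter $i$ truthfully changes the score difference by $\Delta(\succ_i,\rho)-\Delta(\succ_i,\sigma)={m\choose 2}-1-m$ against $\rho$. We want $\rho$ to still win, so we need $g\geq {m\choose 2}-1-m$, using the tie-breaking by $\mathbf{>}$ to settle equality if helpful. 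If voter $i$ instead reports $\sigma$, the difference shifts by $D$. Picking $g$ with ${m\choose 2}-1-m\leq g<D$ then makes $\sigma$ win after the manipulation.

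\textbf{Main obstacle.} The hard part is ensuring that no third ranking $\tau$ beats $\rho$ (truthfully) or $\sigma$ (after misreport). The huge margins already confine attention to rankings with $x_m$ on top. For the remaining $m-1$ alternatives, I would try margins on the disagreement pairs that are a scaled mixture of ``$\rho$-votes'' and ``$\sigma$-votes'': $\alpha$ copies of $\rho$ and $\beta$ copies of $\sigma$ (restricted to $x_1,\dots,x_{m-1}$) with $\alpha>\beta$ large and $(\alpha-\beta)D$ tuned to $g$.

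For a two-ranking profile, the Kemeny minimizers are exactly $\rho$ if $\alpha>\beta$, and any ranking on a geodesic between the two costs more unless it is an endpoint. The remaining check is that the single voter's contribution, at most ${m\choose 2}$, cannot make an intermediate ranking $\tau$ optimal. If a gap of $\pm{m\choose 2}$ is problematic, I would scale all margins and use several voters' worth of slack via $\mathbf{>}$. After verifying that the final outcomes are $\rho$ (truthful) and $\sigma$ (manipulated), the bound ${m\choose 2}-m\leq\gamma_m(f_{\mathit{Kemeny}})$ follows.
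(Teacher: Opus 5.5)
Your bookkeeping of the needed window for $g$ is correct. The gap is in the step you flag as the main obstacle, and it comes from your choice of $\rho$ and $\sigma$. Write $A'=\{x_1,\dots,x_{m-1}\}$. The restriction of $\rho$ to $A'$ is the exact reverse of that of $\sigma$, so $D={m-1\choose 2}$ and ${m\choose 2}-1-m=D-2$. Consequently, $\alpha$ copies of $\rho$ and $\beta$ copies of $\sigma$ induce on $A'$ the same margins as $\alpha-\beta$ copies of $\rho$, and every $\tau$ with $x_m$ on top costs $\beta D+(\alpha-\beta)\Delta(\rho,\tau)$ on $A'$. So your geodesic claim is backwards. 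The cost interpolates linearly along any geodesic from $\rho$ to $\sigma$, every intermediate ranking is cheaper than $\sigma$, and $\sigma$ is the \emph{most} expensive such ranking rather than the runner-up. Moreover, the gap $(\alpha-\beta)D$ is a multiple of $D$, so it never lies in the window $[D-2,D)$ you need.

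No other choice of margins rescues this pair. Let $C(R,\tau)=\sum_j\Delta(\succ_j,\tau)$. Your truthful $\succ_i$ already agrees with $\sigma$ on $D-1$ of the $D$ disputed pairs. Hence any misreport decreases $C(\cdot,\sigma)-C(\cdot,\rho)$ by at most $2$, and the truthful profile must satisfy $C(R,\sigma)-C(R,\rho)\le 2$. For $\tau$ with $x_m$ on top, let $\tilde\tau$ be $\tau$ reversed on $A'$. Then $C(R,\tau)+C(R,\tilde\tau)=C(R,\rho)+C(R,\sigma)$. So if $\rho$ is optimal, all $(m-1)!$ rankings with $x_m$ on top cost at most $C(R,\rho)+2$. ``Every third ranking is worse by a margin'' is therefore impossible, and both outcomes would be decided by near-ties and tie-breaking. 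With the paper's lexicographic tie-breaking, one can check that this configuration is not realizable at all.

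The missing ingredient is the paper's central construction: a weighted tournament in which exactly two rankings $\rhd_1,\rhd_2$ with $\Delta(\rhd_1,\rhd_2)={m\choose 2}-(m-1)$ are optimal. It is explicit for $m=4,5$, e.g.\ $x_1x_2x_3x_4$ and $x_3x_1x_4x_2$ for $m=4$. An $m\to m+2$ induction then scales $T_m$ by a large constant and attaches two new alternatives with tuned edge weights, and McGarvey's construction realizes the tournament as a profile. Next, add voters $\succ_2,\bar\succ_2,\succ_{k-1},\bar\succ_{k-1}$: the neighbours of $\rhd_1$ and $\rhd_2$ on a geodesic, plus their reverses. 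This leaves the margins unchanged. The voter $\bar\succ_2$ has utility $1$ at $\rhd_1$, and by reporting $\bar\rhd_1$ makes $\rhd_2$ the unique winner, gaining utility ${m\choose 2}-m$. Such a pair avoids your problem because its agreement pairs do not leave a reversal-closed family of competing rankings.
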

\begin{proof}
    To prove this proposition, we note that the Kemeny rule can be computed only based on the majority margins $g_{xy}(R)=|\{i\in N\colon x\succ_iy\}|-|\{i\in N\colon y\succ_ix\}|$ for all alternatives $x,y\in A$ in a profile $R$. In more detail, a ranking $\rhd$ minimizes the total Kemeny distance $\sum_{i\in N} \Delta(\succ_i,\rhd)$ if and only if it maximizes $\sum_{x,y\in A\colon x\rhd y} g_{xy}(R)$. This holds because $\sum_{x,y\in A\colon x\rhd y} g_{xy}(R)={m\choose 2} n - 2\sum_{x,y\in A\colon x\rhd y} |\{i\in N\colon y\succ_i x\}|={m\choose 2} n - 2\sum_{x,y\in A}\sum_{i\in N} \mathbb{I}[x\rhd y\land y\succ_i x]={m\choose 2} n-2\sum_{i\in N} \Delta(\succ_i,\rhd)$. Here, $\mathbb{I}[x\rhd y\land y\succ_i x]$ is an indicator function that takes value $1$ if $x \rhd y$ and $y \succ_i x$ and $0$ otherwise. For an easier notation, we define the Kemeny score of an ranking $\rhd$ in a profile~$R$ by $s(R,\rhd)=\sum_{x,y\in A\colon x\rhd y} g_{xy}(R)$. The key insight for our proof is that there is a profile $R^*$ for which two rankings $\rhd_1$ and $\rhd_2$ with $\Delta(\rhd_1,\rhd_2)={m\choose 2}-(m-1)$ maximize the Kemeny score, i.e., $s(R^*,\rhd_1)=s(R^*,\rhd_2)>s(R^*,\rhd)$ for all $\rhd\in\mathcal{R}\setminus \{\rhd_1,\rhd_2\}$. We will construct this profile $R^*$ later on. 
    
    Based on this profile $R^*$, we will construct another profile $\widehat R$ to prove that $\gamma_m(f_{\mathit{Kemeny}})\geq {m\choose 2}-m$. To this end, let $\succ_1,\dots, \succ_k$ (with $k=\Delta(\rhd_1,\rhd_2)+1$) denote a sequence of rankings such that ${\succ_1}=\rhd_1$, ${\succ_k}=\rhd_2$, and for all $i\in \{1,\dots, k-1\}$, the ranking $\succ_{i+1}$ is derived from $\succ_{i}$ by swapping one adjacent pair of alternatives in $\succ_{i}$. Put differently, this sequence transforms $\rhd_1$ and $\rhd_2$ by repeatedly swapping pairs of alternatives and each pair of alternatives is swapped at most once. Hence, it holds for $\succ_2$ and $\succ_{k-1}$ that $\Delta(\succ_2, \rhd_1)=\Delta(\succ_{k-1}, \rhd_2)=1$ and $\Delta(\succ_{k-1}, \rhd_2)=\Delta(\succ_2,\rhd_2)=\Delta(\rhd_1,\rhd_2)-1$. Moreover, this means for the inverse rankings of $\succ_2$ and $\succ_{k-1}$, denoted by $\bar\succ_2$ and $\bar\succ_{k-1}$, that $\Delta(\bar\succ_2, \rhd_1)=\Delta(\bar\succ_{k-1}, \rhd_2)={m\choose 2}-1$ and $\Delta(\bar\succ_2, \rhd_2)=\Delta(\bar\succ_{k-1}, \rhd_1)={m\choose 2}-(\Delta(\rhd_1,\rhd_2)-1)$. Finally, the profile $\widehat R$ is the profile derived from $R^*$ by adding for each ranking ${\succ}\in\{\succ_2,\succ_{k-1},\bar\succ_2,\bar\succ_{k-1}\}$ one voter who reports $\succ$. 
    
    We first note that $s(\widehat R, \rhd)=s(R^*, \rhd)$ for all rankings $\rhd$ since $\succ_2$ and $\bar \succ_2$ as well as $\succ_{k-1}$ and $\bar\succ_{k-1}$ are inverse to each other. Thus, the corresponding voters cancel each other out with respect to the majority margins. This means that the Kemeny rule has to choose either $\rhd_1$ or $\rhd_2$ for $\widehat R$ and we suppose without loss of generality that $\rhd_1$ is selected. Now, let $\widehat R'$ denote the profile derived from $\widehat R$ by letting the voter who reports $\bar \succ_2$ deviate to the ranking that is completely inverse to $\rhd_1$, denoted by $\bar\rhd_1$. Since $\Delta(\bar\rhd_1,\rhd_1)={m\choose2}>\Delta(\bar\succ_2,\rhd_1)$ and $\Delta(\bar \rhd_1,\rhd_2)={m\choose 2}- \Delta(\rhd_1,\rhd_2)<{m\choose 2}-(\Delta(\rhd_1,\rhd_2)-1)=\Delta(\bar\succ_2,\rhd_2)$, it holds that $s(\widehat R',\rhd_1)<s(\widehat R',\rhd_2)$. Further, it still holds that $s(\widehat R', \rhd)<s(\widehat R',\rhd_2)$ for all $\rhd\in \mathcal{R}\setminus \{\rhd_1,\rhd_2\}$. To see this, we note that $\Delta(\bar\succ_2,\bar\rhd_1)=1$ because $\Delta(\succ_2,\rhd_1)=1$, so there is a single pair of alternatives $x,y$ such that $x\mathrel{\bar\succ_2} y$ and $y\mathrel{\bar\rhd_1} x$. Consequently, it holds for every ranking $\rhd$ that $s(\widehat R',\rhd)-s(\widehat R, \rhd)=2$ if $y\rhd x$ and $s(\widehat R',\rhd)-s(\widehat R, \rhd)=-2$ if $x\rhd y$. Moreover, since the score of $\rhd_2$ increases when going from $\bar \succ_2$ to $\bar \rhd_1$, we have that $s(\widehat R',\rhd_2)=s(\widehat R, \rhd_2)+2>s(\widehat R, \rhd)+2\geq s(\widehat R', \rhd)$ for all rankings ${\rhd}\in\mathcal{R} \setminus\{\rhd_1,\rhd_2\}$. This proves that Kemeny's rule chooses $\rhd_2$ for $\widehat R'$. 
    Finally, we note that the deviator has a utility of $u(\bar\succ_2, \rhd_1)={m\choose 2}-\Delta(\bar\succ_2,\rhd_1)=1$ in $\widehat R$ and a utility of $u(\bar\succ_2,\rhd_2)={m\choose 2}- \Delta(\bar\succ_2,\rhd_2)={m\choose 2}-m$ in $\widehat R'$. Hence, the incentive ratio of the Kemeny rule is at least ${m\choose 2}-m$.

    It remains to show that there is indeed a profile $R^*$ and two rankings $\rhd_1$ and $\rhd_2$ such that $s(R^*,\rhd_1)=s(R^*,\rhd_2)>s(R^*,\rhd)$ for all rankings $\rhd\in\mathcal{R}\setminus \{\rhd_1,\rhd_2\}$ and $\Delta(\rhd_1,\rhd_2)={m\choose2}-(m-1)$. To this end, we note that it suffices to specify a matrix containing all majority matrices because the Kemeny rule can be computed only based on these. Moreover, McGarvey's construction shows that every majority margin matrix can be induced by a ranking profile if all majority margins have the same parity \citep{McGa53a,Debo87a}, so it is without loss of generality to focus on these matrices. To improve legibility, we will represent such majority margin matrices via weighted tournaments $T=(A,E,w)$ on the alternatives, where $(x,y)\in E$ if and only if $g_{xy}(R)>0$ and $w(x,y)=g_{xy}(R)$.
    
    Furthermore, we also extend the Kemeny rule and the Kemeny score to such weighted tournaments. Specifically, given a weighted tournament $T=(A,E,w)$, we define the Kemeny score by $s(T,\rhd)=\sum_{(x,y)\in E \colon x\rhd y} w(u,v)$ and the Kemeny rule chooses the ranking that maximizes this score. Given a profile $R$ that induces the weighted tournament $T$, maximizing $s(T,\rhd)$ is equivalent to maximizing $s(R,\rhd)=\sum_{x,y\in A\colon x\rhd y } g_{xy}(R)$. To see this, let $T=(V,E,w)$ denote the weighted tournament induced by $R$ and let $C=\sum_{e\in E} w(e)=\sum_{x,y\in A\colon g_{xy}(R)>0} g_{xy}(R)$ denote the sum of all positive majority margins. Since $g_{xy}(R)=-g_{yx}(R)$ for all $x,y\in A$, we have that 
    \begin{align*}
        \sum_{x,y\in A\colon x\rhd y } g_{xy}(R)
        &=\sum_{x,y\in A\colon x\rhd y\land g_{xy}(R)>0} g_{xy}(R) + \sum_{x,y\in A\colon x\rhd y\land g_{xy}(R)<0} g_{xy}(R)\\
        &= \sum_{x,y\in A\colon x\rhd y\land g_{xy}(R)>0} g_{xy}(R) - \sum_{x,y\in A\colon x\rhd y\land g_{yx}(R)>0} g_{yx}(R)\\
        &= -C +2 \sum_{x,y\in A\colon x\rhd y\land g_{xy}(R)>0} g_{xy}(R)\\
        &= -C + 2\sum_{(x,y)\in E\colon x\rhd y} w(x,y).
    \end{align*}

    Now, to prove the existence of $R^*$, we will proceed inductively on the number of alternatives and consider $m=4$ and $m=5$ as base cases. For these cases, the following weighted tournaments prove our claim. 
    
    \begin{center}
    \begin{tikzpicture}[]
        \node[circle, draw, minimum size=5mm, inner sep=2pt] (1) at (0,2) {$x_1$};
        \node[circle, draw, minimum size=5mm, inner sep=2pt] (2) at (2,2) {$x_2$};
        \node[circle, draw, minimum size=5mm, inner sep=2pt] (3) at (2,0) {$x_3$};
        \node[circle, draw, minimum size=5mm, inner sep=2pt] (4) at (0,0) {$x_4$};

        \draw[->] (1) to node[circle, fill=white, inner sep=1pt] {$4$}  (2);
        \draw[->] (1) to node[circle, fill=white, inner sep=1pt] {$4$} (4);
        \draw[->] (2) to node[circle, fill=white, inner sep=1pt] {$4$} (3);
        \draw[->] (3) to node[circle, fill=white, inner sep=1pt, xshift =0.4cm, yshift=-0.4cm] {$2$}  (1);
        \draw[->] (3) to node[circle, fill=white, inner sep=1pt] {$4$}  (4) ;
        \draw[->] (4) to node[circle, fill=white, inner sep=1pt, xshift =-0.4cm, yshift=-0.4cm] {$2$} (2);
    \end{tikzpicture}
    \hspace{2cm}
    \begin{tikzpicture}[node distance=2cm]

  \node[circle, draw, minimum size=5mm, inner sep=1pt] (1) at (90:2)  {$x_1$};
  \node[circle, draw, minimum size=5mm, inner sep=1pt] (2) at (18:2)  {$x_2$};
  \node[circle, draw, minimum size=5mm, inner sep=1pt] (3) at (306:2) {$x_3$};
  \node[circle, draw, minimum size=5mm, inner sep=1pt] (4) at (234:2) {$x_4$};
  \node[circle, draw, minimum size=5mm, inner sep=1pt] (5) at (162:2) {$x_5$};

  \draw[->] (1) to[]  node[circle, fill=white, inner sep=1pt] {4}  (2);
  \draw[->] (1) to[]  node[circle, fill=white, inner sep=1pt]     {8} (3);
  \draw[->] (1) to[bend left=0]   node[circle, fill=white, inner sep=1pt]     {2}  (4);
  \draw[->] (5) to   node[circle, fill=white, inner sep=1pt] {8} (1);

  \draw[->] (2) to   node[circle, fill=white, inner sep=1pt]     {8} (3);
  \draw[->] (4) to  node[circle, fill=white, inner sep=1pt] {4}  (2);
  \draw[->] (2) to  node[circle, fill=white, inner sep=1pt] {8} (5);

  \draw[->] (3) to   node[circle, fill=white, inner sep=1pt] {8} (4);
  \draw[->] (5) to  node[circle, fill=white, inner sep=1pt]  {2}  (3);
  \draw[->] (4) to  node[circle, fill=white, inner sep=1pt] {8} (5);
\end{tikzpicture}
    \end{center}

    For the weighted tournament $T_4$ on $m=4$ alternatives on the left, it can be checked that precisely ${\rhd_1}=x_1x_2x_3x_4$ and ${\rhd_2}=x_3x_1x_4x_2$ maximize the Kemeny score. To see this, we note that both of these rankings only need to reverse edges with a total weight of $4$, so $s(T_4,\rhd_1)=s(T_4,\rhd_2)=16$, whereas all other rankings need to reverse edges with higher total weight and have thus less score. Further, it holds that $\Delta(\rhd_1,\rhd_2)=3={m\choose2}-(m-1)$, so our claim holds in this case. Similarly, when $m=5$, the Kemeny score is maximized by the rankings ${\rhd_1}=x_1x_2x_3x_4x_5$ and ${\rhd_2}=x_4x_2x_5x_1x_3$ in the weighted tournament $T_5$ on the right. Both of these rankings require us to revert edges with a total weight of $14$, so there total score is $s(T_5,\rhd_2)=46$. By using a case distinction with respect to which edge of the cycle $(x_1,x_3,x_4,x_5)$ is reversed in $T_5$, we can further conclude that every other ranking $\rhd$ has strictly less score, so $\rhd_1$ and $\rhd_2$ are indeed the only possible winning rankings. Further, it holds again that $\Delta(\rhd_1,\rhd_2)=6={m\choose 2}-(m-1)$.

    For our induction step, we assume that there is a weighted tournament $T_m=(A,E, w)$ for $m\geq 4$ alternatives $A=\{x_1,\dots, x_m\}$ for which two rankings $\rhd_1$ and $\rhd_2$ with $\Delta(\rhd_1,\rhd)={m\choose 2}-(m-1)$ maximize the Kemeny score, i.e., $s(T_m,\rhd_1)=s(T_m,\rhd_2)>s(T_m,\rhd)$ for all ${\rhd}\in\mathcal{R}\setminus \{\rhd_1,\rhd_2\}$. Additionally, we will suppose that \emph{(i)} each edge weight $w(e)$ in $T_m$ is non-zero and even, \emph{(ii)} ${\rhd_1}=x_1\dots x_m$, and \emph{(iii)} $\rhd_2$ top-ranks $x_{m-1}$ and second-ranks $x_{m-3}$. It is straightforward to verify that these assumptions holds for our base cases, and it will become clear that they are preserved in the induction step. Given the weighted tournament $T_m$, we will construct another weighted tournament $T_{m'}$ on $m'=m+2$ alternatives $A'=\{x_1,\dots, x_{m+2}\}$. In particular, for this weighted tournament $T_{m'}$, precisely the following two rankings $\rhd_1'$ and $\rhd_2'$ will maximize the Kemeny score: ${\rhd_1'}=x_1\dots x_mx_{m+1}x_{m_2}$ and ${\rhd_2'}$ agrees with $\rhd_2$ on all alternatives in $A$, places $x_{m+1}$ first, $x_{m-1}=x_{m'-3}$ second, and $x_{m+2}$ third. For instance, ${\rhd'_2}=x_5x_3x_6x_1x_4x_2$ when $m=6$ and ${\rhd'_2}=x_7x_5x_8x_3x_6x_1x_4x_2$ when $m=8$. Since $\rhd_1'$ agrees with $\rhd_1$ and $\rhd_2'$ agrees with $\rhd_2$ when restricted to $A$, it holds that $\Delta(\rhd_1',\rhd_2')=\Delta(\rhd_1,\rhd_2)+m+(m-1)={m\choose 2}+m={m+2\choose 2}-(m+1)$, so these rankings satisfy our distance condition. Further, it is easy to check that they satisfy the conditions \emph{(ii)} and \emph{(iii)} for the induction.

    Given $T_m=(A,E,w)$, we construct the weighted tournament $T_{m'}=(A',E',w')$ as follows:
    \begin{itemize}[leftmargin=*]
        \item For all $x_i, x_j\in \{x_1,\dots,x_m\}$, we have $(x_i,x_j)\in E'$ if and only if $(x_i,x_j)\in E$. Further, for all these edges except for $(y_1, y_2) = (x_2, x_3)$ if $m$ is even and $(y_1,y_2)=(x_3,x_4)$ if $m$ is odd, we set $w'(e)=c\cdot w(e)$, where $c$ is a large constant that will be specified later. Further, we set $w'(y_1,y_2)=c\cdot w(y_1,y_2)+2$. 
        \item For all $x_i\in \{x_1,\dots, x_{m-4}, x_{m-2}\}$, we add the edges $(x_{m+2}, x_i)$ and $(x_i, x_{m+1})$ with weight $w'(x_{m+2}, x_i)=w'(x_i, x_{m+1})=2$. 
        \item We add the edges $(x_{m}, x_{m+1})$, $(x_{m+2}, x_{m})$, $(x_{m+2},x_{m-3})$, and $(x_{m-3}, x_{m+1})$ with weights $w'(x_{m}, x_{m+1})=w'(x_{m+2}, x_{m-3})=4m$ and $w'(x_{m+2}, x_m)=w'(x_{m-3}, x_{m+1})=2$.
        \item Lastly, we add the edges $(x_{m-1}, x_{m+2})$, $(x_{m+1}, x_{m-1})$, and $(x_{m+1}, x_{m+2})$ with weights $w'(x_{m-1}, x_{m+2})=w'(x_{m+1}, x_{m+2})=14m$ and $w'(x_{m+1}, x_{m-1})=2$.
    \end{itemize}

    We will choose the constant $c$ so large that the ranking chosen for $T_{m'}$ has to agree either with $\rhd_1$ or $\rhd_2$ when restricted to $A=\{x_1,\dots, x_m\}$. Specifically, we set $c=4+\sum_{(x,y)\in E'\colon \{x,y\}\not\subseteq A} w(x,y)$ and we denote by ${\rhd}|_A$ the restriction of a ranking $\rhd$ on $A'=\{x_1,\dots, x_{m+2}\}$ to the alternatives in $A$. Now, let $(y_1,y_2)=(x_2,x_3)$ if $m$ is even and $(y_1,y_2)$ if $m$ is odd. By construction, it holds that $s(T_{m'}, \rhd)=c \cdot s(T_{m},{\rhd}|_A)+2+\sum_{(x,y)\in E'\colon \{x,y\}\not\subseteq A\land x\rhd y} w'(x,y)$ if $y_1\rhd y_2$ and $s(T_{m'}, \rhd)=c \cdot s(T_{m},{\rhd}|_A)+\sum_{(x,y)\in E'\colon \{x,y\}\not\subseteq A\land x\rhd y} w'(x,y)$ otherwise. Furthermore, it holds by the induction hypothesis that $s(T_m,\rhd_1)=s(T_m,\rhd_2)>s(T_m,\rhd)$ for all other rankings ${\rhd}$ and so $s(T_m,\rhd_1)=s(T_m,\rhd_2)\geq 1+ s(T_m,\rhd)$ because the Kemeny score is always an integer. Hence, if $\rhd'$ and $\rhd''$ are two rankings on $A'$ such that ${\rhd'}|_A\in \{\rhd_1,\rhd_2\}$ and ${\rhd''}|_A\not\in \{\rhd_1,\rhd_2\}$, it holds that $s(T_{m'}, \rhd')>s(T_{m'}, \rhd'')$ because
    \begin{align*}
        s(T_{m'}, \rhd')-s(T_{m'}, \rhd'')&\geq c\cdot (s(T_{m}, {\rhd'}|_A)-s(T_{m}, {\rhd''}|_A))-2-\sum_{(x,y)\in E\colon \{x,y\}\not\subseteq A} w'(x,y)\\
        &\geq c-2-\sum_{(x,y)\in E\colon \{x,y\}\not\subseteq A} w'(x,y)\\
        &=2.
    \end{align*}

    Next, we will derive the rankings on $A'$ that agree with $\rhd_1$ when restricted to $A$ and maximizes the score $s(T_{m}, \rhd)$. To this end, we first compute the score of $\rhd_1'=x_1\dots x_{m+2}$ as 
    \begin{align*}
        s(T_{m'}, \rhd_1')&=c\cdot s(T_m, \rhd_1)+2+w'(x_{m+1}, x_{m+2})+w'(x_{m-1}, x_{m+2}) +w'(x_m, x_{m+1}) + w'(x_{m-3}, x_{m+1}) + \sum_{x_i\in \{x_1,\dots, x_{m-4}, x_{m-2}\}} w'(x_i,x_{m+1})\\
        &=c\cdot s(T_m, \rhd_1) + 2 + 14m + 14m + 4m + 2 + 2(m-3)\\
        &=c\cdot s(T_m, \rhd_1)+34m - 2
    \end{align*}

    Note here that the first $+2$ is due to the fact that $x_2\rhd_1 x_3\rhd_1 x_4$ and that we increased the weight of $(x_2,x_3)$ (if $m$ is even) or $(x_3,x_4)$ (if $m$ is odd) by $2$. Next, to show that $\rhd_1'$ maximizes the Kemeny scores among all rankings $\rhd$ with ${\rhd}|_A={\rhd_1}$, we fix one such ranking $\rhd$ and show that $s(T_{m'},\rhd_1')>s(T_{m'}, \rhd)$. To this end, we observe that the total weight of the edges $(x_i,x_j)$ with $\{x_i,x_j\}\not\subseteq A$ is $\sum_{(x_i,x_j)\in E'\colon \{x_i,x_j\}\not\subseteq A}w'(x_i,x_j)=2\cdot 14m + 2\cdot 4m + (2m-3)\cdot 2=40m-6$, because there are $2m+1$ edges that are not contained in $A$ and only four of these edges have a value other than $2$. As a consequence of this, it holds for $\rhd$ that $s(T_{m'}, \rhd)=c\cdot s(T_{m}, \rhd_1)+2+40m-6 - \sum_{(x,y)\in E\colon y\rhd x\land \{x,y\}\not\in A} w'(x,y)$. If $x_{m+2}\rhd x_{m+1}$ or $x_{m+2}\rhd x_{m-1}$, our observation implies that $s(T_{m'}, \rhd)\leq c\cdot s(T_{m}, \rhd_1)+2+40m-6 -14m<s(T_{m'}, \rhd_1')$. On the other hand, if $x_{m-1}\rhd x_{m+2}$, we have by transitivity that $x_{m-3}\rhd x_{m+2}$ because $x_{m-3}\rhd_1x_{m-1}$. If additionally $x_{m+1}\rhd x_{m}$, it holds that $s(T_{m'}, \rhd)\leq c\cdot s(T_{m}, \rhd_1)+2+40m-6 -8m<s(T_{m'}, \rhd_1')$. Hence, we must have that $x_m\rhd x_{m+1}\rhd x_{m+2}$, which implies that ${\rhd}={\rhd_1'}$. We hence conclude that $\rhd_1'$ indeed uniquely maximizes $S(T_{m'},\rhd)$ among all rankings $\rhd$ with ${\rhd}|_A={\rhd_1}$.
    
    We next repeat the exercise for the rankings $\rhd$ on $A'$ that agree with $\rhd_2$ when restricted to $A$.  To this end, we let $X=\{x_1,\dots, x_{m-4}, x_{m-2}\}$ and first compute the score of the ranking $\rhd_2'$, which ranks $x_{m+1}$ first, $x_{m-1}$ second, and $x_{m+2}$ third. 
    \begin{align*}
        s(T_{m'}, \rhd_2')&=c\cdot s(T_m, \rhd_2) + w'(x_{m+1}, x_{m-1})+w'(x_{m+1}, x_{m+2}) + w'(x_{m-1}, x_{m+2})+w'(x_{m+2}, x_{m-3}) + w'(x_{m+2}, x_m) + \sum_{x_i\in X} w'(x_{m+2}, x_i)\\
        &=c\cdot s(T_m, \rhd_2)+2+14m+14m+4m+2+2(m-3)\\
        &=c\cdot s(T_m, \rhd_2)+34m-2
    \end{align*}

    Note here that $x_3\rhd_2 x_2$ if $m$ is even and $x_4\rhd_2 x_3$ if $m$ is even, so we do not have the "$+2$" from the edge in $(x_2,x_3)$ (resp. $(x_3,x_4)$). Just as for $\rhd_1'$, we next fix a ranking $\rhd$ on $A'$with ${\rhd}|_A={\rhd_2}$ and show that $s(T_{m'}, \rhd_2')>s(T_{m'}, \rhd)$. Analogous to the analysis of $\rhd_1'$, we observe that $s(T_{m'}, \rhd)=c\cdot s(T_m, \rhd_2)+40m-6 - \sum_{(x,y)\in E\colon y\rhd x\land \{x,y\}\not\in A} w'(x,y)$. Hence, if $x_{m+2}\rhd x_{m+1}$ or $x_{m+2}\rhd x_{m-1}$, we can immediately conclude our inequality. So, we assume that $x_{m+1}\rhd x_{m+2}$ and $x_{m-1}\rhd x_{m+2}$. Next, if $x_{m+2}\rhd x_{m-3}$, $\rhd$ must be either $\rhd_2'$ or the ranking that places $x_{m-1}$ first, $x_{m+1}$ second, and $x_{m+2}$ third, because $x_{m-3}$ is the second-best alternative in $\rhd_2$. Further, if $x_{m-1}$ is first ranked, it can be computed that $s(T_{m'}, \rhd)=s(T_{m'}, \rhd_2')-w(x_{m+1}, x_{m-1})=c\cdot s(T_m, \rhd_2)+34m-4$, so it holds that $s(T_{m'}, \rhd)<s(T_{m'}, \rhd_2')$. Hence, we suppose next that $x_{m-3}\rhd x_{m+2}$. If additionally $x_{m+1}\rhd x_m$, it holds that $s(T_{m'}, \rhd)\leq c\cdot s(T_m, \rhd_2)+40m-6 - 8m< s(T_{m'}, \rhd_2')$. We thus assume that $x_m\rhd x_{m+1}$, which implies that $x_{m-1}\rhd x_{m+1}$ and $x_{m-3}\rhd x_{m+1}$ because $x_{m-1}\rhd_2 x_{m-3}\rhd_2 x_m$. Lastly, recall that $X=\{x_1,\dots, x_{m-4}, x_{m-2}\}$, and define $\ell_1=|\{x_i\in X\colon x_i\rhd x_{m+1}\}|$ and  $\ell_2=|\{x_i\in X\colon x_{m+2}\rhd x_{i}\}|$. Since $x_{m+1}\rhd x_{m+2}$, it holds that $\ell_1+\ell_2\geq m-3$. Therefore, we conclude that 
    \begin{align*}
        s(T_{m'}, \rhd)&=c\cdot s(T_m,\rhd_2)+40m-6 - w'(x_{m}, x_{m+1}) - w'(x_{m+1}, x_{m-1}) - w'(x_{m+1}, x_{m-1})\\
        &\quad- \sum_{x_i\in X\colon x_i\rhd x_{m+1}} w'(x_i, x_{m+1}) -  \sum_{x_i\in X\colon x_{m+2}\rhd x_{i}} w'(x_{m+2}, x_{i})\\
        &\leq c\cdot s(T_m,\rhd_2)+40m-6 - 4m -2 -2 -2(m-3)\\
        &=c\cdot s(T_m,\rhd_2)+32m-4.
    \end{align*}
    Hence, it holds in every case that $s(T_{m'}, \rhd_2')>s(T_{m'}, \rhd)$, thereby proving that $\rhd'_2$ uniquely maximizes the Kemeny score among all rankings $\rhd$ with ${\rhd}|_A={\rhd_2}$. Since our computations also show that $s(T_{m'}, \rhd_1')=s(T_{m'}, \rhd_2')$ as $s(T_m,\rhd_1)=s(T_m,\rhd_2)$, the weighted tournament $T_{m'}$ satisfies all our requirements. This completes the proof of the induction step and thus of this proposition.
\end{proof}

Next, we turn to our lower bound for distance scoring rules. 

\begin{proposition}
    Assume $m\geq 3$. The incentive ratio of every distance scoring rule $f_\mathit{dist}$ other than $f_\mathit{Kemeny}$ satisfies ${m\choose 2}-1\leq \gamma_m(f_\mathit{dist})$. 
\end{proposition}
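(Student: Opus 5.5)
The plan is to build an explicit profile in which a voter with utility $1$ can manipulate so that the outcome becomes a ranking at Kemeny distance $1$ from his true ranking. That gives a gain from utility $1$ to utility ${m\choose 2}-1$, so $\gamma_m(f_\mathit{dist})\geq {m\choose 2}-1$. Write $M={m\choose 2}$ and fix the manipulator's true ranking $\succ$. Let $\rhd_1$ be obtained from $\succ$ by swapping one adjacent pair, and let $\rhd_2$ be the inverse of $\rhd_1$. Then
\[
\Delta(\succ,\rhd_1)=1,\qquad \Delta(\succ,\rhd_2)=M-1,\qquad \Delta(\rhd_1,\rhd_2)=M.
\]
The target is a profile $R$ containing the manipulator with $f_\mathit{dist}(R)=\rhd_2$, such that after he reports $\rhd_1$ the outcome becomes $\rhd_1$.

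The remaining voters will report only $\rhd_1$ or $\rhd_2$, say $a$ voters report $\rhd_2$ and $b$ voters report $\rhd_1$. For $\rhd_1$ and $\rhd_2$ themselves, the comparison is simple. Truthfully, the score difference $S(\rhd_1)-S(\rhd_2)$ equals $(a-b)(s(M)-s(0)) + s(1)-s(M-1)$. After the manipulation the term $s(1)-s(M-1)$ is replaced by $s(0)-s(M)$, which is strictly smaller because $s$ is strictly increasing. So there is a window of values of $a-b$, or after scaling the profile and adding further balanced groups, in which $\rhd_2$ wins truthfully and $\rhd_1$ wins after the deviation. I would pick the truthful situation right at the boundary. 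That means choosing $a=b+1$ together with auxiliary voters that make $\rhd_1$ and $\rhd_2$ tie exactly after the deviation. Tie-breaking, or a relabelling of alternatives so that $\rhd_1$ is lexicographically preferred, then decides in favour of $\rhd_1$; the argument is independent of the tie-breaking convention up to this relabelling.

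The main obstacle is to rule out every third ranking $\rhd\notin\{\rhd_1,\rhd_2\}$. This is exactly where the hypothesis that $f_\mathit{dist}\neq f_\mathit{Kemeny}$ enters. Since $s$ is convex and not affine, there is some $k$ with $s(k+1)-s(k)>s(k)-s(k-1)$, and from there on the increments are strictly larger. For any $\rhd$ we have $\Delta(\rhd_1,\rhd)+\Delta(\rhd_2,\rhd)=M$, because $\rhd_1$ and $\rhd_2$ are inverse to each other. So a balanced pair of voters reporting $\rhd_1$ and $\rhd_2$ assigns $\rhd$ the score $s(d)+s(M-d)$, where $d=\Delta(\rhd_1,\rhd)$. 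By convexity this is at least $s(0)+s(M)$, and I expect it to be strictly larger for every $0<d<M$ once the convexity is strict somewhere. For Kemeny all these values tie, which is why the Kemeny bound in Proposition~\ref{prop:KemenyApprox} is weaker. Making the number of balanced pairs large then yields a penalty on all intermediate rankings that dominates the bounded contributions of the manipulator and of the single unbalanced voter.

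Two points remain to be checked carefully. The first is that the strict inequality $s(d)+s(M-d)>s(0)+s(M)$ really holds for all intermediate $d$. If $s$ is affine on an initial segment, I would use a second family of balanced pairs at appropriate distances to handle those $d$. The second is how to hit the exact tie for the two-candidate comparison. If an exact tie is not realizable, I would instead exploit that the truthful-versus-manipulated window is open and scale every group by a large factor $L$, which keeps the required strict inequalities.
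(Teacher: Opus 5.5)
There is a genuine gap, and it sits exactly in the step you call the main obstacle: the convexity inequality points the other way. Writing $d$ and $M-d$ as convex combinations of $0$ and $M$ gives $s(d)+s(M-d)\le s(0)+s(M)$ for every convex $s$. If $s$ is not affine (i.e., $f_\mathit{dist}\neq f_\mathit{Kemeny}$), then moreover $s(\lfloor M/2\rfloor)+s(\lceil M/2\rceil)<s(0)+s(M)$. Since the rule \emph{minimizes} total score, a balanced pair reporting $\rhd_1$ and $\rhd_2$ rewards the rankings halfway between them and makes $\rhd_1,\rhd_2$ its worst rankings. Piling up such pairs therefore pushes the outcome away from $\{\rhd_1,\rhd_2\}$.

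This breaks the whole design, not just a detail of it. Suppose all other voters report $\rhd_1$ or $\rhd_2$. Truthfully $\rhd_2$ must not lose to $\rhd_1$, which needs $(a-b)(s(M)-s(0))\ge s(M-1)-s(1)>0$ (as $M\ge 3$). After the deviation $\rhd_1$ must not lose, which needs $a-b\le 1$. So your own two-candidate computation forces $a=b+1$. The post-deviation profile then consists of $b+1$ balanced pairs. In it, any ranking at distance $\lfloor M/2\rfloor$ from $\rhd_1$ has strictly smaller score than $\rhd_1$, so the manipulation never yields $\rhd_1$.

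Your outer skeleton is the same as the paper's: two antipodal rankings, and a manipulator at distance $1$ from one of them. What is missing is the idea of centering the auxiliary pairs so that $\rhd_1$ lies in the middle rather than at the ends.
\begin{itemize}
\item Let $X$ be the window around $M/2$ on which $x\mapsto s(x)+s(M-x)$ attains its minimum; then $0\notin X$. Add a pair $\{\succ,\bar\succ\}$ for every $\succ$ with $\Delta(\succ,\rhd_1)\in X$.
\item Because $\Delta(\succ,\rhd_2)=M-\Delta(\succ,\rhd_1)$, both $\rhd_1$ and $\rhd_2$ minimize the score of every such pair.
\item Every other ranking $\rhd$ is strictly worse on at least one pair. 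If $\Delta(\rhd,\rhd_1)\in X$, use the pair $\{\rhd,\bar\rhd\}$ itself. Otherwise, take the ranking at distance $\min X$ from $\rhd_1$ on a shortest swap path from $\rhd_1$ through $\rhd$ to $\rhd_2$ (or the symmetric choice with $\rhd_2$), and use its pair.
\item Taking $\lambda$ copies of this profile gives a margin that dominates the bounded contribution of a few extra voters.
\end{itemize}
The paper then adds one voter at distance $1$ from $\rhd_1$ and one at distance $1$ from $\rhd_2$, so that $\rhd_1$ and $\rhd_2$ tie exactly. Whichever of them is chosen, the voter close to the other one has utility $1$, switches to it, and obtains utility $M-1$. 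This also removes your need to relabel alternatives for tie-breaking.

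Your fallback of a ``second family of balanced pairs at appropriate distances'' points in this direction. However, it has to be the main mechanism, not a patch for the case where $s$ is affine on an initial segment.
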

\begin{proof}
Fix a distance scoring rule $f$ for $m$ alternatives other than the Kemeny rule and let $s$ denote its distance scoring rule. To show this proposition, we will follow the same approach as for the Kemeny rule and construct a profile $R^*$ such that that two rankings $\rhd_1$ and $\rhd_2$ with $\Delta(\rhd_1,\rhd_2)={m\choose 2}$ minimize the total score, i.e., $\sum_{i\in N} s(\Delta(\succ_i,\rhd_1))=\sum_{i\in N} s(\Delta(\succ_i,\rhd_2))<\sum_{i\in N}s(\Delta(\succ,\rhd))$ for all ${\rhd}\in\mathcal{R}\setminus \{\rhd_1,\rhd_2)$. For simplicity, we subsequently define the score of a ranking $\rhd$ in a profile $R$ by $s(R,\rhd)=\sum_{i\in N} s(\Delta(\succ_i,\rhd))$. Based on the profile $R^*$, we construct another profile $\widehat R$ as follows. First, we let $\lambda$ denote an integer such that $\lambda \cdot \min_{\rhd\in\mathcal{R}\setminus \{\rhd_1,\rhd_2\}} s(R^*,\rhd) - s(R^*,\rhd_1)> 2(s({m\choose 2}) - s(0))$. Further, we let $\succ_1$ and $\succ_2$ denote two rankings with $\Delta(\succ_1,\rhd_1)=1$ and $\Delta(\succ_2,\rhd_2)=1$, respectively. Since $\Delta(\rhd_1,\rhd_2)={m\choose 2}$, this also means $s(\succ_1,\rhd_2)=s(\succ_2,\rhd_1)={m\choose 2}-1$. 

Now, let $R$ denote the profile that consists of $\lambda$ copies of $R^*$, one voter reporting $\succ_1$, and another voter reporting $\succ_2$. First, we observe that \begin{align*}
    s(R,\rhd_1)&=\lambda\cdot s(R^*,\rhd_1)+s(\Delta(\succ_1,\rhd_1))+s(\Delta(\succ_2,\rhd_1))\\
    &=\lambda\cdot s(R^*,\rhd_2)+s(1)+s\left({m\choose 2}-1\right)\\
    &=\lambda \cdot s(R^*,\rhd_2)+s(\Delta(\succ_2,\rhd_2))+s(\Delta(\succ_1,\rhd_2))\\
    &=s(R,\rhd_2).
\end{align*}

Further, it holds for all $\rhd\in\mathcal{R}\setminus \{\rhd_1,\rhd_2\}$ that 
\begin{align*}
    s(R,\rhd)-s(R,\rhd_1)&=\lambda\cdot (s(R^*,\rhd) - s(R^*,\rhd_1))+s(\Delta(\succ_1,\rhd))-s(\succ_1,\rhd_1)+s(\Delta(\succ_2,\rhd))-s(\succ_2,\rhd_1)\\
    &>2(s\left({m\choose 2}\right)-s(0)) +s(\Delta(\succ_1,\rhd))-s(\succ_1,\rhd_1)+s(\Delta(\succ_2,\rhd))-s(\succ_2,\rhd_1)\\
    &\geq 0.
\end{align*}
Here, the first inequality follows from the choice of $\lambda$ and the second one by the fact that $s(x)<s(x+1)$ for all $x\in \{0,\dots,{m\choose 2}-1\}$, which means that $s({m\choose 2})-s(0)\geq s(y)-s(z)$ for all $y,z\in \{0,\dots,{m\choose 2}\}$. 

By this analysis, it holds that precisely $\rhd_1$ and $\rhd_2$ minimize the total score in $R$. Without loss of generality, we may thus assume that $\rhd_1$ is chosen, which leaves the voter who reports $\succ_2$ with a utility of $u(\succ_2,\rhd_1)={m\choose 2}-\Delta(\succ_2,\rhd_1)=1$. Now, assume that this voter deviates to report $\rhd_2$ instead. For the resulting profile $R'$, it holds that $s(R',\rhd_2)-s(R,\rhd_2)=s(0)-s(1)<0<s({m\choose 2})-s({m\choose 2}-1)=s(R',\rhd_1)-s(R,\rhd_1)$. Since $s(R,\rhd_1)=s(R,\rhd_2)$, this means that $s(R',\rhd_2)<s(R',\rhd_1)$. Further, an analogous argument as for $R$ shows that $s(R',\rhd_2)<s(R,\rhd)$ for all $\rhd\in\mathcal{R}\setminus \{\rhd_1,\rhd_2\}$. Hence, $f$ needs to choose the ranking $\rhd_2$ for $R$. This means that the utility of our voter after the manipulation is $u(\succ_2,\rhd_2)={m\choose 2}-1$, which implies that $\gamma_m(f)\geq {m\choose 2}-1$. 

It remains to construct the profile $R^*$. To this end, we first consider the profile $R^{\succ}$, where one voter reports $\succ$ and another voter reports the inverse ranking $\bar\succ$. It holds for every ranking $\rhd$ that $s(R^{\succ},\rhd)=s(\Delta(\succ,\rhd))+s(\Delta(\bar\succ,\rhd))=s(\Delta(\succ,\rhd))+s({m\choose 2}-\Delta(\succ,\rhd))$. Furthermore, we recall that distance scoring functions satisfy that $s(x+2)-s(x+1)\geq s(x+1)-s(x)$ for all $x\in \{0,\dots,{m\choose 2}$. By chaining these inequalities, it holds that $s(y+1)-s(y)\geq s(x+1)-s(x)$ for all $x,y\in \{0,\dots, {m\choose 2}-1$ with $x<y$. When letting $x\in \{1,\dots, \lfloor{m\choose 2}/2\rfloor\}$, this implies that $s({m\choose 2}-x+1)-s({m\choose 2}-x)\geq s(x)-s(x-1)$ and thus $s({m\choose 2}-x+1)+s(x-1)\geq s(x)+s({m\choose 2}-x)$. For our profile $R^\succ$, this means for all ${\rhd},{\rhd'}\in\mathcal{R}$ that $s(R^\succ,\rhd)\leq s(R^\succ,\rhd')$ when $|\Delta(\succ,\rhd)-{m\choose 2}/2|\leq |\Delta(\succ,\rhd')-{m\choose 2}/2|$. Further, since $f$ is not the Kemeny rule, there must be an index $i\in \{1,\dots, {m\choose 2}-1\}$ such that $s(x+1)-s(x)>s(x)-s(x-1)$, which implies that $s({m\choose 2})+s(0)>s(\lceil{m\choose 2}/2\rceil) + s(\lfloor{m\choose 2}/2\rfloor)$. 
Since $s(x)+s({m\choose 2}-x)\leq s(y)-s({m\choose 2}-y)$ for all $x,y\in \{0,\dots, \lfloor{m\choose 2}\rfloor\}$ with $x\geq y$, we thus conclude that there is an index $\ell\in\{0,\dots, \lfloor{m\choose 2}/2\rfloor-1\}$ such that 
\begin{enumerate}[label=(\arabic*)]
    \item $s({m\choose 2}-x)+s(x)=s({m\choose 2}-y)+s(y)$ for all $x,y\in \{\lfloor {m\choose 2}/2\rfloor-\ell, \dots, \lfloor {m\choose 2}/2\rfloor\}$ and
    \item $s({m\choose 2}-x)+s(x)<s({m\choose 2}-y)+s(y)$ for all $x\in \{\lfloor {m\choose 2}/2\rfloor-\ell, \dots, \lfloor {m\choose 2}/2\rfloor\}$ and $y\in \{0,\dots, \lfloor {m\choose 2}/2\rfloor-\ell-1\}$. 
\end{enumerate}
Now, let $X=\{\lfloor{m\choose 2}/2\rfloor-\ell,\dots, \lceil{m\choose 2}/2\rceil+\ell\}$. From our insights on $s$, it follows that $s(R^\succ,\rhd)=s(R^\succ,\rhd')<s(R^\succ,\rhd'')$ for all $\rhd,\rhd',\rhd''\in\mathcal{R}$ such that $\Delta(\succ,\rhd)\in X$, $\Delta(\succ,\rhd')\in X$, and $\Delta(\succ,\rhd'')\not\in X$.

Next, we fix our desired output ranking $\rhd_1$ and we define $D(\rhd_1)=\{{\succ}\in\mathcal{R}\colon \Delta(\succ,\rhd_1)\in X\}$. Then, we let $R^*$ denote the profile that concatenates the profiles $R^\succ$ for all ${\succ}\in D(\rhd_1)$. Now, we first note that $s(R^*,\rhd_1)=\sum_{{\succ}\in D(\rhd_1)} s(R^\succ,\rhd_1)\leq \sum_{{\succ}\in D(\rhd_1)} s(R^\succ,\rhd)=s(R^*,\rhd)$ for all rankings ${\rhd}\in\mathcal{R}$, because $\Delta(\succ,\rhd_1)\in X$ for all ${\succ}\in D(\rhd_1)$. In particular, by the insights of the previous paragraph, this means that $\rhd_1$ is a minimizer of $s(R^\succ,\rhd)$. Secondly, we note for the inverse ranking of $\rhd_1$, denoted by $\rhd_2$, that $s(R^*,\rhd_2)=s(R^*,\rhd_1)$ because $s(R^\succ,\rhd_1)=s(R^\succ,\rhd_2)$ for all ${\succ}\in\mathcal{R}$. The latter observation is true as $\Delta(\succ,\rhd_1)=\Delta(\bar\succ,\rhd_2)$ and $\Delta(\bar\succ,\rhd_1)=\Delta(\succ,\rhd_2)$ (where $\bar\succ$ is the inverse ranking of $\succ$). Here, we use that, if two rankings disagree on a pair of alternatives, the inverse rankings will also disagree on this pair. 

Lastly, it remains to show that $s(R^*,\rhd_1)<s(R^*,\rhd)$ for all rankings $\rhd\in\mathcal{R}\setminus \{\rhd_1,\rhd_2\}$. To this end, we first recall that $s(R^\succ,\rhd_1)\leq s(R^\succ,\rhd)$ for all ${\succ}\in D(\rhd_1)$ and $\rhd\in\mathcal{R}$. Hence, it suffices to identify a single ranking ${\succ}\in D(\rhd_1)$ where this inequality is strict. If ${\rhd}\in D(\rhd_1)$, we can simply pick $\rhd$ for this. Indeed, since $\Delta(\rhd,\rhd)=0\not\in X$, it holds that $s(R^\rhd,\rhd_1)<s(R^\rhd,\rhd)$. Hence, assume that ${\rhd}\not\in D(\rhd_1)$, which means that $\Delta(\rhd,\rhd_1)<\lfloor{m\choose 2}/2\rfloor-\ell$ or $\Delta(\rhd,\rhd_1)>\lceil{m\choose 2}/2\rceil+\ell$. We focus on the first case, i.e., $\Delta(\rhd,\rhd_1)<\lfloor{m\choose 2}/2\rfloor-\ell$ because the other case is symmetric when exchanging the role of $\rhd_1$ and $\rhd_2$. Now, let $\succ_0,\dots, \succ_{{m\choose 2}}$ denote a sequence of rankings from $\rhd_1$ to $\rhd_2$ through $\rhd$. More formally, these rankings satisfy that $\succ_0=\rhd_1$, $\succ_{m\choose 2}=\rhd_2$, there is $k\in \{1,\dots, {m\choose 2}-1\}$ such that $\succ_k=\rhd$, and for all $i\in \{0,\dots, {m\choose 2}-1\}$, $\succ_{i+1}$ emerges from $\succ_i$ by swapping one pair of alternatives. By the last condition, we know that $k=\Delta(\rhd, \rhd_1)$ because each swap must move the ranking further away from $\rhd_1$ and towards $\rhd_2$. Now, let ${\succ}={\succ_j}$ denote the ranking for the index $j=\lfloor {m\choose 2}/2\rfloor-\ell$. Since our sequence transforms $\rhd_1$ one after another to $\rhd_2$ and $0<k<j$, it holds that $\Delta(\rhd_1,\succ)=\lfloor {m\choose 2}/2\rfloor-\ell>\Delta(\rhd,\succ)$. Hence, we have that $s(R^\succ,\rhd_1)<s(R^\succ,\rhd)$. Moreover, it holds by definition that ${\succ}\in D(\rhd_1)$ because $\Delta(\rhd_1,\succ)\in X$. Since our two cases are exhaustive, we conclude that $s(R^*,\rhd_1)<s(R^*,\rhd)$ for all rankings $\rhd\in\mathcal{R}\setminus \{\rhd_1,\rhd_2\}$, which completes the proof. 
\end{proof}

As the third point of this section, we turn to positional scoring rules.

\begin{proposition}\label{prop:positional}
    For all $m\geq 3$, the incentive ratio of every positional scoring rule $f_\mathit{pos}$ is $\gamma_m(f_{\mathit{pos}})=\infty$.
\end{proposition}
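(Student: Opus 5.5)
The plan is to build a profile $R$ and a voter $i$ such that $f_\mathit{pos}(R)$ is exactly the inverse of $\succ_i$, so $u(\succ_i,f_\mathit{pos}(R))=0$, while some misreport $\succ_i'$ yields an output with $u(\succ_i,f_\mathit{pos}(\succ_i',R_{-i}))\geq 1$. By the convention $\frac{x}{0}=\infty$, this gives $\gamma_m(f_\mathit{pos})=\infty$. The manipulation will exploit an exact score tie that is broken against the voter.

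Fix the positional scoring function $p$. Since $p(1)>p(m)$ and $p$ is non-increasing, there is an index $k\in\{1,\dots,m-1\}$ with $p(k)>p(k+1)$. Let the target output be ${\rhd}=x_1x_2\dots x_m$, chosen so that it is the lexicographically preferred order among the relevant ties. Voter $i$ reports the inverse ${\succ_i}=x_mx_{m-1}\dots x_1$.

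I then pick two alternatives $x=x_j$ and $y=x_{j+1}$ that are adjacent in $\rhd$ with $x$ directly above $y$. I choose $j$ so that voter $i$ ranks $y$ at position $k+1$. Let $w$ be the alternative that $\succ_i$ ranks at position $k$, and choose $j$ so that $w\neq x$. Since $w$ sits directly above $y$ in $\succ_i$, it sits directly below $y$ in $\rhd$, i.e.\ $w=x_{j+2}$. The condition $w\neq x$ then holds automatically, but it forces $y$ to be above the bottom of $\rhd$. This may require relabelling or a slightly different pair when $k$ is extreme, which I will check case by case.

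The misreport $\succ_i'$ swaps $w$ and $y$ in voter $i$'s ranking. This raises the total score of $y$ by $p(k)-p(k+1)>0$, lowers that of $w$ by the same amount, and leaves every other score unchanged.

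It remains to construct the other voters $R_{-i}$ so that the resulting total scores satisfy three conditions. First, $x$ and $y$ are exactly tied, so the tie-break places $x$ above $y$ in $f_\mathit{pos}(R)=\rhd$. Second, all other consecutive score gaps in $\rhd$ exceed $2(p(1)-p(m))$, except possibly around $w$. Third, the drop of $w$ cannot push $w$ above $y$, which is automatic since $w$'s score only decreases. Under the deviation, $y$ then strictly beats $x$, and the new output ranks $y$ above $x$. Voter $i$ prefers $y$ to $x$, so his utility becomes at least $1$. The order involving $w$ may change, but this can only affect the count in the voter's favour or keep it at least $1$, because the pair $(y,x)$ is already agreed.

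I expect the realizability of prescribed score vectors to be the main technical step. I would handle it with rotation blocks. The $m$ cyclic shifts of any ranking give every alternative the same total score, so they act as neutral padding. Adding copies of a single ranking together with chosen transpositions lets me shift score differences between specific alternatives by multiples of $p(k)-p(k+1)$ or of $p(1)-p(m)$.

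Concretely, I would take many copies of $\rhd$ itself to create large gaps between all consecutive alternatives. To equalize $x$ and $y$ exactly while keeping the other gaps large, I would pair each copy of $\rhd$ with the copy in which $x$ and $y$ are swapped; this pairing contributes equally to $x$ and $y$. I would then add voter $i$ and fix the residual difference between $x$ and $y$ by adding suitable rankings that differ only in the positions of $x$ and $y$. If exact equality is awkward for non-integer $p$, I would instead place $x$ and $y$ symmetrically in a sub-profile whose contributions to them are identical by construction.
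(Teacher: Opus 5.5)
Your overall strategy is the paper's: make the output exactly the inverse of voter $i$'s ranking, with one adjacent pair tied and the tie broken against $i$, and let $i$ break the tie by misreporting. The gap is in the step you yourself call the main one. Each copy of ${\rhd}=x_1\dots x_m$ gives $x_\ell$ an advantage of only $p(\ell)-p(\ell+1)$ over $x_{\ell+1}$, and this is $0$ wherever $p$ is flat. Take plurality, $p=(1,0,\dots,0)$, with $m\geq 4$. Copies of $\rhd$ separate $x_1$ from the rest and nothing else. Hence $x_2,\dots,x_{m-1}$ stay tied at $0$, and voter $i$'s own point for his top alternative $x_m$ lifts it above them. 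So $f_\mathit{pos}(R)\neq\rhd$, and $i$'s utility is not $0$. Your $x/y$-swapped pairs have the same defect.

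What is missing is a gadget that favours one alternative uniformly, whatever the shape of $p$. The paper uses the block $R^z$: $m-1$ voters who all rank $z$ first, while the other alternatives cycle through ranks $2,\dots,m$. Then $z$ leads every other alternative by $\delta=\sum_{\ell=2}^m (p(1)-p(\ell))>0$. Combining blocks $R^{x_\ell}$ with chosen multiplicities realizes any order, with prescribed ties and gaps that are multiples of $\delta$. Your rotation blocks could be repaired instead. Replace one of the $m$ cyclic shifts of a ranking by its copy with the entries at positions $k,k+1$ transposed. This raises one alternative and lowers another by $p(k)-p(k+1)$, leaves all other scores equal, and enough such blocks realize any order. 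Copies of $\rhd$ itself do not.

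You also leave the case $k=m-1$ open, and it cannot always be avoided. For veto-type rules, $p(1)=\dots=p(m-1)>p(m)$, it is the only drop of $p$. There $y$ is the bottom of $\succ_i$, hence the top of $\rhd$, so no $x$ exists above it. The condition you mention, that $y$ lies above the bottom of $\rhd$, is automatic; the binding one is $k\leq m-2$. The fix is to use the other half of your swap. Tie $w=x_2$ with the alternative $x_3$ directly below it in $\rhd$, broken as $x_2\rhd x_3$. The swap lowers $w$'s score by $p(m-1)-p(m)>0$, so $x_3$ now strictly beats $x_2$, and voter $i$ prefers $x_3$ to $x_2$. Since $m\geq 3$, one of the two variants is always available: raising $y$ needs $k\leq m-2$, lowering $w$ needs $k\geq 2$. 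This matches the paper's split into $p(1)>p(m-1)$, where the manipulator lifts a tied alternative to the top of his ballot, and $p(1)=p(m-1)$, where he pushes a tied alternative to the bottom.
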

\begin{proof}
    For proving this claim, we fix a positional scoring rule $f$ and let $p$ denote its positional scoring function. To show that $\gamma_m(f_{\mathit{positional}})=\infty$, it suffices to give two profiles $R$ and $R'$ on $m$ alternatives and a voter $i$ such that $R$ and $R'$ only differ in the ranking of voter $i$, this voter obtains a utility of $0$ from the ranking ${\rhd}=f(R)$, and a non-zero utility from the ranking ${\rhd'}=f(R)$. To construct such profiles, we let $R^x$ denote a profile on $m-1$ voters such that \emph{(i)} alternative $x$ is top-ranked by all voters and \emph{(ii)} for every alternative $y\in A\setminus \{x\}$ and each rank $k\in \{2,\dots, m\}$, there is one voter such that $r(\succ_j, y)=k$. For these profiles, the total score of $x$, denoted by $p(R^x,x)$, is $p(R^x,x)=(m-1) p(1)$ and the total score of all other alternatives $y\in A\setminus \{x\}$ is $p(R^x,y)=\sum_{k=2}^m p(k)$. Since $p(1)\geq p(2)\geq\dots\geq p(m)$ and $p(1)>p(m)$, this means that $p(R^x,x)>p(R^x,y)$ for all $y\in A\setminus \{x\}$. Next, we define $c=p(R^x,y)$ and $\delta=p(R^x,x)-p(R^x,y)$ for some pair of alternatives $x,y\in A$ with $x\neq y$ and note that $c$ and $\delta$ are independent of the choice of $x$ and $y$. 

    We proceed with a case distinction and first suppose that $p(1)>p(m-1)$. In this case, we enumerate the alternatives by $x_1,\dots, x_m$ and consider the profile $\widehat R$ that consists of $i$ copies of $R^{x_i}$ for all $i\in \{1,\dots, m-1\}$ and $m-1$ copies of $R^{x_m}$. For instance, if $m=3$, $\widehat R$ consists one copy of $R^{x_1}$, two copies of $R^{x_2}$ and two copies of $R^{x_3}$. In this profile, every alternative $x_i\in A\setminus \{x_m\}$ has a score of $p(\widehat R,x_i)=(\frac{m(m+1)}{2}-1)\cdot c + i\cdot \delta$ and alternative $x_m$ has a score of $p(\widehat R,x_m)=(\frac{m(m+1)}{2}-1)\cdot c + (m-1)\cdot \delta$. Hence, it holds that $p(\widehat R,x_m)=p(\widehat R,x_{m-1})>p(\widehat R,x_{m-2})>\dots>p(\widehat R,x_1)$. 
    Next, let $\lambda\in\mathbb{N}$ denote an integer such that $\delta\cdot \lambda > 2(p(1)-p(m))$ and let $R$ denote the profile that consists of $\lambda$ copies of $\widehat R$, one voter reporting ${\succ_1}=x_1\dots x_m$, and one voter reporting ${\succ_2}=x_1\dots x_{m-2} x_m x_{m-1}$. We first note that $p(R, x_m)=\lambda p(\widehat R,x_m)+p(m-1)+p(m)=\lambda p(\widehat R,x_{m-1})+p(m-1)+p(m)=p( R,x_{m-1})$. Further, by the choice of $\lambda$, it holds that $p(R, x_{m-1})> p(R,x_{m-2})>\dots>p(R,x_1)$ because $p(R, x_i)-p(R,x_{i-1})=\lambda\cdot \delta + p(r({\succ_1},x_i))-p(r({\succ_1},x_{i-1}))+p(r({\succ_2},x_i))-p(r({\succ_2},x_{i-1}))\geq \lambda\cdot \delta - 2(p(1)-p(m))>0$ for all $i\in \{2,\dots, m-1\}$. Lastly, we suppose without loss of generality that $f$ chooses the ranking ${\rhd}=x_m\dots x_1$, which means that the voter reporting $\succ_1$ obtains a utility of $0$. On the other hand, if this voter deviates to report ${\succ_1'}=x_{m-1} x_1\dots x_{m-2}x_m$, it holds for the resulting profile $R'$ that $p(R',x_{m-1})>p(R,x_{m-1})=p(R,x_m)=p(R',x_{m})$ because $p(1)>p(m-1)$. This means that $x_{m-1}\rhd' x_m$ for the ranking chosen for $R'$ and therefore $u(\succ_1, f(R'))>0$. This completes our proof in this case. 

    Secondly, we suppose that $p(1)=p(m-1)$. Since $p(1)\geq p(2)\geq\dots\geq p(m-1)$, this means that $p(1)=p(2)=\dots=p(m-1)$. Moreover, because $p(1)>p(m)$, we conclude that $p(m-1)>p(m)$. In this case, we consider the profile $\widehat R$ that consists of one copy of $R^{x_1}$ and $i-1$ copies of $R^{x_i}$ for all $i\in \{2,\dots, m\}$. Hence, it holds in this profile that $p(\widehat R,x_1)=(\frac{m(m-1)}{2}+1)c+\delta$ and $p(\widehat R,x_i)=(\frac{m(m-1)}{2}+1)c+(i-1)\delta$ for all $i\in \{2,\dots, m\}$. This shows that $p(\widehat R,x_m)>\dots>p(\widehat R,x_{2})=p(\widehat R,x_1)$. Further, we let $\lambda$ again denote an integer such that $\delta\cdot \lambda > 2(p(1)-p(m))$ and define $R$ as the profile that consists of $\lambda$ copies of $\widehat R$, one voter reporting ${\succ_1}=x_1\dots x_m$, and another voter reporting ${\succ_2}=x_2x_1x_3\dots x_m$. Just as in the last case, it can still be shown that $p(R,x_m)>\dots>p(R,x_{2})=p(R,x_1)$. Without loss of generality, we suppose that $f(R)={\rhd}=x_m\dots x_1$, which means that the voter reporting $\succ_1$ obtains a utility of $0$. Lastly, let $R'$ denote the profile where this voter deviates to the ranking ${\succ_1'}=x_1x_3\dots x_mx_2$. It holds for this profile that $p(R',x_1)=p(R,x_1)$ as the manipulator does not change the position of this alternative and $p(R',x_2)<p(R,x_2)$ because $p(m)<p(2)$. Hence, we have now that $p(R',x_1)>p(R',x_2)$ which implies that $x_1\rhd'x_2$ for the ranking $\rhd'=f(R')$. This means that $u(\succ_1,\rhd')>0$ and thus $u(\succ_1,\rhd')/u(\succ_1,\rhd)=\infty$. 
\end{proof}

As the last point of this appendix, we turn to the minimal compromise rule defined in \Cref{rem:mincomp}. To this end, we recall that the min score of an alternative $x$ in a profile $R$ is $s_{\min}(R,x)=\min_{i\in N} m-r(\succ_i,x)$. Then, the minimal compromise rule sorts the alternatives in decreasing order of their min scores, with ties broken lexicographically.

\begin{proposition}
    For all $m\geq 4$, the incentive ratio of the minimal compromise rule $f_{\min}$ is $\gamma_m(f_{\min})=m-2$. 
\end{proposition}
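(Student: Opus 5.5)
We prove the two bounds $\gamma_m(f_{\min})\geq m-2$ and $\gamma_m(f_{\min})\leq m-2$ separately. Both rest on one structural fact about the min score. Fix a profile $R$ and a voter $i$, and let $t_x=\min_{j\neq i}(m-r(\succ_j,x))$ be the cap that the other voters impose on alternative $x$. Then $s_{\min}(R,x)=\min(t_x,\,m-r(\succ_i,x))$. Consequently, by changing his ranking voter $i$ can only move $s_{\min}(\cdot,x)$ within $[0,t_x]$. He can never push an alternative above its cap; he can only lower an alternative by demoting it in his own report. So any manipulation acts by demoting alternatives that the manipulator's true ranking places high, in order to push them below other alternatives. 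We will also fix the lexicographic tie-breaking order throughout, since it matters for equal min scores.

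\textbf{Lower bound.} We will construct, for every $m\geq 4$, a small profile in which the truthful outcome gives voter $i$ utility $1$ and some misreport gives him utility $m-2$.
\begin{itemize}
\item The other voters are chosen to fix caps $t_x$ so that, under truthful reporting, the min-score order together with tie-breaking is almost the inverse of $\succ_i$.
\item The plan is to give $\succ_i$'s bottom alternatives large caps and $\succ_i$'s top alternatives caps that are tied. The ties are then resolved by tie-breaking against $\succ_i$, leaving only one agreed pair.
\item Voter $i$ then demotes a single alternative (for instance his top one, which is tied with everything else) to the bottom. This breaks the ties and reverses most of the order, reaching utility $m-2$.
\end{itemize}
I expect a concrete instance with about three voters on $m$ alternatives, with the entries checked directly. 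We will try $m=4$ first and then extend by appending alternatives in a fixed pattern.

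\textbf{Upper bound.} We must show $u(\succ_i,f(R'))\leq (m-2)\,u(\succ_i,f(R))$ for every misreport, and also that utility $0$ can never be improved.
\begin{itemize}
\item \emph{Truthful utility $0$ is stuck.} If $u(\succ_i,f(R))=0$, the outcome is exactly the inverse of $\succ_i$. In particular $i$'s top alternative $a$ is ranked last, so $s_{\min}(R,a)$ is minimal. Any change by $i$ only lowers scores of alternatives that he demotes. We will argue that this cannot lift any alternative above another that it currently trails while keeping agreement with $\succ_i$, so utility $0$ persists. This gives the needed convention $0/0=1$.
\item \emph{Charging argument for positive utility.} When the truthful utility is $u\geq 1$, we will show that each manipulation gains at most $m-2$ agreed pairs per truthfully agreed pair. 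The idea is that every newly agreed pair $(x,y)$ with $x\succ_i y$ must come from lowering $y$, and lowering $y$ is only possible if $y$ was ranked higher in $i$'s report. We will charge such pairs to alternatives whose relative order is caused by a truthful agreement, and we expect each such alternative to be charged at most $m-2$ times.
\end{itemize}

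\textbf{Main obstacle.} The main obstacle is making this charging argument tight. In particular, we must handle tie-breaking and also the pairs that the manipulation destroys. I would first try to show that the truthful utility is at least the number $k$ of alternatives whose min score is determined by voter $i$ (that is, with $m-r(\succ_i,x)<t_x$), minus a correction. In parallel, we would bound the manipulated utility by $(m-2)$ times the same quantity, and verify the extremal case against the lower-bound profile.
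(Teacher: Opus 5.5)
\textbf{Both halves have genuine gaps.}

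\emph{Lower bound.} The one concrete move you propose provably fails, and so does the plan to use about three voters. By your own cap observation, if $x\succ_i y$ but $y\rhd x$ in the truthful outcome, then $s_{\min}(R,x)=t_x$. Otherwise $s_{\min}(R,x)=m-r(\succ_i,x)>m-r(\succ_i,y)\geq s_{\min}(R,y)$, which would put $x$ above $y$. So such a pair can turn in $i$'s favour only if the min score of $y$ drops. Demoting $i$'s top alternative lowers only that alternative, and nothing beats it in $\succ_i$, so this never helps. Useful demotions concern alternatives that $i$ ranks \emph{low} but the output ranks high.

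There is also a voter-count obstruction. Every $x$ beats, both in $\succ_i$ and in the output, the $s_{\min}(R,x)$ alternatives $y$ with $m-r(\succ_i,y)<s_{\min}(R,x)$. Hence $u(\succ_i,f_{\min}(R))\geq\sum_{x\in A}s_{\min}(R,x)$. Truthful utility $1$ therefore requires $m-1$ alternatives with min score $0$, and thus at least $m-1$ voters. More generally, with $n$ voters the ratio is at most ${m\choose 2}/(m-n)$, which is below $m-2$ for $n=3$ and $m\geq 8$.

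A working instance is the paper's. Take tie-breaking $x_1>\dots>x_m$, ${\succ_i}=x_{m-1}\dots x_2x_mx_1$, and further voters bottom-ranking $x_2,\dots,x_{m-1}$. Truthfully only $x_m$ has positive min score, the output is $x_mx_1\dots x_{m-1}$, and $i$'s utility is $1$. If $i$ bottom-ranks $x_m$ (his second-to-last alternative), the output becomes $x_1\dots x_m$ and his utility is $m-2$.

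\emph{Upper bound.} The utility-$0$ step rests on a false claim: $i$ can also \emph{raise} alternatives whose score his own report caps. The correct reason is different. Utility $0$ puts $i$'s bottom alternative $x^*$, which has min score $0$, first in the output. So all min scores are $0$, every $x\neq x^*$ has $t_x=0$, and no report of $i$ changes the outcome.

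The charging step is not yet an argument, and its stated target is wrong. Bounding the manipulated utility by $(m-2)$ times a count of alternatives ``determined by $i$'' fails whenever most agreed pairs come from alternatives capped by other voters, since those pairs survive any manipulation. Also, $m-2$ new pairs per truthful pair would only give $m-1$.

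Write $u,u'$ for the truthful and manipulated utility. What you must bound is the \emph{net gain} $u'-u$ by $(m-3)$ times something at most $u$:
\begin{itemize}
\item By the observation above, every new agreement $(x,q)$ is charged to an alternative $q$ whose min score drops. So $s_{\min}(R,q)>0$, and there are at most $\ell\leq\sum_x s_{\min}(R,x)\leq u$ such $q$.
\item At each such $q$ at most $m-2$ pairs are gained, since neither $q$ nor $x^*$ can count.
\item Unless $x^*$ ends last in the manipulated outcome, the net change at $q$ is at most $m-3$. Either $x^*$ overtakes $q$, destroying the agreed pair $(q,x^*)$, or some alternative lies below $x^*$, hence below $q$, and cannot be a gain.
\end{itemize}
This yields $u'\leq u+\ell(m-3)\leq(m-2)u$.

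If $x^*$ ends last, it was already last truthfully, so $u\geq m-1$. Meanwhile the $j$-th alternative of the new outcome gains at most $j-1$ pairs. This gives a ratio of at most $m/2\leq m-2$, and it is exactly here that $m\geq 4$ is needed. The paper additionally first reduces, by adding a clone of the manipulator in the variable electorate, to deviations that weakly decrease every min score.
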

\begin{proof}
    To prove this proposition, we will show that $\gamma_m(f_{\min})\geq m-2$ and $\gamma_m(f_{\min})\leq m-2$ when $m\geq 4$.\medskip

    \textbf{Claim 1: $\gamma_m(f_{\min})\geq m-2$.\hspace{0.5cm}}
    First, to show our lower bound, we construct a profile $R$ such that a voter $i$ obtains a utility of $1$ when voting truthfully and of $m-2$ when voting dishonestly. To this end, we suppose that the tie-breaking order $>$ is given by $x_1>x_2>\dots>x_m$. Now, the ranking of our manipulator will be ${\succ_i}=x_{m-1}\dots x_2x_mx_1$. Further, for every alternative $x_i\in A\setminus \{x_m\}$, there is one voter in $R$ who reports a ranking where $x_i$ is bottom-ranked. By this definition, it follows that $s_{\min}(R,x_i)=0$ for all $x_i\in A\setminus \{x_m\}$ because each such alternative is bottom-ranked by one voter. On the other hand, $s_{\min}(R,x_m)=1$ because $r(\succ_j,x_m)\leq m-1$ for all voters $j\in N$ and $r(\succ_i,x_m)= m-1$. Using our tie-breaking, it follows that the minimal compromise rule returns the ranking ${\rhd}=x_mx_1\dots x_{m-1}$. On the other hand, if voter $i$ bottom-ranks $x_m$, every alternative has a min score of $0$, because every alternative is bottom-ranked by a voter. Hence, in the corresponding profile $R'$, the minimal compromise rule picks the ranking ${\rhd'}=x_1\dots x_m$ by our tie-breaking assumption. Finally, we note that $\frac{u(\succ_i, \rhd')}{u(\succ_i,\rhd)}=\frac{m-2}{1}=m-2$, thus showing our lower bound.\medskip

    \textbf{Claim 2: $\gamma_m(f_{\min})\leq m-2$.\hspace{0.5cm}} To prove our upper bound, we will proceed in multiple steps. Firstly, we will show that we can restrict our analysis to deviations such that the min scores of all alternatives weakly decrease, because the maximal utility gain is attained with such a deviation. Secondly, we prove that the sum of the min scores is a lower bound of the utility of every voter. Based on these insights, we will prove this claim in the last step.\medskip

    \emph{Step 1:} First, we will show that it suffices to focus on deviations that weakly reduce the min scores of all alternatives. 
    To prove this claim, let $R$ and $R'$ denote two profiles and $i$ a voter such that $R$ and $R'$ only differ in the ranking of voter $i$. 
    We will next show that there are two other profiles $\widehat R$ and $\widehat R'$ such that \emph{(i)} ${\widehat \succ_i}={\succ_i}$ and ${\widehat\succ_i'} = {\succ_i'}$ (i.e., voter $i$ reports the same ranking in $\widehat R$ and $\widehat R'$ as in $R$ and $R'$, respectively), 
    \emph{(ii)} $\widehat R$ differs from $\widehat R'$ only in the ranking of voter $i$, 
    \emph{(iii)} $s_{\min}(\widehat R',x)\leq s_{\min}(\widehat R,x)$ for all $x\in A$, and 
    \emph{(iv)} $f_{\min}(R)=f_{\min}(\widehat R)$ and $u(\succ_i, f_{\min}(\widehat R'))\geq u(\succ_i, f_{\min}(R'))$. This means that the incentive ratio of $f_{\min}$ is maximized when voters only decrease the min scores. 
    Now, to prove this claim, we define $\widehat R$ and $\widehat R'$ as the profiles derived from $R$ and $R'$ by adding a new voter who reports $\succ_i$. 
    Clearly, the resulting profiles $\widehat R$ and $\widehat R'$ satisfy our conditions \emph{(i)} and \emph{(ii)} by construction. 
    Further, it holds for all $x\in A$ that $s_{\min}(\widehat R',x)=\min(m-r({\succ_i},x), m-r({\succ_i'},x), \min_{j\in N\setminus \{i\}} m-r({\succ_j},x))\leq \min(m-r({\succ_i},x), \min_{j\in N\setminus \{i\}} m-r({\succ_j},x))=s_{\min}(\widehat R,x)$, which proves condition \emph{(iii)}. Moreover, since cloning rankings does not affect min scores, we have that $s_{\min}(R,x)=s_{\min}(\widehat R,x)$ for all $x\in A$ and thus $f_{\min}(R)=f_{\min}(\widehat R)$. 

    As the last point, we need to show that $u(\succ_i, f_{\min}(\widehat R'))\geq u(\succ_i, f_{\min}(R'))$. To ease notation, we set ${\widehat\rhd}= f_{\min}(\widehat R')$ and ${\rhd}=f_{\min}(R')$ for the rest of this step. 
    Furthermore, we define for every alternative $x\in A$ the set $S(x)=\{y\in A\colon x\rhd y\land y\mathrel{\widehat\rhd} x\}$ and note that $\Delta(\rhd,\widehat\rhd)=\sum_{x\in A} |S(x)|$. Now, we fix an alternative $x\in A$ and analyze the set $S(x)$. To this end, we first observe that $s_{\min}(\widehat R',y)\leq s_{\min}(R',y)$ for all $y\in A$ because $\widehat R'$ arises from $R'$ by adding a new voter. Hence, if $s_{\min}(\widehat R',x)= s_{\min}(R',x)$, it holds that $S(x)=\emptyset$ because $x\rhd y$ implies $x \widehat \rhd y$ for all $y\in A$. In more detail, our assumption means that $s_{\min}(\widehat R',x)= s_{\min}(R',x)\geq s_{\min}(R',y)\geq s(\widehat R',y)$ for all $y$ with $x\mathrel{\rhd} y$. Moreover, if this inequality is tight, then $s_{\min}(R',x)= s_{\min}(R',y)$ and $x\rhd y$ implies that $x$ is favored lexicographically to $y$. 
    Next, we suppose that $s_{\min}(\widehat R',x)< s_{\min}(R',x)$, which means that $s_{\min}(\widehat R',x)=m-r(\succ_i,x)$. By definition of $S(x)$, it holds that $y\mathrel{\widehat\rhd} x$ and thus $s_{\min}(\widehat R',y)\geq s_{\min}(\widehat R',x)$ for all $y\in S(x)$. This inequality implies that $y\succ_ix$ for all $y\in S(x)$ because $s_{\min}(\widehat R',x)=m-r(\succ_i,x)$. In particular, if $x\succ_i y$, it would hold that $s_{\min}(\widehat R',y)\leq m-r(\succ_i,y)<m-r(\succ_i,x)=s_{\min}(\widehat R',x)$. By combining our two cases, we infer for all alternatives $x,y\in A$ with $x\rhd y$ and $y\mathrel{\widehat\rhd} x$ that $y\succ_i x$. This means that $\Delta(\succ_i,\widehat\rhd)=\Delta(\succ_i,\rhd)-\Delta(\rhd,\widehat\rhd)\leq \Delta(\succ_i,\rhd)$ and thus $u(\succ_i, \widehat\rhd)\geq u(\succ_i, \rhd)$.\medskip

    \emph{Step 2:} Next, we will show that the sum of all min scores is a lower bound for the utility of every voter in the considered profile, i.e., $u(\succ_i,f_{\min}(R))\geq \sum_{x\in A} s_{\min}(R,x)$. To see this, fix a profile $R$ and a voter $i$, and let ${\rhd}=f_{\min}(R)$ denote the ranking chosen by the minimal compromise rule for $R$. We further define the utility of an alternative $x\in A$ by $u(x,\succ_i,\rhd)=|\{y\in A\colon x\succ_i y\land x\rhd y\}|$ and note that $u(\succ_i,\rhd)={m\choose 2}-\Delta(\succ_i,\rhd)=|\{(x,y)\in A^2\colon x\succ_i y\land x\rhd y\}|=\sum_{x\in A} u(x,\succ_i,\rhd)$. Finally, fix an alternative $x$; we will show that $u(x,\succ_i,\rhd)\geq s_{\min}(R,x)$. If $s_{\min}(R,x)=0$, this holds trivially, so we suppose that $s_{\min}(R,x)>0$. In this case, let $Y=\{y\in A\colon m-r(\succ_i,y)<s_{\min}(R,x)\}$ and note that $|Y|=s_{\min}(R,x)$. Further, it holds for all $y\in Y$ that $x\succ_i y$ and $x\rhd y$ because $s_{\min}(R,y)\leq m-r(\succ_i,y)<s_{\min}(R,x)\leq m-r(\succ_i,x)$. This entails that $u_i(x,\succ_i,\rhd)\geq|Y|=s_{\min}(R,x)$, thus proving this step.\medskip

    \emph{Step 3:} Lastly, we will prove our upper bound. To this end, let $R$ and $R'$ denote two profiles on $m$ alternatives that differ only in the ranking of a single voter $i$ and suppose that $R$ and $R'$ maximize the incentive ratio of the minimal compromise rule. 
    Further, we let ${\rhd}=f_{\min}(R)$ and ${\rhd'}=f_{\min}(R')$. 
    By Step 1, we may assume that $s_{\min}(R',x)\leq s_{\min}(R,x)$ for all $x\in A$. 
    In turn, this implies that $\frac{u(\succ_i, \rhd')}{u(\succ_i, \rhd)}\neq\infty$. 
    In particular, if $u(\succ_i, \rhd)=0$, then $\rhd$ has to top-rank the bottom-ranked alternative $x^*$ of $\succ_i$. 
    However, it holds that $s_{\min}(R,x^*)=m-r(\succ_i,x^*)=0$ and thus also that $s_{\min}(R,x)=0$ for all $x\in A$ because $f_{\min}$ sorts the alternatives in decreasing order of their min scores. 
    Finally, because $s_{\min}(R',x)\leq s_{\min}(R,x)$ for all $x\in A$, we conclude that $s_{\min}(R',x)=0=s_{\min}(R,x)$ for all $x\in A$. 
    This implies that ${\rhd}={\rhd'}$, so a voter with utility $0$ cannot manipulate. 

    By Step 1, we can also conclude that no voter can manipulate if all min scores are $0$. 
    Hence, suppose that $s_{\min}(R,x)>0$ for at least one alternative $x$. 
    Similar to Step 1, we define the sets $S^+(x)=\{y\in A\colon x\rhd y\land y\rhd'x \land y\succ_i x\}$ and $S^-(x)=\{y\in A\colon x\rhd y\land y\rhd'x \land x\succ_i y\}$. We note for these sets that $\{(x,y)\in A^2\colon x\rhd y \land y\rhd 'x\}=\{(x,y)\in A^2\colon y\in S^+(x)\}\cup \{(x,y)\in A^2\colon y\in S^-(x)\}$, so $\Delta(\rhd,\rhd')=\sum_{x\in A} |S^+(x)|+|S^-(x)|$. Further, the set $\{(x,y)\in A^2\colon y\in S^+(x)\}$ contains all pairs of alternatives on which $\rhd$ and $\rhd'$ disagree and that move $\rhd$ closer to $\succ_i$, whereas $\{(x,y)\in A^2\colon y\in S^-(x)\}$ contains the pairs of alternatives on which $\rhd$ and $\rhd'$ disagree and that move $\rhd$ further away from $\succ_i$. In particular, this means that $\Delta(\succ_i,\rhd')=\Delta(\succ_i,\rhd)+\sum_{x\in A} |S^-(x)|-|S^+(x)|$ and thus $u(\succ_i,\rhd')=u(\succ_i,\rhd)+\sum_{x\in A} |S^+(x)|-|S^-(x)|$.

    We will next aim to bound the value $|S^+(x)|-|S^-(x)|$ for every alterative $x$. To this end, we first note that if $s_{\min}(R,x)=s_{\min}(R',x)$ for some alternative $x$, then $S^+(x)=S^-(x)=\emptyset$. The reason for this is that if $s_{\min}(R,x)=s_{\min}(R',x)$, then $s_{\min}(R',x)=s_{\min}(R,x)\geq s_{\min}(R,y)\geq s_{\min}(R',y)$ for all $y$ with $x\rhd y$. Hence, we infer that $x\rhd' y$ for all such $y$, so there are no alternatives $x,y$ such that $x\rhd y$ and $y\rhd' x$. This implies that $|S^+(x)|-|S^-(x)|=0$ for all $x\in A$ with $s_{\min}(R,x)=0$. 
    
    Next, we turn to alternatives $x$ with $s_{\min}(R,x)>0$. In this case, let $x^*$ again denote the bottom-ranked alternative in $\succ_i$ and note that $x\rhd x^*$ as $s_{\min}(R,x)>0$. Now, if $x^*\rhd' x$, it holds that $|S^+(x)|\leq m-2$ (as $x^*\not\in S^+(x)$ and $x\not\in S^+(x)$) and $|S^-(x)|\geq 1$ (as $x^*\in S^-(x)$) and thus $|S^+(x)|-|S^-(x)|\leq m-3$. Similarly, if $x\rhd' x^*$ and there is another alternative $y$ with $x^*\rhd' y$, it holds that $|S^+(x)|-|S^-(x)|\leq m-3$ since $x\not\in S^+(x)$, $x^*\not\in S^+(x)$, and $y\not\in S^+(x)$. To conclude, if $x^*$ is not bottom-ranked in $\rhd'$, then $|S^+(x)|-|S^-(x)|\leq m-3$ for all $x\in A$ with $s_{\min}(R,x)>0$. Hence, when letting $\ell$ denote the number of alternatives with $s_{\min}(R,x)>0$, we derive that 
    \[\frac{u(\succ_i,\rhd')}{u(\succ_i,\rhd)}=\frac{u(\succ_i,\rhd)+\sum_{x\in A} |S^+(x)|-|S^-(x)|}{u(\succ_i,\rhd)}\leq \frac{u(\succ_i,\rhd)+\ell(m-3)}{u(\succ_i,\rhd)}\leq \frac{\ell+\ell(m-3)}{\ell}=m-2.\]
    Here, the first equality uses the definition of $S^+(x)$ and $S^-(x)$ and the second uses our insights regarding $|S^+(x)|-|S^-(x)|$. The third step holds because $\ell\leq \sum_{x\in A} s_{\min}(R,x)\leq u(\succ_i,\rhd)$, where the last inequality follows by Step 2. Hence, if $\rhd'$ does not bottom-rank $x^*$, our upper bound on the incentive ratio holds.

    As the second case, assume that $x^*$ is bottom-ranked in $\rhd'$. Since $s_{\min}(R',x)\leq s_{\min}(R,x)$ for all $x\in A$, it holds that $\{x\in A\colon s_{\min}(R,x)=0\}\subseteq\{x\in A\colon s_{\min}(R',x)=0\}$. By our lexicographic tie-breaking, this means that $x^*$ is also bottom-ranked in $\rhd$ and so $S^+(x^*)=S^-(x^*)=\emptyset$. Further, let $y_i$ denote the $i$-th best alternative in $\rhd'$, i.e., $r(y_i, \rhd')=i$. By definition of the sets $S^+(x)$ and $S^-(x)$, it holds for alternatives $y_i$ with $i\in \{1,\dots, m-1\}$ that $|S^+(y_i)|-|S^-(y_i)|\leq |S^+(y_i)|\leq (i-1)$ because there are only $i-1$ alternatives $x$ with $x\rhd' y_i$. Lastly, since $x^*$ is bottom-ranked in $\succ_i$ and $\rhd$, we conclude that $u(\succ_i,\rhd)\geq m-1$ and thus
    \[\frac{u(\succ_i,\rhd')}{u(\succ_i,\rhd)}=\frac{u(\succ_i,\rhd)+\sum_{x\in A} |S^+(x)|-|S^-(x)|}{u(\succ_i,\rhd)}\leq\frac{u(\succ_i,\rhd)+\sum_{i=1}^{m-1} (i-1)}{u(\succ_i,\rhd)}\leq \frac{(m-1) + (m-2)(m-1)/2}{m-1}=\frac{m}{2}\leq m-2.\]
    This chain of (in)equalities follows analogous to the last case, except for the last step where we use that $m\geq 4$ implies $\frac{m}{2}\leq m-2$. 
\end{proof}

\section{Proof of Proposition 1}\label{app:mainproof}

In this section, we will prove one of the base cases of our main impossibility theorem: if there are $n=2$ voters and $m=5$ alternatives, no anonymous SWF satisfies both unanimity and strategyproofness. We will prove this statement by contradiction and hence assume that there is an SWF $f$ for the given numbers of voters and alternatives that satisfies our axioms. To derive a contradiction, we will subsequently reason about numerous profiles and show that, regardless of which outcomes we choose at certain profiles, strategyproofness must be violated. We start by considering the profile $R^*$ shown below. 
\begin{center}
\begin{tabular}{lll}
    $R^*$: & $abcde$ & $acbed$
    \end{tabular}
\end{center}

We first note that for this profile, it holds that $f(R^*)\in \{abcde, abced, acbde, acbed\}$ because of unanimity. Moreover, an analogous statement holds for all profiles $R'=\pi(R^*)$ that are derived by permuting the alternatives in $R^*$ according to a bijection $\pi:A\rightarrow A$. In more detail, it is easy to see that, for each permutation $\pi:A\rightarrow A$, it holds that $f(\pi(R^*))\in \{\pi(abcde), \pi(abced), \pi(acbde), \pi(acbed)\}$ because unanimity does not depend on the names of the alternatives.

As the first step of our proof, we will show that there is a permutation $\pi:A\rightarrow A$ such that $f(\pi(R^*))\in \{\pi(abcde), \pi(acbed)\}$. Assume for contradiction that this is not true, which means that $f(\pi(R^*))\in \{\pi(abced), \pi(acbde)\}$ for all permutations $\pi:A\rightarrow A$. In particular, this means that $f(R^*)\in \{abced, acbde\}$. We further note that $b$ and $c$ as well as $c$ and $d$ are symmetric in $R^*$, i.e., it holds that $\pi^*(R^*)=R^*$ for the permutation $\pi^*$ given by $\pi^*(a)=a$, $\pi^*(b)=c$, $\pi^*(c)=b$, $\pi^*(d)=e$, and $\pi^*(e)=d$. Consequently, we can assume that $f(R^*)=acbde$; the case that $f(R^*)=abced$ follows by simply permuting all profiles and rankings in the subsequent argument with respect to $\pi^*$. 

In the following table, we show that our assumptions are in conflict with each other as no feasible outcome remains for the profile $R^4$. Note that we display this simple derivation in the highly compressed form that is used throughout this section. In particular, we will present all proofs in tabular forms, where each row consists of a profile marker (left most column), the two rankings that make up the profile (second and third column), and all feasible outcomes. We note that the feasible outcomes are either determined by our assumptions (as, e.g., for $R^*$ in the following table) or correspond to the set of rankings that satisfy unanimity for the given profile (as, e.g., for $R^2$ in the following table). Moreover, profiles may appear multiple times in our derivations as we may infer additional information about the possible outcomes.

Based on the assumptions, we show that strategyproofness rules out specific rankings at given profiles. For instance, in the following derivation, the fact that $f(R^*)=acbde$ entails that $f(R^2)\neq acbed$ as the first voter can otherwise manipulate by deviating from $R^2$ to $R^*$. Hence, we know that $acbde$ must be chosen for $R^2$, and we will use this fact in further derivations. More generally, all grayed out rankings in the right most column satisfy unanimity, but violate strategyproofness due to the outcome that has been inferred for the profile indicated in brackets. We note here that strategyproofness may apply in either directions (i.e., either a voter manipulated from the considered profile to the one in the brackets or vice versa). Moreover, it is possible that multiple rankings are feasible outcomes for the profile indicated in brackets; in this case, strategyproofness rules out that the indicated ranking is chosen, regardless of the exact outcome for a given profile. Note that strategyproofness may be applied in either of the two directions for each of the possible outcomes at the manipulated profile. The second possibility that a ranking is grayed out is that we explicitly assume that this outcome is not chosen (e.g., for $R^4$, our assumption that $f(\pi(R^*)\in \{\pi(abced), \pi(acbde)\}$ for all permutations $\pi:A\rightarrow A$ rules out that $f(R^4)=abced$ or $f(R^4)=acbde$). Lastly, all our proofs end in a profile where no valid outcome remains, thereby showing that our axioms the assumptions are incompatible with each other.

{\setlength{\tabcolsep}{3pt}
\noindent\begin{longtable}{c|cc|llllllll}
$R^{*}$ & $abcde$ & $acbed$ & $acbde$ (A)\\
$R^{2}$ & $acbde$ & $acbed$ & $acbde$ & \textcolor{gray}{$acbed\,(R^{*})$} \\
$R^{3}$ & $abcde$ & $acbde$ & \textcolor{gray}{$abcde\,(R^{*})$} & $acbde$ \\
$R^{4}$ & $abced$ & $acbde$ & \textcolor{gray}{$abcde\,(R^{3})$} & \textcolor{gray}{$acbed\,(R^{2})$} & \textcolor{gray}{$abced$ (A)} & \textcolor{gray}{$acbde$ (A)}\\
\end{longtable}
}

For this simple derivation, it is straightforward to translate our tabular form into natural language. We assume that $f(R^*)=acbde$. This implies that $f(R^2)=f(R^3)=acbde$ as these profiles are derived by letting one of the voters in $R^*$ deviate to $acbde$. Lastly, for $R^4$, we have by assumption that $f(R^4)\neq abced$ and $f(R^4)\neq acbde$. However, if we choose $f(R^4)=abcde$, voter $1$ can manipulate by deviating from $R^3$ to $R^4$. Similarly, if $f(R^4)=acbed$, voter $2$ can manipulate by deviating from $R^2$ to $R^4$. Hence, no ranking that satisfies unanimity remains for $R^4$, thereby showing that our assumptions are in conflict. 

By this derivation, we know that $f(\pi(R^*))\in \{\pi(abcde), \pi(acbed)\}$ for some permutation $\pi:A\rightarrow A$. We will subsequently assume that $\pi$ is given by the identity, i.e., that $f(R^*)\in \{abcde, acbed\}$. Our proof applies to any other permutation $\pi$ by simply renaming the alternatives in all proofs and outcomes accordingly as both unanimity and strategyproofness are independent of the names of alternatives. Further, by the fact that $b$ and $c$, as well as $d$ and $e$ are symmetric in $R^*$, we can assume without loss of generality that $f(R^*)=abcde$; if $f(R^*)=acbed$, we can just exchange the roles of $b$ and $c$ as well as $d$ and $e$ in the subsequent proofs. 

Now, from here on, we proceed with a case distinction with respect to the profile $R^+$ shown below. 
\begin{center}
\begin{tabular}{lll}
    $R^+$: & $eabcd$ & $ecabd$
    \end{tabular}
\end{center}

We note that only three outcomes satisfy unanimity for this profile: $f(R^+)=eabcd$, $f(R^+)=eacbd$, or $f(R^+)=ecabd$. We will next show that all of these three cases result in a contradiction. To give further structure to our proof, we will discuss each of these cases as a separate lemma. In particular, we show in the next three lemmas that none of these outcomes is compatible with the fact that $f$ is anonymous, strategyproof, unanimous, and satisfies that $f(R^*)=abcde$. Note that each case itself breaks down in several subcases and steps. Since no valid outcome remains for $R^+$, we conclude that our basic assumptions are in conflict, so no SWF satisfies anonymity, unanimity, and strategyproofness if $m=5$ and $n=2$. 

\begin{lemma}
    $f(R^+)\neq ecabd$.
\end{lemma}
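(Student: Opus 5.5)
Assume for contradiction that $f(R^+)=ecabd$, alongside the standing assumptions that $f$ is anonymous, unanimous and strategyproof for $n=2$, $m=5$, and that $f(R^*)=abcde$. The goal is a finite chain of two-voter profiles, written in the tabular format of this appendix, that ends at a profile where every unanimity-compatible ranking is excluded by strategyproofness. The two pieces of fixed information are far apart. $R^+$ puts $e$ on top and has unanimous pairs $e\succ\cdot$, $a\succ b$, $a\succ d$, $b\succ d$, $c\succ b$, $c\succ d$. $R^*$ puts $e$ low. Most of the work is building paths of one-voter deviations connecting these two regions.

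\textbf{Step 1 (near $R^+$).} Propagate $f(R^+)=ecabd$ to its neighbours. Letting voter 2 report the chosen ranking $ecabd$ gives no new information, since the voters already report it. Letting voter 1 (with $eabcd$) deviate to rankings such as $ecabd$, $eacbd$ or $eabdc$ does give constraints. The key observation is that whenever a voter reports exactly the ranking chosen, any profile it deviates to must also yield that ranking. Otherwise, if the current outcome is $\rhd$ and the voter's ranking is $\rhd$, the voter has distance 0 and strategyproofness forces the outcome to stay at distance 0, i.e.\ unchanged. So profiles of the form $(ecabd,\succ)$ force $f=ecabd$ whenever $ecabd$ is unanimity-feasible. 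From there, voter 1 in $R^+$ gains a set of forbidden outcomes at every profile $(\succ',ecabd)$: any ranking strictly closer to $eabcd$ than $ecabd$ (distance 2) is excluded there.

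\textbf{Step 2 (near $R^*$).} Symmetrically, propagate $f(R^*)=abcde$. Here $f(abcde,\succ)=abcde$ whenever $abcde$ is feasible. Profiles obtained by moving $e$ gradually upward in one voter's ranking (e.g.\ $abced$, $abecd$, $aebcd$, $eabcd$) then pin outcomes via the distance inequalities. The aim is to determine $f$ at some intermediate profile such as $(eabcd,abcde)$ or $(eabcd,acbde)$.

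\textbf{Step 3 (meet in the middle).} Choose a bridging profile $Q$ where both families of constraints apply. A natural candidate is $Q=(eabcd, cabde)$ or a similar profile whose unanimity set is small and which is one deviation away from both a profile forced to $ecabd$-like outcomes and one forced to $abcde$-like outcomes. Then check each of its feasible rankings and gray it out using a reference profile. If a ranking survives, branch on the remaining outcomes of an auxiliary profile. In each branch, repeat the local elimination until some profile has no feasible outcome left.

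The main obstacle is finding the bridge. The paper notes that the minimal unsatisfiable subsets involve about 200 profiles, so a short hand argument is unlikely. In practice I would let a MUS extractor restricted to the assumption $f(R^+)=ecabd$ produce the sequence, then present it in the tabular format with a case split on the few undetermined profiles. By hand, I would first try to move $e$ down voter 2's ranking step by step ($ecabd\to ceabd\to caebd\to\dots$), tracking the feasible outcome sets, until the profile becomes one adjacent to the $R^*$ region from Step 2.
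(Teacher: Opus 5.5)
What you have written is a search plan, not a proof, and for this lemma the proof \emph{is} the explicit finite derivation. Your Step~3 never fixes the bridging profile $Q$ (``$(eabcd,cabde)$ or a similar profile''). No profile is ever shown to have all of its unanimity-feasible outcomes excluded, and the actual chain is deferred to a MUS extractor, so nothing has been established. The paper's proof supplies exactly this missing content. Under $f(R^+)=ecabd$ and $f(R^*)=abcde$, it pins down four outcomes, each by its own contradiction table: $f(aebdc,aecbd)=aecbd$, $f(aecbd,aedcb)=aedcb$, $f(abecd,ecabd)=ecabd$ and $f(acbed,adceb)=adcbe$. A final table combining these leaves no admissible outcome at $(abcde,abecd)$. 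Chaining one-voter deviations with case splits is indeed the generic method behind those tables, but none of these intermediate facts, nor any terminal profile, appears in your proposal.

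Moreover, the one concrete tool you offer is false as stated. Suppose voter $i$ reports $\rhd=f(R)$ and deviates to $\succ'$, giving $R'$. Strategyproofness from $R$ only yields $0\le\Delta(\rhd,f(R'))$, which is vacuous. The only constraint linking $R$ and $R'$ is $\Delta(\succ',f(R'))\le\Delta(\succ',\rhd)$, and this does not force $f(R')=\rhd$: a dictatorship of voter $i$ is strategyproof and unanimous and returns $\succ'$. The valid rule is the converse: if a voter \emph{switches to} the currently chosen ranking, the outcome is preserved. That is the rule the appendix uses in its very first table. Consequently your shortcuts $f(ecabd,\succ)=ecabd$ and $f(abcde,\succ)=abcde$ ``whenever feasible'' are unjustified, and they are precisely where the work lies. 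The paper spends an 84-profile derivation (its Step~3) just to obtain $f(abecd,ecabd)=ecabd$. Its final contradiction is reached by excluding $abcde$ at $(abcde,aecbd)$ and then every option at $(abcde,abecd)$, i.e.\ at exactly the kind of profile whose outcome you were taking for free.
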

\begin{proof}
To prove his lemma, we assume for contradiction that $f(R^+)=ecabd$. We then proceed in five steps to specify the outcomes for further profiles, which ultimately results in a contradiction. We note that, except for $R^*$ and $R^+$, we will reset the profile markers for each step as the corresponding derivations are fully self-contained. 
\medskip

\textbf{Step 1:} Our first goal is to show that $f(R)=aecbd$ for the profile $R$ where one voter reports $aebcd$ and the other $aecbd$. We hence assume for contradiction that $f(R)\neq aecbd$. The subsequent derivation shows that this assumption is invalid as no feasible outcome remains for profile $R^{28}$.  In this table, the profile $R$ appears in Steps 1 and 6 (i.e., $R=R^1$ and $R=R^6$).
{\setlength{\tabcolsep}{3pt}
\noindent\begin{longtable}{c|cc|llllllll}
$R^{*}$ & $abcde$ & $acbed$ & $abcde$ (A)\\
$R^{+}$ & $eabcd$ & $ecabd$ & $ecabd$ (A)\\
$R^{1}$ & $aebdc$ & $aecbd$ & $aebcd$ & $aebdc$ & \textcolor{gray}{$aecbd\,(A)$} \\
$R^{2}$ & $eabdc$ & $ecabd$ & \textcolor{gray}{$eabcd\,(R^{+})$} & \textcolor{gray}{$eabdc\,(R^{+})$} & \textcolor{gray}{$eacbd\,(R^{+})$} & $ecabd$ \\
$R^{3}$ & $aebdc$ & $ecabd$ & \textcolor{gray}{$aebcd\,(R^{+})$} & \textcolor{gray}{$aebdc\,(R^{2})$} & \textcolor{gray}{$aecbd\,(R^{1})$} & \textcolor{gray}{$eabcd\,(R^{+})$} & \textcolor{gray}{$eabdc\,(R^{+})$} & \textcolor{gray}{$eacbd\,(R^{+})$} & $ecabd$ \\
$R^{4}$ & $aebdc$ & $caebd$ & $acebd$ & \textcolor{gray}{$aebcd\,(R^{3})$} & \textcolor{gray}{$aebdc\,(R^{3})$} & \textcolor{gray}{$aecbd\,(R^{1})$} & $caebd$ \\
$R^{5}$ & $acebd$ & $aebdc$ & $acebd$ & \textcolor{gray}{$aebcd\,(R^{4})$} & \textcolor{gray}{$aebdc\,(R^{4})$} & \textcolor{gray}{$aecbd\,(R^{1})$} \\
$R^{6}$ & $aebdc$ & $aecbd$ & $aebcd$ & \textcolor{gray}{$aebdc\,(R^{5})$} & \textcolor{gray}{$aecbd\,(A)$} \\
$R^{7}$ & $abecd$ & $aecbd$ & $abecd$ & $aebcd$ & \textcolor{gray}{$aecbd\,(R^{6})$} \\
$R^{8}$ & $abecd$ & $ecabd$ & \textcolor{gray}{$abecd\,(R^{3})$} & \textcolor{gray}{$aebcd\,(R^{+})$} & \textcolor{gray}{$aecbd\,(R^{3})$} & \textcolor{gray}{$eabcd\,(R^{+})$} & \textcolor{gray}{$eacbd\,(R^{+})$} & $ecabd$ \\
$R^{9}$ & $abecd$ & $eacbd$ & \textcolor{gray}{$abecd\,(R^{8})$} & \textcolor{gray}{$aebcd\,(R^{8})$} & \textcolor{gray}{$aecbd\,(R^{7})$} & $eabcd$ & $eacbd$ \\
$R^{10}$ & $abecd$ & $eabcd$ & \textcolor{gray}{$abecd\,(R^{9})$} & $aebcd$ & $eabcd$ \\
$R^{11}$ & $eabdc$ & $eacbd$ & $eabcd$ & \textcolor{gray}{$eabdc\,(R^{2})$} & $eacbd$ \\
$R^{12}$ & $abedc$ & $aecbd$ & $abecd$ & $abedc$ & $aebcd$ & \textcolor{gray}{$aebdc\,(R^{6})$} & \textcolor{gray}{$aecbd\,(R^{1})$} \\
$R^{13}$ & $abcde$ & $acebd$ & $abcde$ & \textcolor{gray}{$abced\,(R^{*})$} & \textcolor{gray}{$acbde\,(R^{*})$} & \textcolor{gray}{$acbed\,(R^{*})$} & \textcolor{gray}{$acebd\,(R^{*})$} \\
$R^{14}$ & $abcde$ & $aecbd$ & $abcde$ & \textcolor{gray}{$abced\,(R^{*})$} & $abecd$ & \textcolor{gray}{$acbde\,(R^{*})$} & \textcolor{gray}{$acbed\,(R^{*})$} & \textcolor{gray}{$acebd\,(R^{*})$} & \textcolor{gray}{$aebcd\,(R^{13})$} & \textcolor{gray}{$aecbd\,(R^{13})$} \\
$R^{15}$ & $abdec$ & $aecbd$ & $abdec$ & $abecd$ & $abedc$ & \textcolor{gray}{$aebcd\,(R^{14})$} & \textcolor{gray}{$aebdc\,(R^{6})$} & \textcolor{gray}{$aecbd\,(R^{1})$} \\
$R^{16}$ & $abedc$ & $aecbd$ & $abecd$ & $abedc$ & \textcolor{gray}{$aebcd\,(R^{15})$} & \textcolor{gray}{$aebdc\,(R^{6})$} & \textcolor{gray}{$aecbd\,(R^{1})$} \\
$R^{17}$ & $abedc$ & $eacbd$ & \textcolor{gray}{$abecd\,(R^{9})$} & \textcolor{gray}{$abedc\,(R^{9})$} & \textcolor{gray}{$aebcd\,(R^{9})$} & $aebdc$ & \textcolor{gray}{$aecbd\,(R^{12})$} & $eabcd$ & \textcolor{gray}{$eabdc\,(R^{11})$} & \textcolor{gray}{$eacbd\,(R^{16})$} \\
$R^{18}$ & $abecd$ & $eacbd$ & \textcolor{gray}{$abecd\,(R^{8})$} & \textcolor{gray}{$aebcd\,(R^{8})$} & \textcolor{gray}{$aecbd\,(R^{7})$} & $eabcd$ & \textcolor{gray}{$eacbd\,(R^{17})$} \\
$R^{19}$ & $abecd$ & $eabcd$ & \textcolor{gray}{$abecd\,(R^{9})$} & \textcolor{gray}{$aebcd\,(R^{18})$} & $eabcd$ \\
$R^{20}$ & $abedc$ & $aecbd$ & $abecd$ & \textcolor{gray}{$abedc\,(R^{17})$} & \textcolor{gray}{$aebcd\,(R^{15})$} & \textcolor{gray}{$aebdc\,(R^{6})$} & \textcolor{gray}{$aecbd\,(R^{1})$} \\
$R^{21}$ & $abecd$ & $aecbd$ & $abecd$ & \textcolor{gray}{$aebcd\,(R^{20})$} & \textcolor{gray}{$aecbd\,(R^{6})$} \\
$R^{22}$ & $abced$ & $eabcd$ & \textcolor{gray}{$abced\,(R^{19})$} & \textcolor{gray}{$abecd\,(R^{10})$} & \textcolor{gray}{$aebcd\,(R^{19})$} & $eabcd$ \\
$R^{23}$ & $abecd$ & $aecdb$ & $abecd$ & \textcolor{gray}{$aebcd\,(R^{21})$} & \textcolor{gray}{$aecbd\,(R^{7})$} & \textcolor{gray}{$aecdb\,(R^{21})$} \\
$R^{24}$ & $acebd$ & $eabcd$ & \textcolor{gray}{$acebd\,(R^{22})$} & \textcolor{gray}{$aebcd\,(R^{5})$} & \textcolor{gray}{$aecbd\,(R^{5})$} & \textcolor{gray}{$eabcd\,(R^{5})$} & $eacbd$ \\
$R^{25}$ & $abecd$ & $eacdb$ & \textcolor{gray}{$abecd\,(R^{8})$} & \textcolor{gray}{$aebcd\,(R^{8})$} & \textcolor{gray}{$aecbd\,(R^{7})$} & \textcolor{gray}{$aecdb\,(R^{21})$} & $eabcd$ & \textcolor{gray}{$eacbd\,(R^{18})$} & \textcolor{gray}{$eacdb\,(R^{23})$} \\
$R^{26}$ & $eabcd$ & $eacdb$ & \textcolor{gray}{$eabcd\,(R^{24})$} & \textcolor{gray}{$eacbd\,(R^{25})$} & \textcolor{gray}{$eacdb\,(R^{25})$} \\
\end{longtable}
}

\textbf{Step 2:} Next, we will showt that $f(R)=aedcb$ for the profie $R$ where one voters reports $aecbd$ and the other $aedcb$. We hence assume that $f(R)\neq aedcb$ and infer a contradiction as shown below. The profile $R$ corresponds to $R^2$ and $R^{13}$ in the subsequent derivation.
{\setlength{\tabcolsep}{3pt}
\noindent\begin{longtable}{c|cc|llllllll}
$R^{*}$ & $abcde$ & $acbed$ & $abcde$ (A)\\
$R^{1}$ & $aebdc$ & $aecbd$ & \multicolumn{2}{l}{$aecbd$ (A, Step 1)}\\
$R^{2}$ & $aecbd$ & $aedcb$ & $aecbd$ & $aecdb$ & \textcolor{gray}{$aedcb\,(A)$} \\
$R^{3}$ & $abecd$ & $aecbd$ & $abecd$ & \textcolor{gray}{$aebcd\,(R^{1})$} & $aecbd$ \\
$R^{4}$ & $aebcd$ & $aecbd$ & \textcolor{gray}{$aebcd\,(R^{1})$} & $aecbd$ \\
$R^{5}$ & $abcde$ & $acebd$ & $abcde$ & \textcolor{gray}{$abced\,(R^{*})$} & \textcolor{gray}{$acbde\,(R^{*})$} & \textcolor{gray}{$acbed\,(R^{*})$} & \textcolor{gray}{$acebd\,(R^{*})$} \\
$R^{6}$ & $abcde$ & $aecbd$ & $abcde$ & \textcolor{gray}{$abced\,(R^{*})$} & $abecd$ & \textcolor{gray}{$acbde\,(R^{*})$} & \textcolor{gray}{$acbed\,(R^{*})$} & \textcolor{gray}{$acebd\,(R^{*})$} & \textcolor{gray}{$aebcd\,(R^{1})$} & \textcolor{gray}{$aecbd\,(R^{5})$} \\
$R^{7}$ & $abdec$ & $aecbd$ & $abdec$ & $abecd$ & \textcolor{gray}{$abedc\,(R^{1})$} & \textcolor{gray}{$aebcd\,(R^{1})$} & \textcolor{gray}{$aebdc\,(R^{1})$} & \textcolor{gray}{$aecbd\,(R^{6})$} \\
$R^{8}$ & $abedc$ & $aecbd$ & $abecd$ & \textcolor{gray}{$abedc\,(R^{1})$} & \textcolor{gray}{$aebcd\,(R^{1})$} & \textcolor{gray}{$aebdc\,(R^{1})$} & \textcolor{gray}{$aecbd\,(R^{7})$} \\
$R^{9}$ & $abecd$ & $aecbd$ & $abecd$ & \textcolor{gray}{$aebcd\,(R^{1})$} & \textcolor{gray}{$aecbd\,(R^{8})$} \\
$R^{10}$ & $abecd$ & $aecdb$ & $abecd$ & \textcolor{gray}{$aebcd\,(R^{3})$} & \textcolor{gray}{$aecbd\,(R^{9})$} & \textcolor{gray}{$aecdb\,(R^{9})$} \\
$R^{11}$ & $aebcd$ & $aecdb$ & \textcolor{gray}{$aebcd\,(R^{4})$} & $aecbd$ & \textcolor{gray}{$aecdb\,(R^{10})$} \\
$R^{12}$ & $aecbd$ & $aecdb$ & $aecbd$ & \textcolor{gray}{$aecdb\,(R^{11})$} \\
$R^{13}$ & $aecbd$ & $aedcb$ & $aecbd$ & \textcolor{gray}{$aecdb\,(R^{12})$} & \textcolor{gray}{$aedcb\,(A)$} \\
$R^{14}$ & $adecb$ & $aecbd$ & \textcolor{gray}{$adecb\,(R^{13})$} & $aecbd$ & \textcolor{gray}{$aecdb\,(R^{12})$} & \textcolor{gray}{$aedcb\,(R^{2})$} \\
$R^{15}$ & $adebc$ & $aecbd$ & \textcolor{gray}{$adebc\,(R^{14})$} & \textcolor{gray}{$adecb\,(R^{13})$} & \textcolor{gray}{$aebcd\,(R^{1})$} & \textcolor{gray}{$aebdc\,(R^{1})$} & $aecbd$ & \textcolor{gray}{$aecdb\,(R^{12})$} & \textcolor{gray}{$aedbc\,(R^{1})$} & \textcolor{gray}{$aedcb\,(R^{2})$} \\
$R^{16}$ & $adebc$ & $aecdb$ & \textcolor{gray}{$adebc\,(R^{15})$} & \textcolor{gray}{$adecb\,(R^{15})$} & $aecdb$ & \textcolor{gray}{$aedbc\,(R^{15})$} & $aedcb$ \\
$R^{17}$ & $adbec$ & $aecdb$ & \textcolor{gray}{$adbec\,(R^{16})$} & \textcolor{gray}{$adebc\,(R^{16})$} & \textcolor{gray}{$adecb\,(R^{16})$} & \textcolor{gray}{$aecdb\,(R^{10})$} & \textcolor{gray}{$aedbc\,(R^{16})$} & $aedcb$ \\
$R^{18}$ & $adbec$ & $aedcb$ & \textcolor{gray}{$adbec\,(R^{17})$} & \textcolor{gray}{$adebc\,(R^{17})$} & \textcolor{gray}{$adecb\,(R^{17})$} & \textcolor{gray}{$aedbc\,(R^{17})$} & $aedcb$ \\
$R^{19}$ & $aebdc$ & $aedcb$ & $aebdc$ & \textcolor{gray}{$aedbc\,(R^{18})$} & $aedcb$ \\
$R^{20}$ & $adceb$ & $aedcb$ & $adceb$ & \textcolor{gray}{$adecb\,(R^{18})$} & $aedcb$ \\
$R^{21}$ & $adebc$ & $aebcd$ & \textcolor{gray}{$adebc\,(R^{15})$} & $aebcd$ & $aebdc$ & \textcolor{gray}{$aedbc\,(R^{15})$} \\
$R^{22}$ & $adebc$ & $aebdc$ & \textcolor{gray}{$adebc\,(R^{21})$} & $aebdc$ & $aedbc$ \\
$R^{23}$ & $abecd$ & $aedcb$ & $abecd$ & \textcolor{gray}{$abedc\,(R^{18})$} & \textcolor{gray}{$aebcd\,(R^{3})$} & $aebdc$ & \textcolor{gray}{$aecbd\,(R^{9})$} & \textcolor{gray}{$aecdb\,(R^{9})$} & \textcolor{gray}{$aedbc\,(R^{10})$} & \textcolor{gray}{$aedcb\,(R^{2})$} \\
$R^{24}$ & $abedc$ & $aedcb$ & \textcolor{gray}{$abedc\,(R^{18})$} & $aebdc$ & \textcolor{gray}{$aedbc\,(R^{18})$} & \textcolor{gray}{$aedcb\,(R^{23})$} \\
$R^{25}$ & $aebdc$ & $aedcb$ & $aebdc$ & \textcolor{gray}{$aedbc\,(R^{18})$} & \textcolor{gray}{$aedcb\,(R^{24})$} \\
$R^{26}$ & $adecb$ & $aebdc$ & \textcolor{gray}{$adebc\,(R^{22})$} & \textcolor{gray}{$adecb\,(R^{14})$} & $aebdc$ & \textcolor{gray}{$aedbc\,(R^{19})$} & \textcolor{gray}{$aedcb\,(R^{25})$} \\
$R^{27}$ & $adceb$ & $aebdc$ & \textcolor{gray}{$adceb\,(R^{26})$} & \textcolor{gray}{$adebc\,(R^{22})$} & \textcolor{gray}{$adecb\,(R^{20})$} & $aebdc$ & \textcolor{gray}{$aedbc\,(R^{19})$} & \textcolor{gray}{$aedcb\,(R^{25})$} \\
$R^{28}$ & $acedb$ & $aebdc$ & \textcolor{gray}{$acebd\,(R^{27})$} & \textcolor{gray}{$acedb\,(R^{27})$} & \textcolor{gray}{$aebcd\,(R^{1})$} & \textcolor{gray}{$aebdc\,(R^{1})$} & $aecbd$ & \textcolor{gray}{$aecdb\,(R^{25})$} & \textcolor{gray}{$aedbc\,(R^{1})$} & \textcolor{gray}{$aedcb\,(R^{25})$} \\
$R^{29}$ & $acedb$ & $aecbd$ & \textcolor{gray}{$acebd\,(R^{28})$} & \textcolor{gray}{$acedb\,(R^{28})$} & $aecbd$ & \textcolor{gray}{$aecdb\,(R^{12})$} \\
$R^{30}$ & $adceb$ & $aecbd$ & \textcolor{gray}{$acdeb\,(R^{14})$} & \textcolor{gray}{$acebd\,(R^{29})$} & \textcolor{gray}{$acedb\,(R^{29})$} & \textcolor{gray}{$adceb\,(R^{14})$} & \textcolor{gray}{$adecb\,(R^{13})$} & $aecbd$ & \textcolor{gray}{$aecdb\,(R^{12})$} & \textcolor{gray}{$aedcb\,(R^{2})$} \\
$R^{31}$ & $acebd$ & $adceb$ & \textcolor{gray}{$acdeb\,(R^{30})$} & $acebd$ & $acedb$ & \textcolor{gray}{$adceb\,(R^{30})$} \\
$R^{32}$ & $acedb$ & $adceb$ & $acdeb$ & $acedb$ & \textcolor{gray}{$adceb\,(R^{31})$} \\
$R^{33}$ & $adbce$ & $aedcb$ & \textcolor{gray}{$adbce\,(R^{18})$} & \textcolor{gray}{$adbec\,(R^{18})$} & \textcolor{gray}{$adcbe\,(R^{18})$} & $adceb$ & \textcolor{gray}{$adebc\,(R^{18})$} & \textcolor{gray}{$adecb\,(R^{18})$} & \textcolor{gray}{$aedbc\,(R^{18})$} & $aedcb$ \\
$R^{34}$ & $abcde$ & $acedb$ & $abcde$ & \textcolor{gray}{$abced\,(R^{*})$} & \textcolor{gray}{$acbde\,(R^{*})$} & \textcolor{gray}{$acbed\,(R^{*})$} & $acdbe$ & \textcolor{gray}{$acdeb\,(R^{5})$} & \textcolor{gray}{$acebd\,(R^{*})$} & \textcolor{gray}{$acedb\,(R^{5})$} \\
$R^{35}$ & $acedb$ & $adbce$ & $acdbe$ & \textcolor{gray}{$acdeb\,(R^{34})$} & \textcolor{gray}{$acedb\,(R^{34})$} & \textcolor{gray}{$adbce\,(R^{33})$} & \textcolor{gray}{$adcbe\,(R^{33})$} & \textcolor{gray}{$adceb\,(R^{32})$} \\
$R^{36}$ & $acdbe$ & $acedb$ & $acdbe$ & \textcolor{gray}{$acdeb\,(R^{35})$} & \textcolor{gray}{$acedb\,(R^{35})$} \\
$R^{37}$ & $acdbe$ & $aecdb$ & $acdbe$ & \textcolor{gray}{$acdeb\,(R^{36})$} & \textcolor{gray}{$acedb\,(R^{36})$} & \textcolor{gray}{$aecdb\,(R^{36})$} \\
$R^{38}$ & $acdbe$ & $aedcb$ & $acdbe$ & \textcolor{gray}{$acdeb\,(R^{36})$} & \textcolor{gray}{$acedb\,(R^{36})$} & \textcolor{gray}{$adcbe\,(R^{18})$} & $adceb$ & \textcolor{gray}{$adecb\,(R^{18})$} & \textcolor{gray}{$aecdb\,(R^{36})$} & \textcolor{gray}{$aedcb\,(R^{37})$} \\
$R^{39}$ & $adbce$ & $aedcb$ & \textcolor{gray}{$adbce\,(R^{18})$} & \textcolor{gray}{$adbec\,(R^{18})$} & \textcolor{gray}{$adcbe\,(R^{18})$} & $adceb$ & \textcolor{gray}{$adebc\,(R^{18})$} & \textcolor{gray}{$adecb\,(R^{18})$} & \textcolor{gray}{$aedbc\,(R^{18})$} & \textcolor{gray}{$aedcb\,(R^{38})$} \\
$R^{40}$ & $adceb$ & $aedcb$ & $adceb$ & \textcolor{gray}{$adecb\,(R^{18})$} & \textcolor{gray}{$aedcb\,(R^{39})$} \\
$R^{41}$ & $acdbe$ & $adbce$ & $acdbe$ & \textcolor{gray}{$adbce\,(R^{35})$} & \textcolor{gray}{$adcbe\,(R^{35})$} \\
$R^{42}$ & $adceb$ & $aedbc$ & \textcolor{gray}{$adceb\,(R^{27})$} & $adebc$ & \textcolor{gray}{$adecb\,(R^{20})$} & \textcolor{gray}{$aedbc\,(R^{40})$} & \textcolor{gray}{$aedcb\,(R^{40})$} \\
$R^{43}$ & $adceb$ & $adebc$ & \textcolor{gray}{$adceb\,(R^{42})$} & $adebc$ & \textcolor{gray}{$adecb\,(R^{40})$} \\
$R^{44}$ & $acdbe$ & $adbec$ & $acdbe$ & \textcolor{gray}{$adbce\,(R^{41})$} & \textcolor{gray}{$adbec\,(R^{41})$} & \textcolor{gray}{$adcbe\,(R^{41})$} \\
$R^{45}$ & $acdeb$ & $adebc$ & \textcolor{gray}{$acdeb\,(R^{43})$} & \textcolor{gray}{$adceb\,(R^{43})$} & $adebc$ & \textcolor{gray}{$adecb\,(R^{43})$} \\
$R^{46}$ & $acdeb$ & $adbec$ & \textcolor{gray}{$acdbe\,(R^{45})$} & \textcolor{gray}{$acdeb\,(R^{45})$} & \textcolor{gray}{$adbce\,(R^{44})$} & \textcolor{gray}{$adbec\,(R^{17})$} & \textcolor{gray}{$adcbe\,(R^{44})$} & \textcolor{gray}{$adceb\,(R^{45})$} & \textcolor{gray}{$adebc\,(R^{44})$} & \textcolor{gray}{$adecb\,(R^{44})$} \\
\end{longtable}
}

\textbf{Step 3:} As our third step, we will show that $f(R)=ecabd$ for the profile $R$ where one voter reports $abecd$ and the other reports $ecabd$. As usual, we asume that $f(R)$ is not our desired outcome, i.e., $f(R)\neq ecabd$, and derive a contradiction. We use our assumption on $R$ at profile $R^3$ and $R^{19}$.
{\setlength{\tabcolsep}{3pt}
\noindent\begin{longtable}{c|cc|llllllll}
$R^{*}$ & $abcde$ & $acbed$ & $abcde$ (A)\\
$R^{+}$ & $eabcd$ & $ecabd$ & $ecabd$ (A)\\
$R^{1}$ & $aebdc$ & $aecbd$ & \multicolumn{2}{l}{$aecbd$ (A, Step 1)}\\
$R^{2}$ & $aecbd$ & $aedcb$ & \multicolumn{2}{l}{$aedcb$ (A, Step 2)}\\
$R^{3}$ & $abecd$ & $ecabd$ & $abecd$ & \textcolor{gray}{$aebcd\,(R^{+})$} & $aecbd$ & \textcolor{gray}{$eabcd\,(R^{+})$} & \textcolor{gray}{$eacbd\,(R^{+})$} & \textcolor{gray}{$ecabd\,(A)$} \\
$R^{4}$ & $eacbd$ & $ecabd$ & \textcolor{gray}{$eacbd\,(R^{+})$} & $ecabd$ \\
$R^{5}$ & $aecbd$ & $ecabd$ & $aecbd$ & \textcolor{gray}{$eacbd\,(R^{+})$} & $ecabd$ \\
$R^{6}$ & $aebdc$ & $ecabd$ & \textcolor{gray}{$aebcd\,(R^{1})$} & \textcolor{gray}{$aebdc\,(R^{1})$} & $aecbd$ & \textcolor{gray}{$eabcd\,(R^{+})$} & \textcolor{gray}{$eabdc\,(R^{1})$} & \textcolor{gray}{$eacbd\,(R^{+})$} & \textcolor{gray}{$ecabd\,(R^{3})$} \\
$R^{7}$ & $aecbd$ & $ecabd$ & $aecbd$ & \textcolor{gray}{$eacbd\,(R^{+})$} & \textcolor{gray}{$ecabd\,(R^{6})$} \\
$R^{8}$ & $aecbd$ & $ecbad$ & $aecbd$ & \textcolor{gray}{$eacbd\,(R^{5})$} & \textcolor{gray}{$ecabd\,(R^{7})$} & \textcolor{gray}{$ecbad\,(R^{7})$} \\
$R^{9}$ & $eacbd$ & $ecbad$ & \textcolor{gray}{$eacbd\,(R^{4})$} & $ecabd$ & \textcolor{gray}{$ecbad\,(R^{8})$} \\
$R^{10}$ & $ecabd$ & $ecbad$ & $ecabd$ & \textcolor{gray}{$ecbad\,(R^{9})$} \\
$R^{11}$ & $aebcd$ & $aecbd$ & \textcolor{gray}{$aebcd\,(R^{1})$} & $aecbd$ \\
$R^{12}$ & $abecd$ & $aecbd$ & $abecd$ & \textcolor{gray}{$aebcd\,(R^{1})$} & $aecbd$ \\
$R^{13}$ & $abcde$ & $acebd$ & $abcde$ & \textcolor{gray}{$abced\,(R^{*})$} & \textcolor{gray}{$acbde\,(R^{*})$} & \textcolor{gray}{$acbed\,(R^{*})$} & \textcolor{gray}{$acebd\,(R^{*})$} \\
$R^{14}$ & $abcde$ & $aecbd$ & $abcde$ & \textcolor{gray}{$abced\,(R^{*})$} & $abecd$ & \textcolor{gray}{$acbde\,(R^{*})$} & \textcolor{gray}{$acbed\,(R^{*})$} & \textcolor{gray}{$acebd\,(R^{*})$} & \textcolor{gray}{$aebcd\,(R^{1})$} & \textcolor{gray}{$aecbd\,(R^{13})$} \\
$R^{15}$ & $abdec$ & $aecbd$ & $abdec$ & $abecd$ & \textcolor{gray}{$abedc\,(R^{1})$} & \textcolor{gray}{$aebcd\,(R^{1})$} & \textcolor{gray}{$aebdc\,(R^{1})$} & \textcolor{gray}{$aecbd\,(R^{14})$} \\
$R^{16}$ & $abedc$ & $aecbd$ & $abecd$ & \textcolor{gray}{$abedc\,(R^{1})$} & \textcolor{gray}{$aebcd\,(R^{1})$} & \textcolor{gray}{$aebdc\,(R^{1})$} & \textcolor{gray}{$aecbd\,(R^{15})$} \\
$R^{17}$ & $abecd$ & $aecbd$ & $abecd$ & \textcolor{gray}{$aebcd\,(R^{1})$} & \textcolor{gray}{$aecbd\,(R^{16})$} \\
$R^{18}$ & $eabdc$ & $ecabd$ & \textcolor{gray}{$eabcd\,(R^{+})$} & \textcolor{gray}{$eabdc\,(R^{+})$} & \textcolor{gray}{$eacbd\,(R^{+})$} & $ecabd$ \\
$R^{19}$ & $abecd$ & $ecabd$ & $abecd$ & \textcolor{gray}{$aebcd\,(R^{+})$} & \textcolor{gray}{$aecbd\,(R^{17})$} & \textcolor{gray}{$eabcd\,(R^{+})$} & \textcolor{gray}{$eacbd\,(R^{+})$} & \textcolor{gray}{$ecabd\,(A)$} \\
$R^{20}$ & $beacd$ & $ecabd$ & $beacd$ & $becad$ & \textcolor{gray}{$eabcd\,(R^{+})$} & \textcolor{gray}{$eacbd\,(R^{+})$} & \textcolor{gray}{$ebacd\,(R^{+})$} & $ebcad$ & \textcolor{gray}{$ecabd\,(R^{19})$} & \textcolor{gray}{$ecbad\,(R^{10})$} \\
$R^{21}$ & $ebadc$ & $ecabd$ & \textcolor{gray}{$eabcd\,(R^{+})$} & \textcolor{gray}{$eabdc\,(R^{+})$} & \textcolor{gray}{$eacbd\,(R^{+})$} & \textcolor{gray}{$ebacd\,(R^{+})$} & \textcolor{gray}{$ebadc\,(R^{18})$} & $ebcad$ & \textcolor{gray}{$ecabd\,(R^{20})$} & \textcolor{gray}{$ecbad\,(R^{10})$} \\
$R^{22}$ & $ebcad$ & $ecabd$ & $ebcad$ & \textcolor{gray}{$ecabd\,(R^{21})$} & \textcolor{gray}{$ecbad\,(R^{10})$} \\
$R^{23}$ & $ebadc$ & $ebcad$ & \textcolor{gray}{$ebacd\,(R^{21})$} & \textcolor{gray}{$ebadc\,(R^{21})$} & $ebcad$ \\
$R^{24}$ & $aecbd$ & $ebcad$ & \textcolor{gray}{$aebcd\,(R^{1})$} & $aecbd$ & $eabcd$ & \textcolor{gray}{$eacbd\,(R^{5})$} & \textcolor{gray}{$ebacd\,(R^{8})$} & \textcolor{gray}{$ebcad\,(R^{8})$} & \textcolor{gray}{$ecabd\,(R^{7})$} & \textcolor{gray}{$ecbad\,(R^{7})$} \\
$R^{25}$ & $eacdb$ & $ebcad$ & $eabcd$ & \textcolor{gray}{$eacbd\,(R^{22})$} & $eacdb$ & \textcolor{gray}{$ebacd\,(R^{24})$} & \textcolor{gray}{$ebcad\,(R^{24})$} & \textcolor{gray}{$ecabd\,(R^{22})$} & \textcolor{gray}{$ecadb\,(R^{22})$} & \textcolor{gray}{$ecbad\,(R^{24})$} \\
$R^{26}$ & $eadbc$ & $ebcad$ & $eabcd$ & \textcolor{gray}{$eabdc\,(R^{23})$} & $eadbc$ & \textcolor{gray}{$ebacd\,(R^{23})$} & \textcolor{gray}{$ebadc\,(R^{23})$} & \textcolor{gray}{$ebcad\,(R^{25})$} \\
$R^{27}$ & $eabcd$ & $eadbc$ & $eabcd$ & \textcolor{gray}{$eabdc\,(R^{26})$} & $eadbc$ \\
$R^{28}$ & $acedb$ & $aedcb$ & $acedb$ & \textcolor{gray}{$aecdb\,(R^{2})$} & $aedcb$ \\
$R^{29}$ & $abecd$ & $aecdb$ & $abecd$ & \textcolor{gray}{$aebcd\,(R^{12})$} & \textcolor{gray}{$aecbd\,(R^{17})$} & \textcolor{gray}{$aecdb\,(R^{17})$} \\
$R^{30}$ & $aebcd$ & $aecdb$ & \textcolor{gray}{$aebcd\,(R^{11})$} & $aecbd$ & \textcolor{gray}{$aecdb\,(R^{29})$} \\
$R^{31}$ & $aecbd$ & $aecdb$ & $aecbd$ & \textcolor{gray}{$aecdb\,(R^{30})$} \\
$R^{32}$ & $abecd$ & $aedcb$ & $abecd$ & $abedc$ & \textcolor{gray}{$aebcd\,(R^{2})$} & $aebdc$ & \textcolor{gray}{$aecbd\,(R^{2})$} & \textcolor{gray}{$aecdb\,(R^{2})$} & \textcolor{gray}{$aedbc\,(R^{29})$} & \textcolor{gray}{$aedcb\,(R^{29})$} \\
$R^{33}$ & $abedc$ & $aedcb$ & $abedc$ & $aebdc$ & \textcolor{gray}{$aedbc\,(R^{32})$} & \textcolor{gray}{$aedcb\,(R^{32})$} \\
$R^{34}$ & $aebdc$ & $aedcb$ & $aebdc$ & $aedbc$ & \textcolor{gray}{$aedcb\,(R^{33})$} \\
$R^{35}$ & $abecd$ & $ebcad$ & $abecd$ & \textcolor{gray}{$aebcd\,(R^{17})$} & $baecd$ & $beacd$ & \textcolor{gray}{$becad\,(R^{24})$} & \textcolor{gray}{$eabcd\,(R^{19})$} & \textcolor{gray}{$ebacd\,(R^{24})$} & \textcolor{gray}{$ebcad\,(R^{24})$} \\
$R^{36}$ & $baecd$ & $ebcad$ & $baecd$ & $beacd$ & \textcolor{gray}{$becad\,(R^{35})$} & \textcolor{gray}{$ebacd\,(R^{23})$} & \textcolor{gray}{$ebcad\,(R^{35})$} \\
$R^{37}$ & $beacd$ & $ebcad$ & $beacd$ & $becad$ & \textcolor{gray}{$ebacd\,(R^{23})$} & \textcolor{gray}{$ebcad\,(R^{36})$} \\
$R^{38}$ & $abecd$ & $eacdb$ & $abecd$ & \textcolor{gray}{$aebcd\,(R^{12})$} & \textcolor{gray}{$aecbd\,(R^{17})$} & \textcolor{gray}{$aecdb\,(R^{17})$} & \textcolor{gray}{$eabcd\,(R^{19})$} & \textcolor{gray}{$eacbd\,(R^{3})$} & \textcolor{gray}{$eacdb\,(R^{19})$} \\
$R^{39}$ & $beacd$ & $eacdb$ & $beacd$ & $eabcd$ & \textcolor{gray}{$eacbd\,(R^{38})$} & \textcolor{gray}{$eacdb\,(R^{38})$} & \textcolor{gray}{$ebacd\,(R^{25})$} \\
$R^{40}$ & $beacd$ & $ecabd$ & $beacd$ & $becad$ & \textcolor{gray}{$eabcd\,(R^{+})$} & \textcolor{gray}{$eacbd\,(R^{+})$} & \textcolor{gray}{$ebacd\,(R^{+})$} & \textcolor{gray}{$ebcad\,(R^{37})$} & \textcolor{gray}{$ecabd\,(R^{19})$} & \textcolor{gray}{$ecbad\,(R^{10})$} \\
$R^{41}$ & $beacd$ & $eacdb$ & $beacd$ & \textcolor{gray}{$eabcd\,(R^{40})$} & \textcolor{gray}{$eacbd\,(R^{38})$} & \textcolor{gray}{$eacdb\,(R^{38})$} & \textcolor{gray}{$ebacd\,(R^{25})$} \\
$R^{42}$ & $beacd$ & $eadbc$ & $beacd$ & $beadc$ & \textcolor{gray}{$eabcd\,(R^{40})$} & \textcolor{gray}{$eabdc\,(R^{26})$} & \textcolor{gray}{$eadbc\,(R^{41})$} & \textcolor{gray}{$ebacd\,(R^{39})$} & $ebadc$ \\
$R^{43}$ & $eadbc$ & $ebcad$ & $eabcd$ & \textcolor{gray}{$eabdc\,(R^{23})$} & \textcolor{gray}{$eadbc\,(R^{42})$} & \textcolor{gray}{$ebacd\,(R^{23})$} & \textcolor{gray}{$ebadc\,(R^{23})$} & \textcolor{gray}{$ebcad\,(R^{25})$} \\
$R^{44}$ & $eabcd$ & $eadbc$ & $eabcd$ & \textcolor{gray}{$eabdc\,(R^{26})$} & \textcolor{gray}{$eadbc\,(R^{43})$} \\
$R^{45}$ & $eabcd$ & $edabc$ & $eabcd$ & \textcolor{gray}{$eabdc\,(R^{27})$} & \textcolor{gray}{$eadbc\,(R^{44})$} & \textcolor{gray}{$edabc\,(R^{44})$} \\
$R^{46}$ & $deabc$ & $eabcd$ & \textcolor{gray}{$deabc\,(R^{45})$} & $eabcd$ & \textcolor{gray}{$eabdc\,(R^{27})$} & \textcolor{gray}{$eadbc\,(R^{44})$} & \textcolor{gray}{$edabc\,(R^{44})$} \\
$R^{47}$ & $adebc$ & $aecbd$ & $adebc$ & $adecb$ & \textcolor{gray}{$aebcd\,(R^{2})$} & \textcolor{gray}{$aebdc\,(R^{1})$} & \textcolor{gray}{$aecbd\,(R^{2})$} & \textcolor{gray}{$aecdb\,(R^{2})$} & \textcolor{gray}{$aedbc\,(R^{1})$} & $aedcb$ \\
$R^{48}$ & $adebc$ & $eabcd$ & \textcolor{gray}{$adebc\,(R^{46})$} & \textcolor{gray}{$aebcd\,(R^{47})$} & $aebdc$ & \textcolor{gray}{$aedbc\,(R^{44})$} & $eabcd$ & \textcolor{gray}{$eabdc\,(R^{27})$} & \textcolor{gray}{$eadbc\,(R^{44})$} \\
$R^{49}$ & $adebc$ & $aebcd$ & \textcolor{gray}{$adebc\,(R^{48})$} & \textcolor{gray}{$aebcd\,(R^{47})$} & $aebdc$ & \textcolor{gray}{$aedbc\,(R^{48})$} \\
$R^{50}$ & $adebc$ & $aebdc$ & \textcolor{gray}{$adebc\,(R^{49})$} & $aebdc$ & \textcolor{gray}{$aedbc\,(R^{49})$} \\
$R^{51}$ & $adecb$ & $aebdc$ & \textcolor{gray}{$adebc\,(R^{50})$} & \textcolor{gray}{$adecb\,(R^{50})$} & $aebdc$ & \textcolor{gray}{$aedbc\,(R^{50})$} & \textcolor{gray}{$aedcb\,(R^{34})$} \\
$R^{52}$ & $adceb$ & $aebdc$ & \textcolor{gray}{$adceb\,(R^{51})$} & \textcolor{gray}{$adebc\,(R^{50})$} & \textcolor{gray}{$adecb\,(R^{50})$} & $aebdc$ & \textcolor{gray}{$aedbc\,(R^{50})$} & \textcolor{gray}{$aedcb\,(R^{34})$} \\
$R^{53}$ & $acedb$ & $aebdc$ & \textcolor{gray}{$acebd\,(R^{52})$} & \textcolor{gray}{$acedb\,(R^{52})$} & \textcolor{gray}{$aebcd\,(R^{1})$} & \textcolor{gray}{$aebdc\,(R^{1})$} & $aecbd$ & \textcolor{gray}{$aecdb\,(R^{28})$} & \textcolor{gray}{$aedbc\,(R^{1})$} & \textcolor{gray}{$aedcb\,(R^{34})$} \\
$R^{54}$ & $acedb$ & $aecbd$ & \textcolor{gray}{$acebd\,(R^{53})$} & \textcolor{gray}{$acedb\,(R^{53})$} & $aecbd$ & \textcolor{gray}{$aecdb\,(R^{31})$} \\
$R^{55}$ & $adebc$ & $aecbd$ & \textcolor{gray}{$adebc\,(R^{48})$} & \textcolor{gray}{$adecb\,(R^{48})$} & \textcolor{gray}{$aebcd\,(R^{2})$} & \textcolor{gray}{$aebdc\,(R^{1})$} & \textcolor{gray}{$aecbd\,(R^{2})$} & \textcolor{gray}{$aecdb\,(R^{2})$} & \textcolor{gray}{$aedbc\,(R^{1})$} & $aedcb$ \\
$R^{56}$ & $adebc$ & $aecdb$ & \textcolor{gray}{$adebc\,(R^{55})$} & \textcolor{gray}{$adecb\,(R^{55})$} & \textcolor{gray}{$aecdb\,(R^{47})$} & \textcolor{gray}{$aedbc\,(R^{55})$} & $aedcb$ \\
$R^{57}$ & $adbec$ & $aecdb$ & \textcolor{gray}{$adbec\,(R^{56})$} & \textcolor{gray}{$adebc\,(R^{56})$} & \textcolor{gray}{$adecb\,(R^{56})$} & \textcolor{gray}{$aecdb\,(R^{29})$} & \textcolor{gray}{$aedbc\,(R^{56})$} & $aedcb$ \\
$R^{58}$ & $adbec$ & $aedcb$ & \textcolor{gray}{$adbec\,(R^{57})$} & \textcolor{gray}{$adebc\,(R^{57})$} & \textcolor{gray}{$adecb\,(R^{57})$} & \textcolor{gray}{$aedbc\,(R^{57})$} & $aedcb$ \\
$R^{59}$ & $adceb$ & $aecbd$ & \textcolor{gray}{$acdeb\,(R^{52})$} & \textcolor{gray}{$acebd\,(R^{2})$} & \textcolor{gray}{$acedb\,(R^{54})$} & \textcolor{gray}{$adceb\,(R^{52})$} & \textcolor{gray}{$adecb\,(R^{55})$} & \textcolor{gray}{$aecbd\,(R^{2})$} & \textcolor{gray}{$aecdb\,(R^{2})$} & $aedcb$ \\
$R^{60}$ & $adceb$ & $aedcb$ & \textcolor{gray}{$adceb\,(R^{59})$} & \textcolor{gray}{$adecb\,(R^{58})$} & $aedcb$ \\
$R^{61}$ & $adbce$ & $aedcb$ & \textcolor{gray}{$adbce\,(R^{58})$} & \textcolor{gray}{$adbec\,(R^{58})$} & \textcolor{gray}{$adcbe\,(R^{58})$} & \textcolor{gray}{$adceb\,(R^{60})$} & \textcolor{gray}{$adebc\,(R^{58})$} & \textcolor{gray}{$adecb\,(R^{58})$} & \textcolor{gray}{$aedbc\,(R^{58})$} & $aedcb$ \\
$R^{62}$ & $acdbe$ & $aedcb$ & \textcolor{gray}{$acdbe\,(R^{61})$} & \textcolor{gray}{$acdeb\,(R^{60})$} & $acedb$ & \textcolor{gray}{$adcbe\,(R^{58})$} & \textcolor{gray}{$adceb\,(R^{60})$} & \textcolor{gray}{$adecb\,(R^{58})$} & \textcolor{gray}{$aecdb\,(R^{2})$} & $aedcb$ \\
$R^{63}$ & $acdbe$ & $aecdb$ & \textcolor{gray}{$acdbe\,(R^{62})$} & \textcolor{gray}{$acdeb\,(R^{62})$} & $acedb$ & $aecdb$ \\
$R^{64}$ & $acdbe$ & $acedb$ & \textcolor{gray}{$acdbe\,(R^{63})$} & $acdeb$ & $acedb$ \\
$R^{65}$ & $abcde$ & $acedb$ & $abcde$ & \textcolor{gray}{$abced\,(R^{*})$} & \textcolor{gray}{$acbde\,(R^{*})$} & \textcolor{gray}{$acbed\,(R^{*})$} & $acdbe$ & \textcolor{gray}{$acdeb\,(R^{13})$} & \textcolor{gray}{$acebd\,(R^{*})$} & \textcolor{gray}{$acedb\,(R^{13})$} \\
$R^{66}$ & $acedb$ & $adbce$ & \textcolor{gray}{$acdbe\,(R^{64})$} & \textcolor{gray}{$acdeb\,(R^{65})$} & \textcolor{gray}{$acedb\,(R^{65})$} & \textcolor{gray}{$adbce\,(R^{61})$} & \textcolor{gray}{$adcbe\,(R^{61})$} & $adceb$ \\
$R^{67}$ & $adbce$ & $adceb$ & \textcolor{gray}{$adbce\,(R^{66})$} & \textcolor{gray}{$adcbe\,(R^{66})$} & $adceb$ \\
$R^{68}$ & $acedb$ & $adceb$ & \textcolor{gray}{$acdeb\,(R^{66})$} & \textcolor{gray}{$acedb\,(R^{66})$} & $adceb$ \\
$R^{69}$ & $acbed$ & $aedcb$ & $acbed$ & \textcolor{gray}{$acebd\,(R^{2})$} & $acedb$ & \textcolor{gray}{$aecbd\,(R^{2})$} & \textcolor{gray}{$aecdb\,(R^{2})$} & $aedcb$ \\
$R^{70}$ & $acebd$ & $adceb$ & \textcolor{gray}{$acdeb\,(R^{68})$} & \textcolor{gray}{$acebd\,(R^{59})$} & \textcolor{gray}{$acedb\,(R^{68})$} & $adceb$ \\
$R^{71}$ & $abcde$ & $acedb$ & $abcde$ & \textcolor{gray}{$abced\,(R^{*})$} & \textcolor{gray}{$acbde\,(R^{*})$} & \textcolor{gray}{$acbed\,(R^{*})$} & \textcolor{gray}{$acdbe\,(R^{64})$} & \textcolor{gray}{$acdeb\,(R^{13})$} & \textcolor{gray}{$acebd\,(R^{*})$} & \textcolor{gray}{$acedb\,(R^{13})$} \\
$R^{72}$ & $abcde$ & $adceb$ & $abcde$ & \textcolor{gray}{$abdce\,(R^{67})$} & \textcolor{gray}{$acbde\,(R^{*})$} & \textcolor{gray}{$acdbe\,(R^{71})$} & \textcolor{gray}{$acdeb\,(R^{13})$} & \textcolor{gray}{$adbce\,(R^{67})$} & \textcolor{gray}{$adcbe\,(R^{71})$} & \textcolor{gray}{$adceb\,(R^{71})$} \\
$R^{73}$ & $acbed$ & $adceb$ & \textcolor{gray}{$acbde\,(R^{70})$} & \textcolor{gray}{$acbed\,(R^{70})$} & $acdbe$ & \textcolor{gray}{$acdeb\,(R^{68})$} & \textcolor{gray}{$acebd\,(R^{69})$} & \textcolor{gray}{$acedb\,(R^{68})$} & \textcolor{gray}{$adcbe\,(R^{67})$} & \textcolor{gray}{$adceb\,(R^{72})$} \\
$R^{74}$ & $acdbe$ & $adceb$ & $acdbe$ & \textcolor{gray}{$acdeb\,(R^{68})$} & \textcolor{gray}{$adcbe\,(R^{67})$} & \textcolor{gray}{$adceb\,(R^{73})$} \\
$R^{75}$ & $acdbe$ & $acedb$ & \textcolor{gray}{$acdbe\,(R^{63})$} & \textcolor{gray}{$acdeb\,(R^{74})$} & $acedb$ \\
$R^{76}$ & $acbed$ & $acdbe$ & \textcolor{gray}{$acbde\,(R^{73})$} & \textcolor{gray}{$acbed\,(R^{73})$} & $acdbe$ \\
$R^{77}$ & $acbde$ & $acedb$ & \textcolor{gray}{$acbde\,(R^{75})$} & $acbed$ & \textcolor{gray}{$acdbe\,(R^{64})$} & \textcolor{gray}{$acdeb\,(R^{65})$} & \textcolor{gray}{$acebd\,(R^{65})$} & \textcolor{gray}{$acedb\,(R^{65})$} \\
$R^{78}$ & $acbed$ & $acedb$ & $acbed$ & \textcolor{gray}{$acebd\,(R^{77})$} & \textcolor{gray}{$acedb\,(R^{77})$} \\
$R^{79}$ & $acdeb$ & $aecbd$ & \textcolor{gray}{$acdeb\,(R^{54})$} & \textcolor{gray}{$acebd\,(R^{54})$} & \textcolor{gray}{$acedb\,(R^{54})$} & $aecbd$ & \textcolor{gray}{$aecdb\,(R^{31})$} \\
$R^{80}$ & $acdbe$ & $aecbd$ & \textcolor{gray}{$acbde\,(R^{79})$} & \textcolor{gray}{$acbed\,(R^{76})$} & \textcolor{gray}{$acdbe\,(R^{79})$} & \textcolor{gray}{$acdeb\,(R^{54})$} & \textcolor{gray}{$acebd\,(R^{54})$} & \textcolor{gray}{$acedb\,(R^{54})$} & $aecbd$ & \textcolor{gray}{$aecdb\,(R^{31})$} \\
$R^{81}$ & $acbed$ & $aecbd$ & \textcolor{gray}{$acbed\,(R^{80})$} & \textcolor{gray}{$acebd\,(R^{54})$} & $aecbd$ \\
$R^{82}$ & $acdbe$ & $adecb$ & $acdbe$ & \textcolor{gray}{$acdeb\,(R^{74})$} & \textcolor{gray}{$adcbe\,(R^{74})$} & \textcolor{gray}{$adceb\,(R^{74})$} & \textcolor{gray}{$adecb\,(R^{74})$} \\
$R^{83}$ & $acdbe$ & $aedcb$ & \textcolor{gray}{$acdbe\,(R^{61})$} & \textcolor{gray}{$acdeb\,(R^{60})$} & $acedb$ & \textcolor{gray}{$adcbe\,(R^{58})$} & \textcolor{gray}{$adceb\,(R^{60})$} & \textcolor{gray}{$adecb\,(R^{58})$} & \textcolor{gray}{$aecdb\,(R^{2})$} & \textcolor{gray}{$aedcb\,(R^{82})$} \\
$R^{84}$ & $acbed$ & $aedcb$ & \textcolor{gray}{$acbed\,(R^{81})$} & \textcolor{gray}{$acebd\,(R^{2})$} & \textcolor{gray}{$acedb\,(R^{78})$} & \textcolor{gray}{$aecbd\,(R^{2})$} & \textcolor{gray}{$aecdb\,(R^{2})$} & \textcolor{gray}{$aedcb\,(R^{83})$} \\
\end{longtable}
}

\textbf{Step 4:} Fourthly, we will show that $f(R)=adcbe$. Once again, we assume this to be wrong, which means that $f(R)\neq adcbe$. The subsequent derivation shows taht this is impossible. We note that our assumption that $f(R)\neq adcbe$ is only used at profile $R^{40}$, i.e., $R=R^{40}$. 
{\setlength{\tabcolsep}{3pt}
\noindent\begin{longtable}{c|cc|llllllll}
$R^{*}$ & $abcde$ & $acbed$ & $abcde$ (A)\\
$R^{1}$ & $aebdc$ & $aecbd$ & \multicolumn{2}{l}{$aecbd$ (A, Step 1)}\\
$R^{2}$ & $aecbd$ & $aedcb$ & \multicolumn{2}{l}{$aedcb$ (A, Step 2)}\\
$R^{3}$ & $abecd$ & $ecabd$ & \multicolumn{2}{l}{$ecabd$ (A, Step 3)}\\
$R^{4}$ & $abecd$ & $aecbd$ & $abecd$ & \textcolor{gray}{$aebcd\,(R^{1})$} & $aecbd$ \\
$R^{5}$ & $aebcd$ & $aecbd$ & \textcolor{gray}{$aebcd\,(R^{1})$} & $aecbd$ \\
$R^{6}$ & $abcde$ & $acebd$ & $abcde$ & \textcolor{gray}{$abced\,(R^{*})$} & \textcolor{gray}{$acbde\,(R^{*})$} & \textcolor{gray}{$acbed\,(R^{*})$} & \textcolor{gray}{$acebd\,(R^{*})$} \\
$R^{7}$ & $abcde$ & $aecbd$ & $abcde$ & \textcolor{gray}{$abced\,(R^{*})$} & $abecd$ & \textcolor{gray}{$acbde\,(R^{*})$} & \textcolor{gray}{$acbed\,(R^{*})$} & \textcolor{gray}{$acebd\,(R^{*})$} & \textcolor{gray}{$aebcd\,(R^{1})$} & \textcolor{gray}{$aecbd\,(R^{6})$} \\
$R^{8}$ & $abdec$ & $aecbd$ & $abdec$ & $abecd$ & \textcolor{gray}{$abedc\,(R^{1})$} & \textcolor{gray}{$aebcd\,(R^{1})$} & \textcolor{gray}{$aebdc\,(R^{1})$} & \textcolor{gray}{$aecbd\,(R^{7})$} \\
$R^{9}$ & $abedc$ & $aecbd$ & $abecd$ & \textcolor{gray}{$abedc\,(R^{1})$} & \textcolor{gray}{$aebcd\,(R^{1})$} & \textcolor{gray}{$aebdc\,(R^{1})$} & \textcolor{gray}{$aecbd\,(R^{8})$} \\
$R^{10}$ & $abecd$ & $aecbd$ & $abecd$ & \textcolor{gray}{$aebcd\,(R^{1})$} & \textcolor{gray}{$aecbd\,(R^{9})$} \\
$R^{11}$ & $abecd$ & $aecdb$ & $abecd$ & \textcolor{gray}{$aebcd\,(R^{4})$} & \textcolor{gray}{$aecbd\,(R^{10})$} & \textcolor{gray}{$aecdb\,(R^{10})$} \\
$R^{12}$ & $abecd$ & $caebd$ & \textcolor{gray}{$abced\,(R^{3})$} & \textcolor{gray}{$abecd\,(R^{3})$} & $acbed$ & \textcolor{gray}{$acebd\,(R^{10})$} & \textcolor{gray}{$aebcd\,(R^{3})$} & \textcolor{gray}{$aecbd\,(R^{10})$} & $cabed$ & $caebd$ \\
$R^{13}$ & $abecd$ & $acedb$ & $abced$ & \textcolor{gray}{$abecd\,(R^{12})$} & $acbed$ & \textcolor{gray}{$acebd\,(R^{10})$} & \textcolor{gray}{$acedb\,(R^{11})$} & \textcolor{gray}{$aebcd\,(R^{4})$} & \textcolor{gray}{$aecbd\,(R^{10})$} & \textcolor{gray}{$aecdb\,(R^{10})$} \\
$R^{14}$ & $acbed$ & $acedb$ & $acbed$ & $acebd$ & \textcolor{gray}{$acedb\,(R^{13})$} \\
$R^{15}$ & $abcde$ & $abecd$ & $abcde$ & \textcolor{gray}{$abced\,(R^{*})$} & $abecd$ \\
$R^{16}$ & $abecd$ & $aedcb$ & $abecd$ & $abedc$ & \textcolor{gray}{$aebcd\,(R^{2})$} & $aebdc$ & \textcolor{gray}{$aecbd\,(R^{2})$} & \textcolor{gray}{$aecdb\,(R^{2})$} & \textcolor{gray}{$aedbc\,(R^{11})$} & \textcolor{gray}{$aedcb\,(R^{11})$} \\
$R^{17}$ & $abedc$ & $aedcb$ & $abedc$ & $aebdc$ & \textcolor{gray}{$aedbc\,(R^{16})$} & \textcolor{gray}{$aedcb\,(R^{16})$} \\
$R^{18}$ & $aebdc$ & $aedcb$ & $aebdc$ & $aedbc$ & \textcolor{gray}{$aedcb\,(R^{17})$} \\
$R^{19}$ & $acedb$ & $aedcb$ & $acedb$ & \textcolor{gray}{$aecdb\,(R^{2})$} & $aedcb$ \\
$R^{20}$ & $aebcd$ & $aecdb$ & \textcolor{gray}{$aebcd\,(R^{5})$} & $aecbd$ & \textcolor{gray}{$aecdb\,(R^{11})$} \\
$R^{21}$ & $aecbd$ & $aecdb$ & $aecbd$ & \textcolor{gray}{$aecdb\,(R^{20})$} \\
$R^{22}$ & $abecd$ & $eacbd$ & \textcolor{gray}{$abecd\,(R^{3})$} & \textcolor{gray}{$aebcd\,(R^{3})$} & \textcolor{gray}{$aecbd\,(R^{10})$} & $eabcd$ & \textcolor{gray}{$eacbd\,(R^{10})$} \\
$R^{23}$ & $abecd$ & $eabcd$ & \textcolor{gray}{$abecd\,(R^{22})$} & \textcolor{gray}{$aebcd\,(R^{10})$} & $eabcd$ \\
$R^{24}$ & $abecd$ & $eacdb$ & \textcolor{gray}{$abecd\,(R^{3})$} & \textcolor{gray}{$aebcd\,(R^{3})$} & \textcolor{gray}{$aecbd\,(R^{10})$} & \textcolor{gray}{$aecdb\,(R^{10})$} & $eabcd$ & \textcolor{gray}{$eacbd\,(R^{10})$} & \textcolor{gray}{$eacdb\,(R^{11})$} \\
$R^{25}$ & $eabcd$ & $eacdb$ & $eabcd$ & \textcolor{gray}{$eacbd\,(R^{24})$} & \textcolor{gray}{$eacdb\,(R^{24})$} \\
$R^{26}$ & $acebd$ & $aebdc$ & $acebd$ & \textcolor{gray}{$aebcd\,(R^{1})$} & \textcolor{gray}{$aebdc\,(R^{1})$} & $aecbd$ \\
$R^{27}$ & $abced$ & $eabcd$ & \textcolor{gray}{$abced\,(R^{23})$} & \textcolor{gray}{$abecd\,(R^{23})$} & \textcolor{gray}{$aebcd\,(R^{23})$} & $eabcd$ \\
$R^{28}$ & $acbed$ & $eabcd$ & \textcolor{gray}{$abced\,(R^{23})$} & \textcolor{gray}{$abecd\,(R^{23})$} & \textcolor{gray}{$acbed\,(R^{27})$} & \textcolor{gray}{$acebd\,(R^{27})$} & \textcolor{gray}{$aebcd\,(R^{23})$} & $aecbd$ & $eabcd$ & \textcolor{gray}{$eacbd\,(R^{25})$} \\
$R^{29}$ & $acbed$ & $eacdb$ & \textcolor{gray}{$acbed\,(R^{28})$} & \textcolor{gray}{$acebd\,(R^{28})$} & \textcolor{gray}{$acedb\,(R^{14})$} & $aecbd$ & $aecdb$ & $eacbd$ & $eacdb$ \\
$R^{30}$ & $acbed$ & $aedcb$ & \textcolor{gray}{$acbed\,(R^{29})$} & \textcolor{gray}{$acebd\,(R^{2})$} & \textcolor{gray}{$acedb\,(R^{14})$} & \textcolor{gray}{$aecbd\,(R^{2})$} & \textcolor{gray}{$aecdb\,(R^{2})$} & $aedcb$ \\
$R^{31}$ & $acedb$ & $aedcb$ & \textcolor{gray}{$acedb\,(R^{30})$} & \textcolor{gray}{$aecdb\,(R^{2})$} & $aedcb$ \\
$R^{32}$ & $acebd$ & $eabcd$ & \textcolor{gray}{$acebd\,(R^{27})$} & \textcolor{gray}{$aebcd\,(R^{26})$} & $aecbd$ & $eabcd$ & \textcolor{gray}{$eacbd\,(R^{25})$} \\
$R^{33}$ & $acebd$ & $aebdc$ & \textcolor{gray}{$acebd\,(R^{32})$} & \textcolor{gray}{$aebcd\,(R^{1})$} & \textcolor{gray}{$aebdc\,(R^{1})$} & $aecbd$ \\
$R^{34}$ & $acedb$ & $aebdc$ & \textcolor{gray}{$acebd\,(R^{33})$} & $acedb$ & \textcolor{gray}{$aebcd\,(R^{1})$} & \textcolor{gray}{$aebdc\,(R^{1})$} & $aecbd$ & \textcolor{gray}{$aecdb\,(R^{19})$} & \textcolor{gray}{$aedbc\,(R^{1})$} & \textcolor{gray}{$aedcb\,(R^{18})$} \\
$R^{35}$ & $acedb$ & $aecbd$ & \textcolor{gray}{$acebd\,(R^{34})$} & $acedb$ & $aecbd$ & \textcolor{gray}{$aecdb\,(R^{21})$} \\
$R^{36}$ & $acedb$ & $aebdc$ & \textcolor{gray}{$acebd\,(R^{33})$} & \textcolor{gray}{$acedb\,(R^{31})$} & \textcolor{gray}{$aebcd\,(R^{1})$} & \textcolor{gray}{$aebdc\,(R^{1})$} & $aecbd$ & \textcolor{gray}{$aecdb\,(R^{19})$} & \textcolor{gray}{$aedbc\,(R^{1})$} & \textcolor{gray}{$aedcb\,(R^{18})$} \\
$R^{37}$ & $acedb$ & $aecbd$ & \textcolor{gray}{$acebd\,(R^{34})$} & \textcolor{gray}{$acedb\,(R^{36})$} & $aecbd$ & \textcolor{gray}{$aecdb\,(R^{21})$} \\
$R^{38}$ & $acbed$ & $aedcb$ & $acbed$ & \textcolor{gray}{$acebd\,(R^{2})$} & $acedb$ & \textcolor{gray}{$aecbd\,(R^{2})$} & \textcolor{gray}{$aecdb\,(R^{2})$} & $aedcb$ \\
$R^{39}$ & $acdeb$ & $aecbd$ & \textcolor{gray}{$acdeb\,(R^{37})$} & \textcolor{gray}{$acebd\,(R^{35})$} & \textcolor{gray}{$acedb\,(R^{37})$} & $aecbd$ & \textcolor{gray}{$aecdb\,(R^{21})$} \\
$R^{40}$ & $acbed$ & $adceb$ & \textcolor{gray}{$acbde\,(R^{30})$} & \textcolor{gray}{$acbed\,(R^{30})$} & $acdbe$ & $acdeb$ & \textcolor{gray}{$acebd\,(R^{38})$} & \textcolor{gray}{$acedb\,(R^{14})$} & \textcolor{gray}{$adcbe\,(A)$} & $adceb$ \\
$R^{41}$ & $acbed$ & $acdeb$ & \textcolor{gray}{$acbde\,(R^{40})$} & \textcolor{gray}{$acbed\,(R^{40})$} & $acdbe$ & \textcolor{gray}{$acdeb\,(R^{39})$} & \textcolor{gray}{$acebd\,(R^{40})$} & \textcolor{gray}{$acedb\,(R^{14})$} \\
$R^{42}$ & $acbed$ & $acdbe$ & \textcolor{gray}{$acbde\,(R^{41})$} & \textcolor{gray}{$acbed\,(R^{41})$} & $acdbe$ \\
$R^{43}$ & $acdbe$ & $aecbd$ & \textcolor{gray}{$acbde\,(R^{39})$} & $acbed$ & \textcolor{gray}{$acdbe\,(R^{39})$} & \textcolor{gray}{$acdeb\,(R^{37})$} & \textcolor{gray}{$acebd\,(R^{35})$} & \textcolor{gray}{$acedb\,(R^{37})$} & $aecbd$ & \textcolor{gray}{$aecdb\,(R^{21})$} \\
$R^{44}$ & $abecd$ & $acdbe$ & $abcde$ & \textcolor{gray}{$abced\,(R^{15})$} & \textcolor{gray}{$abecd\,(R^{13})$} & \textcolor{gray}{$acbde\,(R^{43})$} & \textcolor{gray}{$acbed\,(R^{42})$} & \textcolor{gray}{$acdbe\,(R^{43})$} \\
$R^{45}$ & $abcde$ & $abecd$ & $abcde$ & \textcolor{gray}{$abced\,(R^{*})$} & \textcolor{gray}{$abecd\,(R^{44})$} \\
$R^{46}$ & $acdbe$ & $aecbd$ & \textcolor{gray}{$acbde\,(R^{39})$} & \textcolor{gray}{$acbed\,(R^{42})$} & \textcolor{gray}{$acdbe\,(R^{39})$} & \textcolor{gray}{$acdeb\,(R^{37})$} & \textcolor{gray}{$acebd\,(R^{35})$} & \textcolor{gray}{$acedb\,(R^{37})$} & $aecbd$ & \textcolor{gray}{$aecdb\,(R^{21})$} \\
$R^{47}$ & $abcde$ & $aecbd$ & \textcolor{gray}{$abcde\,(R^{46})$} & \textcolor{gray}{$abced\,(R^{*})$} & \textcolor{gray}{$abecd\,(R^{45})$} & \textcolor{gray}{$acbde\,(R^{*})$} & \textcolor{gray}{$acbed\,(R^{*})$} & \textcolor{gray}{$acebd\,(R^{*})$} & \textcolor{gray}{$aebcd\,(R^{1})$} & \textcolor{gray}{$aecbd\,(R^{6})$} \\
\end{longtable}
}

\textbf{Step 5:} Finally, we will show that the information we have inferred so far is contradictory. 
{\setlength{\tabcolsep}{3pt}
\noindent\begin{longtable}{c|cc|llllllll}
$R^{*}$ & $abcde$ & $acbed$ & $abcde$ (A)\\
$R^{1}$ & $aebdc$ & $aecbd$ & \multicolumn{2}{l}{$aecbd$ (A, Step 1)}\\
$R^{2}$ & $aecbd$ & $aedcb$ & \multicolumn{2}{l}{$aedcb$ (A, Step 2)}\\
$R^{3}$ & $abecd$ & $ecabd$ & \multicolumn{2}{l}{$ecabd$ (A, Step 3)}\\
$R^{4}$ & $acbed$ & $adceb$ & \multicolumn{2}{l}{$adcbe$ (A, Step 4)}\\
$R^{5}$ & $abecd$ & $aecbd$ & $abecd$ & \textcolor{gray}{$aebcd\,(R^{1})$} & $aecbd$ \\
$R^{6}$ & $aebcd$ & $aecbd$ & \textcolor{gray}{$aebcd\,(R^{1})$} & $aecbd$ \\
$R^{7}$ & $abcde$ & $acebd$ & $abcde$ & \textcolor{gray}{$abced\,(R^{*})$} & \textcolor{gray}{$acbde\,(R^{*})$} & \textcolor{gray}{$acbed\,(R^{*})$} & \textcolor{gray}{$acebd\,(R^{*})$} \\
$R^{8}$ & $abcde$ & $aecbd$ & $abcde$ & \textcolor{gray}{$abced\,(R^{*})$} & $abecd$ & \textcolor{gray}{$acbde\,(R^{*})$} & \textcolor{gray}{$acbed\,(R^{*})$} & \textcolor{gray}{$acebd\,(R^{*})$} & \textcolor{gray}{$aebcd\,(R^{1})$} & \textcolor{gray}{$aecbd\,(R^{7})$} \\
$R^{9}$ & $abdec$ & $aecbd$ & $abdec$ & $abecd$ & \textcolor{gray}{$abedc\,(R^{1})$} & \textcolor{gray}{$aebcd\,(R^{1})$} & \textcolor{gray}{$aebdc\,(R^{1})$} & \textcolor{gray}{$aecbd\,(R^{8})$} \\
$R^{10}$ & $abedc$ & $aecbd$ & $abecd$ & \textcolor{gray}{$abedc\,(R^{1})$} & \textcolor{gray}{$aebcd\,(R^{1})$} & \textcolor{gray}{$aebdc\,(R^{1})$} & \textcolor{gray}{$aecbd\,(R^{9})$} \\
$R^{11}$ & $abecd$ & $aecbd$ & $abecd$ & \textcolor{gray}{$aebcd\,(R^{1})$} & \textcolor{gray}{$aecbd\,(R^{10})$} \\
$R^{12}$ & $abecd$ & $aecdb$ & $abecd$ & \textcolor{gray}{$aebcd\,(R^{5})$} & \textcolor{gray}{$aecbd\,(R^{11})$} & \textcolor{gray}{$aecdb\,(R^{11})$} \\
$R^{13}$ & $abecd$ & $caebd$ & \textcolor{gray}{$abced\,(R^{3})$} & \textcolor{gray}{$abecd\,(R^{3})$} & $acbed$ & \textcolor{gray}{$acebd\,(R^{11})$} & \textcolor{gray}{$aebcd\,(R^{3})$} & \textcolor{gray}{$aecbd\,(R^{11})$} & $cabed$ & $caebd$ \\
$R^{14}$ & $abecd$ & $acedb$ & $abced$ & \textcolor{gray}{$abecd\,(R^{13})$} & $acbed$ & \textcolor{gray}{$acebd\,(R^{11})$} & \textcolor{gray}{$acedb\,(R^{12})$} & \textcolor{gray}{$aebcd\,(R^{5})$} & \textcolor{gray}{$aecbd\,(R^{11})$} & \textcolor{gray}{$aecdb\,(R^{11})$} \\
$R^{15}$ & $acbed$ & $acedb$ & $acbed$ & $acebd$ & \textcolor{gray}{$acedb\,(R^{14})$} \\
$R^{16}$ & $acedb$ & $aedcb$ & $acedb$ & \textcolor{gray}{$aecdb\,(R^{2})$} & $aedcb$ \\
$R^{17}$ & $abcde$ & $abecd$ & $abcde$ & \textcolor{gray}{$abced\,(R^{*})$} & $abecd$ \\
$R^{18}$ & $aebcd$ & $aecdb$ & \textcolor{gray}{$aebcd\,(R^{6})$} & $aecbd$ & \textcolor{gray}{$aecdb\,(R^{12})$} \\
$R^{19}$ & $aecbd$ & $aecdb$ & $aecbd$ & \textcolor{gray}{$aecdb\,(R^{18})$} \\
$R^{20}$ & $acbed$ & $acdbe$ & $acbde$ & \textcolor{gray}{$acbed\,(R^{4})$} & $acdbe$ \\
$R^{21}$ & $abecd$ & $eacbd$ & \textcolor{gray}{$abecd\,(R^{3})$} & \textcolor{gray}{$aebcd\,(R^{3})$} & \textcolor{gray}{$aecbd\,(R^{11})$} & $eabcd$ & \textcolor{gray}{$eacbd\,(R^{11})$} \\
$R^{22}$ & $abecd$ & $eabcd$ & \textcolor{gray}{$abecd\,(R^{21})$} & \textcolor{gray}{$aebcd\,(R^{11})$} & $eabcd$ \\
$R^{23}$ & $abecd$ & $eacdb$ & \textcolor{gray}{$abecd\,(R^{3})$} & \textcolor{gray}{$aebcd\,(R^{3})$} & \textcolor{gray}{$aecbd\,(R^{11})$} & \textcolor{gray}{$aecdb\,(R^{11})$} & $eabcd$ & \textcolor{gray}{$eacbd\,(R^{11})$} & \textcolor{gray}{$eacdb\,(R^{12})$} \\
$R^{24}$ & $eabcd$ & $eacdb$ & $eabcd$ & \textcolor{gray}{$eacbd\,(R^{23})$} & \textcolor{gray}{$eacdb\,(R^{23})$} \\
$R^{25}$ & $acebd$ & $aebdc$ & $acebd$ & \textcolor{gray}{$aebcd\,(R^{1})$} & \textcolor{gray}{$aebdc\,(R^{1})$} & $aecbd$ \\
$R^{26}$ & $acbed$ & $aedcb$ & \textcolor{gray}{$acbed\,(R^{4})$} & \textcolor{gray}{$acebd\,(R^{2})$} & \textcolor{gray}{$acedb\,(R^{15})$} & \textcolor{gray}{$aecbd\,(R^{2})$} & \textcolor{gray}{$aecdb\,(R^{2})$} & $aedcb$ \\
$R^{27}$ & $abecd$ & $aedcb$ & \textcolor{gray}{$abecd\,(R^{26})$} & \textcolor{gray}{$abedc\,(R^{26})$} & \textcolor{gray}{$aebcd\,(R^{2})$} & $aebdc$ & \textcolor{gray}{$aecbd\,(R^{2})$} & \textcolor{gray}{$aecdb\,(R^{2})$} & \textcolor{gray}{$aedbc\,(R^{12})$} & \textcolor{gray}{$aedcb\,(R^{12})$} \\
$R^{28}$ & $aebdc$ & $aedcb$ & $aebdc$ & \textcolor{gray}{$aedbc\,(R^{27})$} & \textcolor{gray}{$aedcb\,(R^{27})$} \\
$R^{29}$ & $acedb$ & $aedcb$ & \textcolor{gray}{$acedb\,(R^{26})$} & \textcolor{gray}{$aecdb\,(R^{2})$} & $aedcb$ \\
$R^{30}$ & $abced$ & $eabcd$ & \textcolor{gray}{$abced\,(R^{22})$} & \textcolor{gray}{$abecd\,(R^{22})$} & \textcolor{gray}{$aebcd\,(R^{22})$} & $eabcd$ \\
$R^{31}$ & $acebd$ & $eabcd$ & \textcolor{gray}{$acebd\,(R^{30})$} & \textcolor{gray}{$aebcd\,(R^{25})$} & $aecbd$ & $eabcd$ & \textcolor{gray}{$eacbd\,(R^{24})$} \\
$R^{32}$ & $acebd$ & $aebdc$ & \textcolor{gray}{$acebd\,(R^{31})$} & \textcolor{gray}{$aebcd\,(R^{1})$} & \textcolor{gray}{$aebdc\,(R^{1})$} & $aecbd$ \\
$R^{33}$ & $acedb$ & $aebdc$ & \textcolor{gray}{$acebd\,(R^{32})$} & \textcolor{gray}{$acedb\,(R^{29})$} & \textcolor{gray}{$aebcd\,(R^{1})$} & \textcolor{gray}{$aebdc\,(R^{1})$} & $aecbd$ & \textcolor{gray}{$aecdb\,(R^{16})$} & \textcolor{gray}{$aedbc\,(R^{1})$} & \textcolor{gray}{$aedcb\,(R^{28})$} \\
$R^{34}$ & $acedb$ & $aecbd$ & \textcolor{gray}{$acebd\,(R^{33})$} & \textcolor{gray}{$acedb\,(R^{33})$} & $aecbd$ & \textcolor{gray}{$aecdb\,(R^{19})$} \\
$R^{35}$ & $acdeb$ & $aecbd$ & \textcolor{gray}{$acdeb\,(R^{34})$} & \textcolor{gray}{$acebd\,(R^{34})$} & \textcolor{gray}{$acedb\,(R^{34})$} & $aecbd$ & \textcolor{gray}{$aecdb\,(R^{19})$} \\
$R^{36}$ & $acdbe$ & $aecbd$ & \textcolor{gray}{$acbde\,(R^{35})$} & \textcolor{gray}{$acbed\,(R^{20})$} & \textcolor{gray}{$acdbe\,(R^{35})$} & \textcolor{gray}{$acdeb\,(R^{34})$} & \textcolor{gray}{$acebd\,(R^{34})$} & \textcolor{gray}{$acedb\,(R^{34})$} & $aecbd$ & \textcolor{gray}{$aecdb\,(R^{19})$} \\
$R^{37}$ & $abecd$ & $acdbe$ & $abcde$ & \textcolor{gray}{$abced\,(R^{17})$} & \textcolor{gray}{$abecd\,(R^{14})$} & \textcolor{gray}{$acbde\,(R^{36})$} & \textcolor{gray}{$acbed\,(R^{20})$} & \textcolor{gray}{$acdbe\,(R^{36})$} \\
$R^{38}$ & $abcde$ & $aecbd$ & \textcolor{gray}{$abcde\,(R^{36})$} & \textcolor{gray}{$abced\,(R^{*})$} & $abecd$ & \textcolor{gray}{$acbde\,(R^{*})$} & \textcolor{gray}{$acbed\,(R^{*})$} & \textcolor{gray}{$acebd\,(R^{*})$} & \textcolor{gray}{$aebcd\,(R^{1})$} & \textcolor{gray}{$aecbd\,(R^{7})$} \\
$R^{39}$ & $abcde$ & $abecd$ & \textcolor{gray}{$abcde\,(R^{38})$} & \textcolor{gray}{$abced\,(R^{*})$} & \textcolor{gray}{$abecd\,(R^{37})$} \\
\end{longtable}
}

Since Step 5 only uses the assumption on $R^*$ and the insights proven in the previous steps, this proves our lemma.
\end{proof}

\begin{lemma}
    $f(R^+)\neq eacbd$.
\end{lemma}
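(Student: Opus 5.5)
We argue by contradiction in the same style as the previous lemma: assume $f(R^+)=eacbd$, together with the standing normalisation $f(R^*)=abcde$, and derive a profile for which every unanimity-compatible ranking is excluded by strategyproofness. The outcome $eacbd$ is the ``compromise'' ranking at $R^+$: it agrees with voter~1 ($eabcd$) except on $\{b,c\}$ and with voter~2 ($ecabd$) except on $\{a,c\}$. The first moves are therefore the two one-step deviations of $R^+$ in which one voter adopts $eacbd$. This gives $f(eabcd,eacbd)=eacbd$ and $f(eacbd,ecabd)=eacbd$: in each case the only alternative is the deviator's original ranking, and choosing it would let the other voter manipulate back to $R^+$.

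From these anchors we propagate, as in Steps~1--4 of the previous lemma, along chains of two-voter profiles in which one voter performs adjacent swaps. At each profile we list the unanimity-compatible rankings and discard those that give a manipulation relative to a profile whose outcome is already pinned down.

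The plan is to establish a few intermediate facts, each proved by its own contradiction table. These concern profiles in which $e$ is top-ranked by both voters, such as $(eabcd,eacdb)$ and $(eabdc,eacbd)$, and then profiles obtained by moving $e$ down towards the block $a\,b\,c$. The aim is to reach profiles of the form $(abecd,\cdot)$ and $(aebcd,aecbd)$, where the information can collide with $f(R^*)=abcde$. The derivations of the previous lemma show that this region is forced quite rigidly: the auxiliary profiles $(abcde,acebd)$ and $(abcde,aecbd)$ inherit constraints from $R^*$ directly. I expect the contradiction to arise when the compromise behaviour forced by $eacbd$ (choosing $c$ ahead of $b$ against voter~1) is transported down to profiles containing $abcde$. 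There, $R^*$ together with its neighbours forces the outcome to keep $b$ ahead of $c$.

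The main obstacle is combinatorial rather than conceptual. We must find which intermediate profiles carry the argument, that is, extract a minimal unsatisfiable core in the spirit of the MUS-based derivations above. I would first look for the analogues of Steps~1 and~2 of the previous lemma, namely fixing $f$ on $(aebdc,aecbd)$ and $(aecbd,aedcb)$. Those facts depended on $R^+$ only through exclusions of the form ``$ecabd$ vs.\ $eacbd$'', so they may go through with the roles of the rankings adjusted. If they do, a final step reusing the Step~5 chain (profiles $R^5$--$R^{39}$ there) should close the contradiction. The honest expectation is that the verification is a long table of several dozen strategyproofness applications, which is ultimately certified by the SAT solver and the Isabelle formalisation rather than produced by hand insight.
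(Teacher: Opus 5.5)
Your proposal has a genuine gap: it contains no derivation. The lemma's whole content is an explicit finite chain of strategyproofness applications. That chain starts from $f(R^*)=abcde$ and $f(R^+)=eacbd$ and ends at a two-voter profile where every unanimity-compatible ranking is excluded. You describe how one might search for such a chain and then defer to the SAT solver and Isabelle. That certifies Proposition~1 wholesale, which makes this lemma vacuously true, but it defeats the lemma's purpose as a step of the human-readable proof and gives no argument for it.

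The two anchors you do derive, $f(eabcd,eacbd)=f(eacbd,ecabd)=eacbd$, are correct, but your stated reason is off. The only other admissible outcome is the \emph{non-deviating} voter's ranking; the deviator's original ranking violates unanimity. The manipulator is the deviator himself, who would return to $R^+$ and obtain $eacbd$ at distance $0$ instead of $1$.

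The one concrete route you suggest does not work. You propose to re-establish the facts of Steps~1 and~2 of the previous lemma and then reuse its Step~5 chain. But those facts depend on $f(R^+)=ecabd$ essentially, not merely through an ``$ecabd$ vs.\ $eacbd$'' exclusion.
\begin{itemize}
\item Step~1 there forces $f(eabdc,ecabd)=ecabd$. Under your hypothesis, voter~1 of $(eabdc,ecabd)$ could report $eabcd$, reach $R^+$, and get $eacbd$. Since $\Delta(eabdc,eacbd)=2<3=\Delta(eabdc,ecabd)$, the ranking $ecabd$ is excluded there.
\item Worse, the Step~5 chain consumes the Step~3 fact $f(abecd,ecabd)=ecabd$, at its profiles $R^{13}$, $R^{21}$ and $R^{23}$. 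That fact is outright incompatible with $f(R^+)=eacbd$: voter~1 of $(abecd,ecabd)$ could report $eabcd$ and obtain $eacbd$, and $\Delta(abecd,eacbd)=3<4=\Delta(abecd,ecabd)$.
\end{itemize}
So nothing can be transplanted from the previous lemma.

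The paper instead proves five new facts tailored to this hypothesis, each by its own contradiction table:
\begin{itemize}
\item $f(ceabd,eacbd)=eacbd$,
\item $f(aebcd,ecabd)=aebcd$,
\item $f(acebd,aebcd)=aebcd$,
\item $f(acbde,aebcd)=abecd$,
\item $f(acbed,eacbd)=acbed$.
\end{itemize}
A final table then shows these facts are jointly inconsistent. Your heuristic that placing $c$ ahead of $b$ will clash with $R^*$ is not a substitute for such a chain.
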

\begin{proof}
    We again assume for contradiction that $f(R^+)=eacbd$ and derive a contradiction in multiple steps. \medskip

    \textbf{Step 1:} First, we will show that $f(R)=eacbd$ for the profile $R$ where one voter reports $ceabd$ and the other reports $eacbd$. The following derivation shows that we get an impossibility if $f(R)\neq eacbd$, thus proving our claim. Our assumption taht $f(R)\neq eacbd$ is used at profile $R^{1}$ and $R^{41}$.
    {\setlength{\tabcolsep}{3pt}
\noindent

}

    \textbf{Step 2:} Our next goal is to show that $f(R)=aebcd$ for the profile $R$ where one voter reports $aebcd$ and teh other reports $ecabd$. We hence assume htat $f(R)\neq aebcd$ and derive a contradiction, as shown in the following table. The profile $R$ appears at Steps 2 and 11 (i.e., $R^2$ and $R^{11}$) of our derivation.
    {\setlength{\tabcolsep}{3pt}
\noindent%
}

    \textbf{Step 3:} As the next step, we will show that $f(R)=aebcd$ for the profile $R$ where one voter reports $acebd$ and the other reports $aebcd$. Once again, we assume for contradiction that this is not true and derive a contraidction. The profile $R$ is called $R^3$ in the subsequent derivation. 
    {\setlength{\tabcolsep}{3pt}
\noindent%
}

    \textbf{Step 4:} Fourthly, we will show that $f(R)=abecd$ for the profile $R$ where one voter reports $acbde$ and the other reports $aebcd$. In more detail, the following derivation shows that $f(R)\neq abecd$ results in a contradiction. The profile $R$ is called $R^9$ in the following table. 
    {\setlength{\tabcolsep}{3pt}
\noindent%
}

    \textbf{Step 5:} In our fifth step, we will prove that $f(R)=acbed$ for the profile $R$ where one voter reports $acbed$ and the other reports $eacbd$. To this end, we assume that $f(R)\neq acbed$ and derive a contradiction. Our assumption that $f(R)\neq acbed$ appears at profile $R^{12}$ and $R^{27}$. 
    {\setlength{\tabcolsep}{3pt}
\noindent%
}

    \textbf{Step 6:} Finally, we derive a contradiction by showing that the insights of our previous steps are incompatible with each other. 
    {\setlength{\tabcolsep}{3pt}
\noindent%
}

This concludes the proof of this lemma.
\end{proof}

\begin{lemma}
    $f(R^+)\neq eabcd$.
\end{lemma}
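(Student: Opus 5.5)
We argue by contradiction and assume $f(R^+)=eabcd$, together with the standing assumptions $f(R^*)=abcde$, anonymity, unanimity, and strategyproofness for $m=5$, $n=2$. As in the two preceding lemmas, the plan is a chain of self-contained steps. Each step pins down the outcome at one auxiliary two-voter profile by assuming the opposite and propagating constraints until some profile has no unanimity-compatible outcome left. The final step then combines these pinned outcomes into an outright contradiction.

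\textbf{Propagation rule.} The basic move is the one used throughout this appendix. Given a profile whose outcome is already known, we list the unanimity-feasible rankings at a neighbouring profile (one voter changed) and strike every ranking that would let a voter manipulate in either direction. Concretely, if voter $i$ moves from $R$ to $R'$, then $\Delta(\succ_i,f(R))\le\Delta(\succ_i,f(R'))$ and $\Delta(\succ_i',f(R'))\le\Delta(\succ_i',f(R))$.

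\textbf{First steps.} I start from neighbours of $R^+$ that swap a single adjacent pair in one voter's ranking, such as $eabdc/ecabd$, $eacbd/ecabd$, $eabcd/eacbd$ and $aebcd/ecabd$. The outcome $eabcd$ at $R^+$ favours voter~1, so the intermediate claims will be mirror images of those in the previous lemma. I expect to pin down $f(\cdot)$ at profiles such as $ecabd/ecbad$ (forced to one of $ecabd,ecbad$) and $aebcd/aecbd$, and then, through a chain towards the $a$-top region, at $abecd/aecbd$. In that region $R^*$ is used exactly as in the earlier lemmas: from $f(R^*)=abcde$ one gets $f(abcde/acebd)=abcde$, and comparisons via $abcde/aecbd$ then yield strong restrictions.

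\textbf{Closing the argument.} The final step connects the pinned outcome in the $e$-top region to $R^*$. The last two lemmas ended at $abcde/abecd$ and $abcde/aecbd$, which suggests reusing such a closing chain: $abecd/eabcd \to abced/eabcd \to acebd/eabcd \to acebd/aebdc \to \dots \to abcde/abecd$. Along this path all feasible outcomes should eventually be excluded.

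\textbf{Main obstacle.} The difficulty is not any individual deduction but finding a short unsatisfiable core. By hand one cannot guarantee that the chosen intermediate claims are true consequences of the assumptions. I would therefore obtain the step decomposition from the same MUS extraction used for the other cases, which runs the SAT encoding of Section~\ref{subsec:SATsolving} with the extra unit clauses $x_{R^*,abcde}$ and $x_{R^+,eabcd}$. I would then present each step in the tabular format above, so that the reader only has to check individual strategyproofness applications. The Isabelle development provides the ultimate guarantee.
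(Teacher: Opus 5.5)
\textbf{Assessment: genuine gap.} Your proposal describes how a proof might be \emph{found}, but it does not give one. The whole content of this lemma is a concrete finite derivation, and you supply none of it. No chain of profiles is actually followed to a profile with no admissible outcome. The decisive work is handed to an MUS extraction you have not carried out, and to the Isabelle development, which verifies the very statement at issue and so cannot stand in for its proof.

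The concrete facts you do state are either immediate from unanimity (the outcome at $ecabd/ecbad$ lies in $\{ecabd,ecbad\}$) or opening moves already in the earlier tables ($f(abcde/acebd)=abcde$). Your closing chain $abecd/eabcd\to abced/eabcd\to acebd/eabcd\to acebd/aebdc\to\dots\to abcde/abecd$ is lifted from the final step of the case $f(R^+)=ecabd$. There, its eliminations rest on facts such as $f(abecd/ecabd)=ecabd$ and $f(aebdc/aecbd)=aecbd$, which were derived from $f(R^+)=ecabd$. Under $f(R^+)=eabcd$ those facts are unavailable, and you offer nothing to replace them.

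Your structural expectations are also unfounded. No relabeling of the alternatives preserves $R^+$ while sending $eabcd$ to $ecabd$. The map that does this composes a relabeling with reversal of all rankings, and it moves $R^*$ (to $\{decab,edacb\}$). So the normalisation $f(R^*)=abcde$, on which every derivation relies, does not survive, and there is no reason for the intermediate claims to mirror the case $f(R^+)=ecabd$.

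More importantly, the paper's refutation is not a flat chain of single-outcome refutations. It is a nested case analysis on the outcome at $caebd/cebad$, whose unanimity-feasible outcomes are $caebd$, $ceabd$ and $cebad$:
\begin{itemize}
\item the case $ceabd$ closes directly;
\item the case $caebd$ is split on whether $f(caebd/ecabd)=caebd$;
\item the case $cebad$ is split according to $f(aecbd/cabed)\in\{acbed,acebd,aecbd,cabed,caebd\}$ into four sub-cases, each closed using both case hypotheses simultaneously.
\end{itemize}
Your template---refute one outcome at one profile by propagating from the standing assumptions and earlier facts---gives no indication of how this branching would be avoided. The missing ingredient is precisely this case split on $f(caebd/cebad)$, together with the explicit strategyproofness derivations that close each branch.
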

\begin{proof}
As usual, we will assume for contradiction that $f(R^+)=eabcde$. Howbever, in contrast to the previous two lemmas, we will derive a contradiction by means of a further case distinction. Specifically, we will consider the outcomes for the profile $R^1$ shown below. 
\begin{center}
%
\end{center}

For this profile, only three ranings satisfy unanimity: $f(R^1)=caebd$, $f(R^1)=ceabd$, and $f(R^1)=cebad$. We will show that none of these outcomes is compatible with our other assumptions on $f$.\medskip

\textbf{Case 1}: We first assume that $f(R^1)=ceabd$. In this case, we immediately can derive a contradiction as wittnessed by the following deduction.
{\setlength{\tabcolsep}{3pt}
\noindent%
}

\textbf{Case 2:} As the second case, we assume $f(R^1)=caebd$. In this case, we proceed with a third tier case distinction with respect to the profile $R^2$ where one voter reports $caebd$ and the other reports $ecabd$.\medskip

\textbf{Case 2.1:} First, we assume that $f(R^2)\neq caebd$. Then, our assumption become incompatible as shown by the following derivation.
{\setlength{\tabcolsep}{3pt}
\noindent%
}

\textbf{Case 2.2:} As our second subcase, we suppose that $f(R^2)=caebd$. We again derive a contradiction for this case, as shown in the followign. By combining our two subcases, it follows that $f(R^1)=caebd$ must be wrong. 
{\setlength{\tabcolsep}{3pt}
\noindent%
}

\textbf{Case 3:} Lastly, we suppose that $f(R^1)=cebad$. To derive a contradiction in this case, we will use a further case distinction with respect to the profile $R^2$ where one voter repots $aecbd$ and the other reports $cabed$. It can be checked that there are $5$ possible outcomes for this profile $f(R^2)\in \{acbed, acebd, aecbd, cabed, caebd\}$. We will subsequenlty show that none of these outcoems is feasible by considering four separate cases. 
\medskip

\textbf{Case 3.1:} As the first case, we suppose that $f(R^2)\not\in \{acbed, acebd, aecbd\}$. In this case, the following derivation shows that our assumptions are incompatible. Noet that the profile $R^2$ appears again at $R^{11}$ as we can conclude that $f(R^2)=f(R^{11})=cabed$ at this point. 
{\setlength{\tabcolsep}{3pt}
\noindent\begin{longtable}{c|cc|llllllll}
$R^{*}$ & $abcde$ & $acbed$ & $abcde$ (A)\\
$R^{+}$ & $eabcd$ & $ecabd$ & $eabcd$ (A)\\
$R^{1}$ & $caebd$ & $cebad$ & $cebad$ (A)\\
$R^{2}$ & $aecbd$ & $cabed$ & \textcolor{gray}{$acbed\,(A)$} & \textcolor{gray}{$acebd\,(A)$} & \textcolor{gray}{$aecbd\,(A)$} & $cabed$ & $caebd$ \\
$R^{3}$ & $aecbd$ & $caebd$ & $acebd$ & \textcolor{gray}{$aecbd\,(R^{2})$} & $caebd$ \\
$R^{4}$ & $acebd$ & $cebad$ & \textcolor{gray}{$acebd\,(R^{1})$} & \textcolor{gray}{$caebd\,(R^{1})$} & \textcolor{gray}{$ceabd\,(R^{1})$} & $cebad$ \\
$R^{5}$ & $acebd$ & $ceabd$ & \textcolor{gray}{$acebd\,(R^{4})$} & $caebd$ & $ceabd$ \\
$R^{6}$ & $ceabd$ & $eabcd$ & \textcolor{gray}{$ceabd\,(R^{+})$} & $eabcd$ & \textcolor{gray}{$eacbd\,(R^{+})$} & \textcolor{gray}{$ecabd\,(R^{+})$} \\
$R^{7}$ & $caebd$ & $eabcd$ & \textcolor{gray}{$acebd\,(R^{6})$} & $aebcd$ & \textcolor{gray}{$aecbd\,(R^{3})$} & \textcolor{gray}{$caebd\,(R^{6})$} & \textcolor{gray}{$ceabd\,(R^{+})$} & $eabcd$ & \textcolor{gray}{$eacbd\,(R^{+})$} & \textcolor{gray}{$ecabd\,(R^{+})$} \\
$R^{8}$ & $aebcd$ & $caebd$ & \textcolor{gray}{$acebd\,(R^{7})$} & $aebcd$ & \textcolor{gray}{$aecbd\,(R^{3})$} & \textcolor{gray}{$caebd\,(R^{7})$} \\
$R^{9}$ & $aebcd$ & $aecdb$ & $aebcd$ & \textcolor{gray}{$aecbd\,(R^{8})$} & $aecdb$ \\
$R^{10}$ & $aecbd$ & $caebd$ & $acebd$ & \textcolor{gray}{$aecbd\,(R^{2})$} & \textcolor{gray}{$caebd\,(R^{8})$} \\
$R^{11}$ & $aecbd$ & $cabed$ & \textcolor{gray}{$acbed\,(A)$} & \textcolor{gray}{$acebd\,(A)$} & \textcolor{gray}{$aecbd\,(A)$} & $cabed$ & \textcolor{gray}{$caebd\,(R^{10})$} \\
$R^{12}$ & $aebcd$ & $cabed$ & $abced$ & \textcolor{gray}{$abecd\,(R^{11})$} & \textcolor{gray}{$acbed\,(R^{8})$} & \textcolor{gray}{$acebd\,(R^{2})$} & \textcolor{gray}{$aebcd\,(R^{2})$} & \textcolor{gray}{$aecbd\,(R^{2})$} & \textcolor{gray}{$cabed\,(R^{8})$} & \textcolor{gray}{$caebd\,(R^{8})$} \\
$R^{13}$ & $abced$ & $cabed$ & $abced$ & \textcolor{gray}{$acbed\,(R^{11})$} & \textcolor{gray}{$cabed\,(R^{12})$} \\
$R^{14}$ & $abcde$ & $acebd$ & $abcde$ & \textcolor{gray}{$abced\,(R^{*})$} & \textcolor{gray}{$acbde\,(R^{*})$} & \textcolor{gray}{$acbed\,(R^{*})$} & \textcolor{gray}{$acebd\,(R^{*})$} \\
$R^{15}$ & $abcde$ & $aebcd$ & $abcde$ & \textcolor{gray}{$abced\,(R^{*})$} & \textcolor{gray}{$abecd\,(R^{12})$} & \textcolor{gray}{$aebcd\,(R^{14})$} \\
$R^{16}$ & $acdeb$ & $aebcd$ & $acdeb$ & \textcolor{gray}{$acebd\,(R^{8})$} & \textcolor{gray}{$acedb\,(R^{8})$} & \textcolor{gray}{$aebcd\,(R^{15})$} & \textcolor{gray}{$aecbd\,(R^{8})$} & $aecdb$ \\
$R^{17}$ & $acedb$ & $aebcd$ & \textcolor{gray}{$acebd\,(R^{8})$} & \textcolor{gray}{$acedb\,(R^{8})$} & \textcolor{gray}{$aebcd\,(R^{16})$} & \textcolor{gray}{$aecbd\,(R^{8})$} & $aecdb$ \\
$R^{18}$ & $aebcd$ & $aecdb$ & \textcolor{gray}{$aebcd\,(R^{17})$} & \textcolor{gray}{$aecbd\,(R^{8})$} & $aecdb$ \\
$R^{19}$ & $abced$ & $ceabd$ & $abced$ & \textcolor{gray}{$acbed\,(R^{13})$} & \textcolor{gray}{$acebd\,(R^{5})$} & \textcolor{gray}{$cabed\,(R^{13})$} & \textcolor{gray}{$caebd\,(R^{13})$} & \textcolor{gray}{$ceabd\,(R^{6})$} \\
$R^{20}$ & $abced$ & $acedb$ & $abced$ & \textcolor{gray}{$acbed\,(R^{13})$} & \textcolor{gray}{$acebd\,(R^{19})$} & \textcolor{gray}{$acedb\,(R^{19})$} \\
$R^{21}$ & $abecd$ & $acedb$ & $abced$ & \textcolor{gray}{$abecd\,(R^{17})$} & \textcolor{gray}{$acbed\,(R^{20})$} & \textcolor{gray}{$acebd\,(R^{20})$} & \textcolor{gray}{$acedb\,(R^{20})$} & \textcolor{gray}{$aebcd\,(R^{17})$} & \textcolor{gray}{$aecbd\,(R^{20})$} & \textcolor{gray}{$aecdb\,(R^{20})$} \\
$R^{22}$ & $abecd$ & $aecdb$ & \textcolor{gray}{$abecd\,(R^{18})$} & \textcolor{gray}{$aebcd\,(R^{18})$} & \textcolor{gray}{$aecbd\,(R^{9})$} & \textcolor{gray}{$aecdb\,(R^{21})$} \\
\end{longtable}
}

\textbf{Case 3.2:} Next, we assume that $f(R^2)=acbed$ and derive a contradiction as shown below.
{\setlength{\tabcolsep}{3pt}
\noindent\begin{longtable}{c|cc|llllllll}
$R^{+}$ & $eabcd$ & $ecabd$ & $eabcd$ (A)\\
$R^{1}$ & $caebd$ & $cebad$ & $cebad$ (A)\\
$R^{2}$ & $aecbd$ & $cabed$ & $acbed$ (A)\\
$R^{3}$ & $ceabd$ & $eabcd$ & \textcolor{gray}{$ceabd\,(R^{+})$} & $eabcd$ & \textcolor{gray}{$eacbd\,(R^{+})$} & \textcolor{gray}{$ecabd\,(R^{+})$} \\
$R^{4}$ & $caebd$ & $eabcd$ & \textcolor{gray}{$acebd\,(R^{3})$} & $aebcd$ & $aecbd$ & \textcolor{gray}{$caebd\,(R^{3})$} & \textcolor{gray}{$ceabd\,(R^{+})$} & $eabcd$ & \textcolor{gray}{$eacbd\,(R^{+})$} & \textcolor{gray}{$ecabd\,(R^{+})$} \\
$R^{5}$ & $aebcd$ & $caebd$ & \textcolor{gray}{$acebd\,(R^{4})$} & $aebcd$ & $aecbd$ & \textcolor{gray}{$caebd\,(R^{4})$} \\
$R^{6}$ & $aecbd$ & $caebd$ & $acebd$ & $aecbd$ & \textcolor{gray}{$caebd\,(R^{5})$} \\
$R^{7}$ & $acebd$ & $cebad$ & \textcolor{gray}{$acebd\,(R^{1})$} & \textcolor{gray}{$caebd\,(R^{1})$} & \textcolor{gray}{$ceabd\,(R^{1})$} & $cebad$ \\
$R^{8}$ & $acebd$ & $ceabd$ & \textcolor{gray}{$acebd\,(R^{7})$} & $caebd$ & $ceabd$ \\
$R^{9}$ & $aecbd$ & $cebad$ & \textcolor{gray}{$acebd\,(R^{1})$} & \textcolor{gray}{$aecbd\,(R^{7})$} & \textcolor{gray}{$caebd\,(R^{1})$} & \textcolor{gray}{$ceabd\,(R^{1})$} & $cebad$ & \textcolor{gray}{$eacbd\,(R^{7})$} & $ecabd$ & $ecbad$ \\
$R^{10}$ & $aecbd$ & $ceabd$ & \textcolor{gray}{$acebd\,(R^{8})$} & \textcolor{gray}{$aecbd\,(R^{9})$} & \textcolor{gray}{$caebd\,(R^{6})$} & \textcolor{gray}{$ceabd\,(R^{3})$} & $eacbd$ & $ecabd$ \\
$R^{11}$ & $aecbd$ & $ecabd$ & \textcolor{gray}{$aecbd\,(R^{10})$} & $eacbd$ & $ecabd$ \\
$R^{12}$ & $ceadb$ & $eabcd$ & \textcolor{gray}{$ceabd\,(R^{+})$} & \textcolor{gray}{$ceadb\,(R^{3})$} & $eabcd$ & \textcolor{gray}{$eacbd\,(R^{+})$} & $eacdb$ & \textcolor{gray}{$ecabd\,(R^{+})$} & \textcolor{gray}{$ecadb\,(R^{+})$} \\
$R^{13}$ & $ceadb$ & $eadbc$ & \textcolor{gray}{$ceadb\,(R^{12})$} & $eacdb$ & $eadbc$ & $eadcb$ & \textcolor{gray}{$ecadb\,(R^{12})$} \\
$R^{14}$ & $ceadb$ & $eadcb$ & \textcolor{gray}{$ceadb\,(R^{13})$} & $eacdb$ & $eadcb$ & \textcolor{gray}{$ecadb\,(R^{13})$} \\
$R^{15}$ & $eadcb$ & $ecadb$ & $eacdb$ & $eadcb$ & \textcolor{gray}{$ecadb\,(R^{14})$} \\
$R^{16}$ & $abced$ & $cabed$ & $abced$ & $acbed$ & \textcolor{gray}{$cabed\,(R^{2})$} \\
$R^{17}$ & $acbed$ & $cabed$ & $acbed$ & \textcolor{gray}{$cabed\,(R^{2})$} \\
$R^{18}$ & $acbde$ & $aecbd$ & $acbde$ & $acbed$ & \textcolor{gray}{$acebd\,(R^{2})$} & \textcolor{gray}{$aecbd\,(R^{2})$} \\
$R^{19}$ & $acbde$ & $eacbd$ & $acbde$ & $acbed$ & \textcolor{gray}{$acebd\,(R^{18})$} & \textcolor{gray}{$aecbd\,(R^{18})$} & \textcolor{gray}{$eacbd\,(R^{18})$} \\
$R^{20}$ & $acbed$ & $eacbd$ & $acbed$ & $acebd$ & \textcolor{gray}{$aecbd\,(R^{19})$} & \textcolor{gray}{$eacbd\,(R^{19})$} \\
$R^{21}$ & $acbed$ & $ceabd$ & $acbed$ & \textcolor{gray}{$acebd\,(R^{8})$} & \textcolor{gray}{$cabed\,(R^{17})$} & $caebd$ & $ceabd$ \\
$R^{22}$ & $abced$ & $ceabd$ & $abced$ & $acbed$ & \textcolor{gray}{$acebd\,(R^{8})$} & \textcolor{gray}{$cabed\,(R^{16})$} & $caebd$ & \textcolor{gray}{$ceabd\,(R^{3})$} \\
$R^{23}$ & $acbed$ & $ceabd$ & $acbed$ & \textcolor{gray}{$acebd\,(R^{8})$} & \textcolor{gray}{$cabed\,(R^{17})$} & $caebd$ & \textcolor{gray}{$ceabd\,(R^{22})$} \\
$R^{24}$ & $acebd$ & $ceabd$ & \textcolor{gray}{$acebd\,(R^{7})$} & $caebd$ & \textcolor{gray}{$ceabd\,(R^{23})$} \\
$R^{25}$ & $acbed$ & $ecabd$ & $acbed$ & \textcolor{gray}{$acebd\,(R^{21})$} & \textcolor{gray}{$aecbd\,(R^{20})$} & \textcolor{gray}{$cabed\,(R^{17})$} & $caebd$ & \textcolor{gray}{$ceabd\,(R^{23})$} & \textcolor{gray}{$eacbd\,(R^{20})$} & \textcolor{gray}{$ecabd\,(R^{20})$} \\
$R^{26}$ & $acebd$ & $ecabd$ & \textcolor{gray}{$acebd\,(R^{7})$} & \textcolor{gray}{$aecbd\,(R^{11})$} & $caebd$ & \textcolor{gray}{$ceabd\,(R^{24})$} & \textcolor{gray}{$eacbd\,(R^{25})$} & \textcolor{gray}{$ecabd\,(R^{25})$} \\
$R^{27}$ & $acebd$ & $ecbad$ & \textcolor{gray}{$acebd\,(R^{7})$} & \textcolor{gray}{$aecbd\,(R^{7})$} & \textcolor{gray}{$caebd\,(R^{7})$} & \textcolor{gray}{$ceabd\,(R^{7})$} & $cebad$ & \textcolor{gray}{$eacbd\,(R^{7})$} & \textcolor{gray}{$ecabd\,(R^{26})$} & \textcolor{gray}{$ecbad\,(R^{26})$} \\
$R^{28}$ & $cebad$ & $ecbad$ & $cebad$ & \textcolor{gray}{$ecbad\,(R^{27})$} \\
$R^{29}$ & $ceadb$ & $eacdb$ & \textcolor{gray}{$ceadb\,(R^{14})$} & $eacdb$ & $ecadb$ \\
$R^{30}$ & $acbed$ & $cebad$ & \textcolor{gray}{$acbed\,(R^{7})$} & \textcolor{gray}{$acebd\,(R^{1})$} & \textcolor{gray}{$cabed\,(R^{1})$} & \textcolor{gray}{$caebd\,(R^{1})$} & $cbaed$ & $cbead$ & \textcolor{gray}{$ceabd\,(R^{1})$} & $cebad$ \\
$R^{31}$ & $acbed$ & $cbead$ & \textcolor{gray}{$acbed\,(R^{30})$} & \textcolor{gray}{$cabed\,(R^{17})$} & $cbaed$ & $cbead$ \\
$R^{32}$ & $acbed$ & $cbaed$ & \textcolor{gray}{$acbed\,(R^{31})$} & \textcolor{gray}{$cabed\,(R^{17})$} & $cbaed$ \\
$R^{33}$ & $cebad$ & $ecabd$ & \textcolor{gray}{$ceabd\,(R^{1})$} & $cebad$ & $ecabd$ & \textcolor{gray}{$ecbad\,(R^{28})$} \\
$R^{34}$ & $aecbd$ & $cebad$ & \textcolor{gray}{$acebd\,(R^{1})$} & \textcolor{gray}{$aecbd\,(R^{7})$} & \textcolor{gray}{$caebd\,(R^{1})$} & \textcolor{gray}{$ceabd\,(R^{1})$} & $cebad$ & \textcolor{gray}{$eacbd\,(R^{7})$} & $ecabd$ & \textcolor{gray}{$ecbad\,(R^{28})$} \\
$R^{35}$ & $aecbd$ & $ceabd$ & \textcolor{gray}{$acebd\,(R^{8})$} & \textcolor{gray}{$aecbd\,(R^{9})$} & \textcolor{gray}{$caebd\,(R^{6})$} & \textcolor{gray}{$ceabd\,(R^{3})$} & \textcolor{gray}{$eacbd\,(R^{34})$} & $ecabd$ \\
$R^{36}$ & $aecbd$ & $ecabd$ & \textcolor{gray}{$aecbd\,(R^{10})$} & \textcolor{gray}{$eacbd\,(R^{35})$} & $ecabd$ \\
$R^{37}$ & $aecbd$ & $cbaed$ & \textcolor{gray}{$acbed\,(R^{32})$} & \textcolor{gray}{$acebd\,(R^{2})$} & \textcolor{gray}{$aecbd\,(R^{2})$} & \textcolor{gray}{$cabed\,(R^{2})$} & \textcolor{gray}{$caebd\,(R^{6})$} & $cbaed$ \\
$R^{38}$ & $cbaed$ & $ecabd$ & \textcolor{gray}{$cabed\,(R^{32})$} & \textcolor{gray}{$caebd\,(R^{37})$} & $cbaed$ & $cbead$ & \textcolor{gray}{$ceabd\,(R^{37})$} & $cebad$ & \textcolor{gray}{$ecabd\,(R^{37})$} & \textcolor{gray}{$ecbad\,(R^{37})$} \\
$R^{39}$ & $cbead$ & $ecabd$ & $cbead$ & \textcolor{gray}{$ceabd\,(R^{33})$} & $cebad$ & \textcolor{gray}{$ecabd\,(R^{38})$} & \textcolor{gray}{$ecbad\,(R^{33})$} \\
$R^{40}$ & $cebad$ & $ecabd$ & \textcolor{gray}{$ceabd\,(R^{1})$} & $cebad$ & \textcolor{gray}{$ecabd\,(R^{39})$} & \textcolor{gray}{$ecbad\,(R^{28})$} \\
$R^{41}$ & $acebd$ & $ecbad$ & \textcolor{gray}{$acebd\,(R^{7})$} & \textcolor{gray}{$aecbd\,(R^{7})$} & \textcolor{gray}{$caebd\,(R^{7})$} & \textcolor{gray}{$ceabd\,(R^{7})$} & $cebad$ & \textcolor{gray}{$eacbd\,(R^{7})$} & $ecabd$ & $ecbad$ \\
$R^{42}$ & $aecbd$ & $cebad$ & \textcolor{gray}{$acebd\,(R^{1})$} & \textcolor{gray}{$aecbd\,(R^{7})$} & \textcolor{gray}{$caebd\,(R^{1})$} & \textcolor{gray}{$ceabd\,(R^{1})$} & $cebad$ & \textcolor{gray}{$eacbd\,(R^{7})$} & \textcolor{gray}{$ecabd\,(R^{40})$} & \textcolor{gray}{$ecbad\,(R^{28})$} \\
$R^{43}$ & $cebad$ & $eacdb$ & \textcolor{gray}{$ceabd\,(R^{1})$} & \textcolor{gray}{$ceadb\,(R^{29})$} & $cebad$ & \textcolor{gray}{$eacbd\,(R^{7})$} & \textcolor{gray}{$eacdb\,(R^{42})$} & \textcolor{gray}{$ecabd\,(R^{40})$} & \textcolor{gray}{$ecadb\,(R^{40})$} & \textcolor{gray}{$ecbad\,(R^{28})$} \\
$R^{44}$ & $eacdb$ & $ecbad$ & \textcolor{gray}{$eacbd\,(R^{41})$} & \textcolor{gray}{$eacdb\,(R^{43})$} & $ecabd$ & \textcolor{gray}{$ecadb\,(R^{43})$} & $ecbad$ \\
$R^{45}$ & $eadcb$ & $ecbad$ & \textcolor{gray}{$eacbd\,(R^{41})$} & \textcolor{gray}{$eacdb\,(R^{44})$} & \textcolor{gray}{$eadcb\,(R^{44})$} & $ecabd$ & \textcolor{gray}{$ecadb\,(R^{14})$} & $ecbad$ \\
$R^{46}$ & $eadcb$ & $ecabd$ & \textcolor{gray}{$eacbd\,(R^{36})$} & \textcolor{gray}{$eacdb\,(R^{45})$} & \textcolor{gray}{$eadcb\,(R^{45})$} & $ecabd$ & \textcolor{gray}{$ecadb\,(R^{14})$} \\
$R^{47}$ & $eadcb$ & $ecadb$ & $eacdb$ & \textcolor{gray}{$eadcb\,(R^{46})$} & \textcolor{gray}{$ecadb\,(R^{14})$} \\
$R^{48}$ & $aedcb$ & $ecabd$ & \textcolor{gray}{$aecbd\,(R^{11})$} & \textcolor{gray}{$aecdb\,(R^{36})$} & \textcolor{gray}{$aedcb\,(R^{46})$} & \textcolor{gray}{$eacbd\,(R^{36})$} & \textcolor{gray}{$eacdb\,(R^{46})$} & \textcolor{gray}{$eadcb\,(R^{46})$} & $ecabd$ & \textcolor{gray}{$ecadb\,(R^{46})$} \\
$R^{49}$ & $aedcb$ & $ecadb$ & \textcolor{gray}{$aecdb\,(R^{48})$} & \textcolor{gray}{$aedcb\,(R^{48})$} & $eacdb$ & \textcolor{gray}{$eadcb\,(R^{47})$} & \textcolor{gray}{$ecadb\,(R^{15})$} \\
$R^{50}$ & $aedcb$ & $eacdb$ & \textcolor{gray}{$aecdb\,(R^{49})$} & \textcolor{gray}{$aedcb\,(R^{49})$} & $eacdb$ & \textcolor{gray}{$eadcb\,(R^{49})$} \\
$R^{51}$ & $acedb$ & $eacdb$ & $acedb$ & \textcolor{gray}{$aecdb\,(R^{50})$} & $eacdb$ \\
$R^{52}$ & $ceadb$ & $eadcb$ & \textcolor{gray}{$ceadb\,(R^{13})$} & $eacdb$ & \textcolor{gray}{$eadcb\,(R^{46})$} & \textcolor{gray}{$ecadb\,(R^{13})$} \\
$R^{53}$ & $ceadb$ & $eacdb$ & \textcolor{gray}{$ceadb\,(R^{14})$} & $eacdb$ & \textcolor{gray}{$ecadb\,(R^{52})$} \\
$R^{54}$ & $cedba$ & $eacdb$ & \textcolor{gray}{$ceadb\,(R^{29})$} & \textcolor{gray}{$cedab\,(R^{53})$} & $cedba$ & \textcolor{gray}{$eacdb\,(R^{43})$} & \textcolor{gray}{$ecadb\,(R^{43})$} & $ecdab$ & $ecdba$ \\
$R^{55}$ & $eacdb$ & $ecdba$ & \textcolor{gray}{$eacdb\,(R^{54})$} & \textcolor{gray}{$ecadb\,(R^{53})$} & $ecdab$ & $ecdba$ \\
$R^{56}$ & $caebd$ & $ecabd$ & $caebd$ & \textcolor{gray}{$ceabd\,(R^{26})$} & \textcolor{gray}{$ecabd\,(R^{26})$} \\
$R^{57}$ & $cebad$ & $eabcd$ & \textcolor{gray}{$ceabd\,(R^{1})$} & \textcolor{gray}{$cebad\,(R^{3})$} & \textcolor{gray}{$eabcd\,(R^{42})$} & \textcolor{gray}{$eacbd\,(R^{+})$} & \textcolor{gray}{$ebacd\,(R^{42})$} & $ebcad$ & \textcolor{gray}{$ecabd\,(R^{+})$} & \textcolor{gray}{$ecbad\,(R^{+})$} \\
$R^{58}$ & $cebad$ & $ebcad$ & \textcolor{gray}{$cebad\,(R^{57})$} & $ebcad$ & \textcolor{gray}{$ecbad\,(R^{28})$} \\
$R^{59}$ & $eacdb$ & $ecdab$ & \textcolor{gray}{$eacdb\,(R^{55})$} & \textcolor{gray}{$ecadb\,(R^{53})$} & $ecdab$ \\
$R^{60}$ & $caebd$ & $ebcad$ & \textcolor{gray}{$caebd\,(R^{1})$} & \textcolor{gray}{$ceabd\,(R^{1})$} & \textcolor{gray}{$cebad\,(R^{58})$} & $ebcad$ & \textcolor{gray}{$ecabd\,(R^{56})$} & \textcolor{gray}{$ecbad\,(R^{56})$} \\
$R^{61}$ & $aecbd$ & $ebcad$ & \textcolor{gray}{$aebcd\,(R^{9})$} & \textcolor{gray}{$aecbd\,(R^{9})$} & $eabcd$ & \textcolor{gray}{$eacbd\,(R^{9})$} & $ebacd$ & $ebcad$ & \textcolor{gray}{$ecabd\,(R^{60})$} & \textcolor{gray}{$ecbad\,(R^{58})$} \\
$R^{62}$ & $aecbd$ & $ebacd$ & \textcolor{gray}{$aebcd\,(R^{61})$} & \textcolor{gray}{$aecbd\,(R^{61})$} & $eabcd$ & \textcolor{gray}{$eacbd\,(R^{61})$} & $ebacd$ \\
$R^{63}$ & $aecbd$ & $eabcd$ & $aebcd$ & \textcolor{gray}{$aecbd\,(R^{62})$} & $eabcd$ & \textcolor{gray}{$eacbd\,(R^{+})$} \\
$R^{64}$ & $acbed$ & $eacdb$ & $acbed$ & $acebd$ & $acedb$ & \textcolor{gray}{$aecbd\,(R^{20})$} & \textcolor{gray}{$aecdb\,(R^{50})$} & \textcolor{gray}{$eacbd\,(R^{20})$} & \textcolor{gray}{$eacdb\,(R^{20})$} \\
$R^{65}$ & $adecb$ & $eacdb$ & \textcolor{gray}{$adecb\,(R^{50})$} & \textcolor{gray}{$aecdb\,(R^{50})$} & \textcolor{gray}{$aedcb\,(R^{50})$} & $eacdb$ & \textcolor{gray}{$eadcb\,(R^{50})$} \\
$R^{66}$ & $acdbe$ & $eacdb$ & $acdbe$ & \textcolor{gray}{$acdeb\,(R^{65})$} & $acedb$ & \textcolor{gray}{$aecdb\,(R^{50})$} & \textcolor{gray}{$eacdb\,(R^{64})$} \\
$R^{67}$ & $adceb$ & $eacdb$ & \textcolor{gray}{$acdeb\,(R^{65})$} & $acedb$ & \textcolor{gray}{$adceb\,(R^{65})$} & \textcolor{gray}{$adecb\,(R^{50})$} & \textcolor{gray}{$aecdb\,(R^{50})$} & \textcolor{gray}{$aedcb\,(R^{50})$} & \textcolor{gray}{$eacdb\,(R^{66})$} & \textcolor{gray}{$eadcb\,(R^{50})$} \\
$R^{68}$ & $acedb$ & $eacdb$ & $acedb$ & \textcolor{gray}{$aecdb\,(R^{50})$} & \textcolor{gray}{$eacdb\,(R^{67})$} \\
$R^{69}$ & $eabcd$ & $ecdab$ & $eabcd$ & \textcolor{gray}{$eacbd\,(R^{+})$} & $eacdb$ & \textcolor{gray}{$ecabd\,(R^{+})$} & \textcolor{gray}{$ecadb\,(R^{+})$} & $ecdab$ \\
$R^{70}$ & $ceadb$ & $eadbc$ & \textcolor{gray}{$ceadb\,(R^{12})$} & $eacdb$ & $eadbc$ & \textcolor{gray}{$eadcb\,(R^{52})$} & \textcolor{gray}{$ecadb\,(R^{12})$} \\
$R^{71}$ & $caebd$ & $eabcd$ & \textcolor{gray}{$acebd\,(R^{3})$} & $aebcd$ & \textcolor{gray}{$aecbd\,(R^{63})$} & \textcolor{gray}{$caebd\,(R^{3})$} & \textcolor{gray}{$ceabd\,(R^{+})$} & $eabcd$ & \textcolor{gray}{$eacbd\,(R^{+})$} & \textcolor{gray}{$ecabd\,(R^{+})$} \\
$R^{72}$ & $acedb$ & $eabcd$ & \textcolor{gray}{$acebd\,(R^{3})$} & \textcolor{gray}{$acedb\,(R^{71})$} & $aebcd$ & \textcolor{gray}{$aecbd\,(R^{63})$} & \textcolor{gray}{$aecdb\,(R^{51})$} & $eabcd$ & \textcolor{gray}{$eacbd\,(R^{+})$} & \textcolor{gray}{$eacdb\,(R^{68})$} \\
$R^{73}$ & $acedb$ & $eadbc$ & \textcolor{gray}{$acedb\,(R^{72})$} & \textcolor{gray}{$aecdb\,(R^{51})$} & $aedbc$ & $aedcb$ & \textcolor{gray}{$eacdb\,(R^{68})$} & $eadbc$ & \textcolor{gray}{$eadcb\,(R^{70})$} \\
$R^{74}$ & $acedb$ & $edacb$ & \textcolor{gray}{$acedb\,(R^{73})$} & \textcolor{gray}{$aecdb\,(R^{51})$} & $aedcb$ & \textcolor{gray}{$eacdb\,(R^{68})$} & \textcolor{gray}{$eadcb\,(R^{68})$} & $edacb$ \\
$R^{75}$ & $ceadb$ & $eabcd$ & \textcolor{gray}{$ceabd\,(R^{+})$} & \textcolor{gray}{$ceadb\,(R^{3})$} & $eabcd$ & \textcolor{gray}{$eacbd\,(R^{+})$} & \textcolor{gray}{$eacdb\,(R^{72})$} & \textcolor{gray}{$ecabd\,(R^{+})$} & \textcolor{gray}{$ecadb\,(R^{+})$} \\
$R^{76}$ & $eabcd$ & $ecdab$ & $eabcd$ & \textcolor{gray}{$eacbd\,(R^{+})$} & \textcolor{gray}{$eacdb\,(R^{59})$} & \textcolor{gray}{$ecabd\,(R^{+})$} & \textcolor{gray}{$ecadb\,(R^{+})$} & \textcolor{gray}{$ecdab\,(R^{75})$} \\
$R^{77}$ & $eadbc$ & $ecdab$ & \textcolor{gray}{$eacdb\,(R^{59})$} & $eadbc$ & \textcolor{gray}{$eadcb\,(R^{70})$} & \textcolor{gray}{$ecadb\,(R^{69})$} & \textcolor{gray}{$ecdab\,(R^{76})$} & $edabc$ & $edacb$ & \textcolor{gray}{$edcab\,(R^{76})$} \\
$R^{78}$ & $aedcb$ & $ecdab$ & \textcolor{gray}{$aecdb\,(R^{48})$} & \textcolor{gray}{$aedcb\,(R^{48})$} & \textcolor{gray}{$eacdb\,(R^{59})$} & \textcolor{gray}{$eadcb\,(R^{48})$} & \textcolor{gray}{$ecadb\,(R^{49})$} & \textcolor{gray}{$ecdab\,(R^{77})$} & $edacb$ & $edcab$ \\
$R^{79}$ & $aedcb$ & $edacb$ & \textcolor{gray}{$aedcb\,(R^{78})$} & \textcolor{gray}{$eadcb\,(R^{74})$} & $edacb$ \\
$R^{80}$ & $acedb$ & $edacb$ & \textcolor{gray}{$acedb\,(R^{73})$} & \textcolor{gray}{$aecdb\,(R^{51})$} & \textcolor{gray}{$aedcb\,(R^{79})$} & \textcolor{gray}{$eacdb\,(R^{68})$} & \textcolor{gray}{$eadcb\,(R^{68})$} & $edacb$ \\
$R^{81}$ & $ecabd$ & $edacb$ & \textcolor{gray}{$eacbd\,(R^{36})$} & \textcolor{gray}{$eacdb\,(R^{46})$} & \textcolor{gray}{$eadcb\,(R^{46})$} & $ecabd$ & \textcolor{gray}{$ecadb\,(R^{46})$} & $ecdab$ & \textcolor{gray}{$edacb\,(R^{46})$} & \textcolor{gray}{$edcab\,(R^{46})$} \\
$R^{82}$ & $ecadb$ & $edacb$ & \textcolor{gray}{$eacdb\,(R^{80})$} & \textcolor{gray}{$eadcb\,(R^{47})$} & \textcolor{gray}{$ecadb\,(R^{15})$} & $ecdab$ & \textcolor{gray}{$edacb\,(R^{81})$} & \textcolor{gray}{$edcab\,(R^{81})$} \\
$R^{83}$ & $ecdab$ & $edacb$ & \textcolor{gray}{$ecdab\,(R^{78})$} & \textcolor{gray}{$edacb\,(R^{82})$} & \textcolor{gray}{$edcab\,(R^{82})$} \\
\end{longtable}
}

\textbf{Case 3.3:} Thridly, we show that $f(R^2)=acebd$ is not possible. As usual, we assume the opposite, i.e., that $f(R^2)=acebd$ and infer a contradiction with the following table. 
{\setlength{\tabcolsep}{3pt}
\noindent\begin{longtable}{c|cc|llllllll}
$R^{*}$ & $abcde$ & $acbed$ & $abcde$ (A)\\
$R^{+}$ & $eabcd$ & $ecabd$ & $eabcd$ (A)\\
$R^{1}$ & $caebd$ & $cebad$ & $cebad$ (A)\\
$R^{2}$ & $aecbd$ & $cabed$ & $acebd$ (A)\\
$R^{3}$ & $aecbd$ & $cbaed$ & \textcolor{gray}{$acbed\,(R^{2})$} & $acebd$ & \textcolor{gray}{$aecbd\,(R^{2})$} & \textcolor{gray}{$cabed\,(R^{2})$} & \textcolor{gray}{$caebd\,(R^{2})$} & \textcolor{gray}{$cbaed\,(R^{2})$} \\
$R^{4}$ & $acbed$ & $cbaed$ & $acbed$ & $cabed$ & \textcolor{gray}{$cbaed\,(R^{3})$} \\
$R^{5}$ & $acebd$ & $cebad$ & \textcolor{gray}{$acebd\,(R^{1})$} & \textcolor{gray}{$caebd\,(R^{1})$} & \textcolor{gray}{$ceabd\,(R^{1})$} & $cebad$ \\
$R^{6}$ & $acbed$ & $cebad$ & \textcolor{gray}{$acbed\,(R^{5})$} & \textcolor{gray}{$acebd\,(R^{1})$} & \textcolor{gray}{$cabed\,(R^{1})$} & \textcolor{gray}{$caebd\,(R^{1})$} & \textcolor{gray}{$cbaed\,(R^{4})$} & $cbead$ & \textcolor{gray}{$ceabd\,(R^{1})$} & $cebad$ \\
$R^{7}$ & $acbed$ & $cbead$ & \textcolor{gray}{$acbed\,(R^{6})$} & \textcolor{gray}{$cabed\,(R^{6})$} & \textcolor{gray}{$cbaed\,(R^{4})$} & $cbead$ \\
$R^{8}$ & $acbed$ & $cbaed$ & \textcolor{gray}{$acbed\,(R^{7})$} & $cabed$ & \textcolor{gray}{$cbaed\,(R^{3})$} \\
$R^{9}$ & $acbed$ & $cabed$ & \textcolor{gray}{$acbed\,(R^{8})$} & $cabed$ \\
$R^{10}$ & $acebd$ & $ceabd$ & \textcolor{gray}{$acebd\,(R^{5})$} & $caebd$ & $ceabd$ \\
$R^{11}$ & $aecbd$ & $caebd$ & $acebd$ & \textcolor{gray}{$aecbd\,(R^{2})$} & \textcolor{gray}{$caebd\,(R^{2})$} \\
$R^{12}$ & $ceabd$ & $eabcd$ & \textcolor{gray}{$ceabd\,(R^{+})$} & $eabcd$ & \textcolor{gray}{$eacbd\,(R^{+})$} & \textcolor{gray}{$ecabd\,(R^{+})$} \\
$R^{13}$ & $caebd$ & $eabcd$ & \textcolor{gray}{$acebd\,(R^{12})$} & $aebcd$ & \textcolor{gray}{$aecbd\,(R^{11})$} & \textcolor{gray}{$caebd\,(R^{11})$} & \textcolor{gray}{$ceabd\,(R^{+})$} & $eabcd$ & \textcolor{gray}{$eacbd\,(R^{+})$} & \textcolor{gray}{$ecabd\,(R^{+})$} \\
$R^{14}$ & $aebcd$ & $caebd$ & \textcolor{gray}{$acebd\,(R^{13})$} & $aebcd$ & \textcolor{gray}{$aecbd\,(R^{11})$} & \textcolor{gray}{$caebd\,(R^{11})$} \\
$R^{15}$ & $aebcd$ & $aecdb$ & $aebcd$ & \textcolor{gray}{$aecbd\,(R^{14})$} & $aecdb$ \\
$R^{16}$ & $acbde$ & $aecbd$ & $acbde$ & \textcolor{gray}{$acbed\,(R^{2})$} & $acebd$ & \textcolor{gray}{$aecbd\,(R^{2})$} \\
$R^{17}$ & $abcde$ & $acebd$ & $abcde$ & \textcolor{gray}{$abced\,(R^{*})$} & \textcolor{gray}{$acbde\,(R^{*})$} & \textcolor{gray}{$acbed\,(R^{*})$} & \textcolor{gray}{$acebd\,(R^{*})$} \\
$R^{18}$ & $acbde$ & $acebd$ & $acbde$ & \textcolor{gray}{$acbed\,(R^{16})$} & \textcolor{gray}{$acebd\,(R^{17})$} \\
$R^{19}$ & $acbde$ & $aecbd$ & $acbde$ & \textcolor{gray}{$acbed\,(R^{2})$} & \textcolor{gray}{$acebd\,(R^{18})$} & \textcolor{gray}{$aecbd\,(R^{2})$} \\
$R^{20}$ & $abcde$ & $aecbd$ & $abcde$ & \textcolor{gray}{$abced\,(R^{*})$} & \textcolor{gray}{$abecd\,(R^{19})$} & \textcolor{gray}{$acbde\,(R^{*})$} & \textcolor{gray}{$acbed\,(R^{2})$} & \textcolor{gray}{$acebd\,(R^{*})$} & \textcolor{gray}{$aebcd\,(R^{17})$} & \textcolor{gray}{$aecbd\,(R^{2})$} \\
$R^{21}$ & $abcde$ & $aebcd$ & $abcde$ & \textcolor{gray}{$abced\,(R^{*})$} & \textcolor{gray}{$abecd\,(R^{20})$} & \textcolor{gray}{$aebcd\,(R^{17})$} \\
$R^{22}$ & $aebcd$ & $cabed$ & $abced$ & $abecd$ & \textcolor{gray}{$acbed\,(R^{2})$} & \textcolor{gray}{$acebd\,(R^{14})$} & \textcolor{gray}{$aebcd\,(R^{21})$} & \textcolor{gray}{$aecbd\,(R^{2})$} & \textcolor{gray}{$cabed\,(R^{2})$} & \textcolor{gray}{$caebd\,(R^{2})$} \\
$R^{23}$ & $abced$ & $cabed$ & $abced$ & \textcolor{gray}{$acbed\,(R^{9})$} & \textcolor{gray}{$cabed\,(R^{22})$} \\
$R^{24}$ & $aebcd$ & $caebd$ & $acebd$ & $aebcd$ & \textcolor{gray}{$aecbd\,(R^{11})$} & \textcolor{gray}{$caebd\,(R^{11})$} \\
$R^{25}$ & $acdeb$ & $aebcd$ & $acdeb$ & \textcolor{gray}{$acebd\,(R^{14})$} & \textcolor{gray}{$acedb\,(R^{14})$} & \textcolor{gray}{$aebcd\,(R^{21})$} & \textcolor{gray}{$aecbd\,(R^{24})$} & $aecdb$ \\
$R^{26}$ & $acedb$ & $aebcd$ & \textcolor{gray}{$acebd\,(R^{14})$} & \textcolor{gray}{$acedb\,(R^{14})$} & \textcolor{gray}{$aebcd\,(R^{25})$} & \textcolor{gray}{$aecbd\,(R^{24})$} & $aecdb$ \\
$R^{27}$ & $aebcd$ & $aecdb$ & \textcolor{gray}{$aebcd\,(R^{26})$} & \textcolor{gray}{$aecbd\,(R^{14})$} & $aecdb$ \\
$R^{28}$ & $abced$ & $ceabd$ & $abced$ & \textcolor{gray}{$acbed\,(R^{23})$} & \textcolor{gray}{$acebd\,(R^{10})$} & \textcolor{gray}{$cabed\,(R^{23})$} & \textcolor{gray}{$caebd\,(R^{23})$} & \textcolor{gray}{$ceabd\,(R^{12})$} \\
$R^{29}$ & $abced$ & $acedb$ & $abced$ & \textcolor{gray}{$acbed\,(R^{23})$} & \textcolor{gray}{$acebd\,(R^{28})$} & \textcolor{gray}{$acedb\,(R^{28})$} \\
$R^{30}$ & $abecd$ & $acedb$ & $abced$ & \textcolor{gray}{$abecd\,(R^{26})$} & \textcolor{gray}{$acbed\,(R^{29})$} & \textcolor{gray}{$acebd\,(R^{29})$} & \textcolor{gray}{$acedb\,(R^{29})$} & \textcolor{gray}{$aebcd\,(R^{26})$} & \textcolor{gray}{$aecbd\,(R^{29})$} & \textcolor{gray}{$aecdb\,(R^{29})$} \\
$R^{31}$ & $abecd$ & $aecdb$ & \textcolor{gray}{$abecd\,(R^{27})$} & \textcolor{gray}{$aebcd\,(R^{27})$} & \textcolor{gray}{$aecbd\,(R^{15})$} & \textcolor{gray}{$aecdb\,(R^{30})$} \\
\end{longtable}
}

\textbf{Case 3.4:} Lastly, we assume that $f(R^2)=aecbd$ and derive again a contradiction, as demonstrated with the following table. 
{\setlength{\tabcolsep}{3pt}
\noindent\begin{longtable}{c|cc|llllllll}
$R^{*}$ & $abcde$ & $acbed$ & $abcde$ (A)\\
$R^{+}$ & $eabcd$ & $ecabd$ & $eabcd$ (A)\\
$R^{1}$ & $caebd$ & $cebad$ & $cebad$ (A)\\
$R^{2}$ & $aecbd$ & $cabed$ & $aecbd$ (A)\\
$R^{3}$ & $acebd$ & $cebad$ & \textcolor{gray}{$acebd\,(R^{1})$} & \textcolor{gray}{$caebd\,(R^{1})$} & \textcolor{gray}{$ceabd\,(R^{1})$} & $cebad$ \\
$R^{4}$ & $aecbd$ & $cbaed$ & \textcolor{gray}{$acbed\,(R^{2})$} & \textcolor{gray}{$acebd\,(R^{2})$} & $aecbd$ & \textcolor{gray}{$cabed\,(R^{2})$} & \textcolor{gray}{$caebd\,(R^{2})$} & \textcolor{gray}{$cbaed\,(R^{2})$} \\
$R^{5}$ & $aecbd$ & $cebad$ & \textcolor{gray}{$acebd\,(R^{2})$} & \textcolor{gray}{$aecbd\,(R^{3})$} & \textcolor{gray}{$caebd\,(R^{2})$} & \textcolor{gray}{$ceabd\,(R^{2})$} & \textcolor{gray}{$cebad\,(R^{4})$} & \textcolor{gray}{$eacbd\,(R^{3})$} & $ecabd$ & \textcolor{gray}{$ecbad\,(R^{4})$} \\
$R^{6}$ & $aecbd$ & $ecabd$ & \textcolor{gray}{$aecbd\,(R^{5})$} & \textcolor{gray}{$eacbd\,(R^{5})$} & $ecabd$ \\
$R^{7}$ & $acbde$ & $aecbd$ & \textcolor{gray}{$acbde\,(R^{2})$} & \textcolor{gray}{$acbed\,(R^{2})$} & \textcolor{gray}{$acebd\,(R^{2})$} & $aecbd$ \\
$R^{8}$ & $acbde$ & $eacbd$ & \textcolor{gray}{$acbde\,(R^{7})$} & \textcolor{gray}{$acbed\,(R^{7})$} & \textcolor{gray}{$acebd\,(R^{7})$} & $aecbd$ & $eacbd$ \\
$R^{9}$ & $acbed$ & $eacbd$ & \textcolor{gray}{$acbed\,(R^{8})$} & \textcolor{gray}{$acebd\,(R^{8})$} & $aecbd$ & $eacbd$ \\
$R^{10}$ & $abcde$ & $acebd$ & $abcde$ & \textcolor{gray}{$abced\,(R^{*})$} & \textcolor{gray}{$acbde\,(R^{*})$} & \textcolor{gray}{$acbed\,(R^{*})$} & \textcolor{gray}{$acebd\,(R^{*})$} \\
$R^{11}$ & $abcde$ & $aecbd$ & \textcolor{gray}{$abcde\,(R^{7})$} & \textcolor{gray}{$abced\,(R^{2})$} & $abecd$ & \textcolor{gray}{$acbde\,(R^{2})$} & \textcolor{gray}{$acbed\,(R^{2})$} & \textcolor{gray}{$acebd\,(R^{2})$} & \textcolor{gray}{$aebcd\,(R^{10})$} & \textcolor{gray}{$aecbd\,(R^{10})$} \\
$R^{12}$ & $abecd$ & $aecbd$ & $abecd$ & \textcolor{gray}{$aebcd\,(R^{11})$} & \textcolor{gray}{$aecbd\,(R^{11})$} \\
$R^{13}$ & $aecbd$ & $caebd$ & \textcolor{gray}{$acebd\,(R^{2})$} & $aecbd$ & \textcolor{gray}{$caebd\,(R^{2})$} \\
$R^{14}$ & $cebad$ & $ecabd$ & \textcolor{gray}{$ceabd\,(R^{1})$} & \textcolor{gray}{$cebad\,(R^{5})$} & $ecabd$ & \textcolor{gray}{$ecbad\,(R^{5})$} \\
$R^{15}$ & $ceabd$ & $eabcd$ & \textcolor{gray}{$ceabd\,(R^{+})$} & $eabcd$ & \textcolor{gray}{$eacbd\,(R^{+})$} & \textcolor{gray}{$ecabd\,(R^{+})$} \\
$R^{16}$ & $cbead$ & $ecabd$ & \textcolor{gray}{$cbead\,(R^{14})$} & \textcolor{gray}{$ceabd\,(R^{14})$} & \textcolor{gray}{$cebad\,(R^{14})$} & $ecabd$ & \textcolor{gray}{$ecbad\,(R^{14})$} \\
$R^{17}$ & $aecbd$ & $ceabd$ & \textcolor{gray}{$acebd\,(R^{2})$} & \textcolor{gray}{$aecbd\,(R^{5})$} & \textcolor{gray}{$caebd\,(R^{2})$} & \textcolor{gray}{$ceabd\,(R^{2})$} & \textcolor{gray}{$eacbd\,(R^{5})$} & $ecabd$ \\
$R^{18}$ & $acbed$ & $ecabd$ & \textcolor{gray}{$acbed\,(R^{9})$} & \textcolor{gray}{$acebd\,(R^{6})$} & \textcolor{gray}{$aecbd\,(R^{6})$} & \textcolor{gray}{$cabed\,(R^{16})$} & $caebd$ & \textcolor{gray}{$ceabd\,(R^{14})$} & \textcolor{gray}{$eacbd\,(R^{6})$} & $ecabd$ \\
$R^{19}$ & $acbed$ & $ceabd$ & \textcolor{gray}{$acbed\,(R^{18})$} & \textcolor{gray}{$acebd\,(R^{17})$} & \textcolor{gray}{$cabed\,(R^{18})$} & $caebd$ & $ceabd$ \\
$R^{20}$ & $abced$ & $ceabd$ & \textcolor{gray}{$abced\,(R^{19})$} & \textcolor{gray}{$acbed\,(R^{19})$} & \textcolor{gray}{$acebd\,(R^{17})$} & \textcolor{gray}{$cabed\,(R^{19})$} & $caebd$ & \textcolor{gray}{$ceabd\,(R^{15})$} \\
$R^{21}$ & $abced$ & $caebd$ & \textcolor{gray}{$abced\,(R^{20})$} & \textcolor{gray}{$acbed\,(R^{20})$} & \textcolor{gray}{$acebd\,(R^{20})$} & \textcolor{gray}{$cabed\,(R^{20})$} & $caebd$ \\
$R^{22}$ & $abecd$ & $caebd$ & \textcolor{gray}{$abced\,(R^{21})$} & \textcolor{gray}{$abecd\,(R^{21})$} & \textcolor{gray}{$acbed\,(R^{21})$} & \textcolor{gray}{$acebd\,(R^{13})$} & \textcolor{gray}{$aebcd\,(R^{12})$} & \textcolor{gray}{$aecbd\,(R^{12})$} & \textcolor{gray}{$cabed\,(R^{13})$} & \textcolor{gray}{$caebd\,(R^{13})$} \\
\end{longtable}
}

Since no valid ranking remains for $R^2$, this means that $f(R^1)=cebad$ is not possible either. Since we exhausted all cases for $f(R^1)$, this proves the lemma.
\end{proof}

\end{document}